\documentclass[11pt]{elsarticle}

\makeatletter
\def\ps@pprintTitle{%
	\let\@oddhead\@empty
	\let\@evenhead\@empty
	\def\@oddfoot{}%
	\let\@evenfoot\@oddfoot}
\makeatother

\usepackage[left=2.25cm,right=2.25cm,top=2cm,bottom=3cm]{geometry}
\usepackage{empheq} 
\usepackage{algorithm}
\usepackage{algorithmic}
\usepackage{subfigure} 
\usepackage{amsthm}

\usepackage{comment} 
\usepackage{amsfonts,amssymb,amsmath,amsthm,bm}
\numberwithin{equation}{section}
\usepackage{enumerate}
\usepackage{multirow}
\usepackage{xcolor}
\usepackage{graphicx,float,epstopdf}
\usepackage{array}
\usepackage{calc} 
\usepackage{booktabs}
\usepackage{hyperref}
\usepackage{subeqnarray}
\usepackage{cases}
\usepackage{mathtools}

\newtheorem{theorem}{Theorem}[section]
\newtheorem{definition}[theorem]{Definition}
\newtheorem{lemma}[theorem]{Lemma}

\newtheorem{remark}[theorem]{Remark}
\theoremstyle{definition}
\newtheorem{example}{Example}[section]

\begin{document}
	\begin{frontmatter}

		\title{{\bfseries Computing excited states of dipolar Bose--Einstein condensates by constrained gentlest ascent dynamics}}
		
		\author{Fu Zheng}\ead{1469190552@qq.com}
		\author{Wei Liu\corref{cor}}\ead{wl@nudt.edu.cn}
		\author{Xu Qian}\ead{qianxu@nudt.edu.cn}
		
		\address{College of Science, National University of Defense Technology, Changsha 410073, China}
		\cortext[cor]{Corresponding author.}

{\small
\begin{abstract}
Dipolar Bose–Einstein condensates (BECs), i.e., BECs with long-range dipole–dipole interactions (DDI), exhibit extraordinarily rich physics and serve as a versatile experimental platform for exploring exotic quantum states of matter. Within the mean-field framework, a dipolar BEC can be described by the Gross-Pitaevskii (GP) model with a DDI term. Its stationary states consist of the ground state, which minimizes the GP energy functional under the normalization constraint, and excited states, which are higher-energy critical points that typically correspond to constrained saddle points. Excited states encode collective excitation modes and play a key role in the dynamical stability and quantum phase transitions of cold-atom systems. However, in contrast to the extensively studied ground state, they remain far less explored numerically, especially in dipolar systems. In this work, we develop a numerical method based on the constrained gentlest ascent dynamics (CGAD) that systematically computes excited states of dipolar BECs by stably targeting constrained saddle points with any prescribed Morse index. Specifically, a semi-implicit sine-pseudospectral discretization with an orthonormalization correction step is adopted, and the nonlocal DDI terms are evaluated efficiently and accurately via the nonuniform fast Fourier transform (NUFFT). The resulting scheme is applied to one-, two-, and three-dimensional dipolar BECs, enabling a detailed numerical investigation of their excited states. Remarkably, our simulations reveal that excited states can persist even in regimes where the ground state of the condensate becomes mean-field unstable. This work provides a computational framework for studying excited states in dipolar quantum gases and offers insights for future theoretical and experimental investigations.
\end{abstract}

		\begin{keyword}
			dipolar Bose--Einstein condensate, 
            excited state,
            constrained saddle point,
            constrained gentlest ascent dynamics,
            nonuniform fast Fourier transform
		\end{keyword}

} 

	\end{frontmatter}

\section{Introduction}\label{sec:introduction}
Since the first realization of atomic Bose--Einstein condensates (BECs) in 1995~\cite{anderson1995observation,bradley1995evidence,davis1995bose}, dipolar condensates have occupied a distinctive position among the ultracold quantum gases realized to date. Their structure is dictated not solely by the external trapping potential but also by the interparticle interactions themselves~\cite{yi2001trapped,lahaye2009physics}. Following the condensation of $^{52}$Cr~\cite{griesmaier2005bose,griesmaier2007generation} and, subsequently, of the strongly magnetic lanthanides $^{164}$Dy~\cite{Lu_2011} and $^{168}$Er~\cite{aikawa2012bose}, the preparation of condensates exhibiting long-range dipole--dipole interactions (DDI) has become feasible~\cite{ferrofluid,koch2008stabilization,experiment}. In a BEC with DDI, the interaction energy is determined not only by the local density but also by the spatial arrangement of the density relative to the polarization axis, with the DDI, the contact interaction, and the external trapping potential collectively governing the shape and stability of the condensate. This interplay has been experimentally elucidated through the dipolar deformation of expanding clouds~\cite{Anisotropicexpansion}, the stabilization of a purely dipolar gas against collapse by tuning the trap geometry~\cite{koch2008stabilization}, and the roton-like excitation spectrum~\cite{roton}.

Among the various investigations of dipolar BECs, a central research theme concerns their stationary states, which embody rich physical information~\cite{koberle2009bifurcations,mottonen2005collapse}. Within the mean-field Gross--Pitaevskii (GP) theory, the stationary states of a dipolar BEC are characterized as critical points of the GP energy functional under the mass normalization constraint~\cite{bao2012gross,bao2013mathematical,koberle2009bifurcations}. These states are classified as ground or excited states in the physics literature. The ground state is usually defined as a constrained minimizer, while stationary states with energies larger than the ground-state energy are called excited states. Typically, any constrained saddle point is an excited state; it possesses unstable directions along which the energy decreases locally, and therefore its energy must be higher than the ground state energy. For dipolar BECs, the DDI causes the energy functional to depend on both the local density and a nonlocal integral, thereby rendering the energy landscape significantly more complex than that of systems with isotropic short-range interactions. Consequently, the computation of stationary states of dipolar BECs is particularly challenging, especially for excited states.

Compared with the ground state, excited states have received much less attention, even though numerous methods, including gradient-flow methods (imaginary-time methods)~\cite{CNGF,bao2006efficient,3,sobolev,sobolev2,sobolev3,Riemannian,GFLM} and optimization methods~\cite{min,2017Tang,Regularized,preconditioned}, have been developed for the former. Computational methods for ground states can also yield the ``first excited state" from odd initial data when odd parity is imposed to remove unstable directions~\cite{bao2006efficient,schneider1999numerical,weak}. However, such a symmetry-based strategy is difficult to generalize to the dipolar case, because a general polarization direction and the DDI break this odd symmetry. On the other hand, continuation methods and Newton-type methods provide routes for computing excited states by solving the Euler–Lagrange equation~\cite{Chien2007Adaptive,WeichungWang2011Exploring,eigenfunctions,Boullé2020Deflation}. These methods are capable of finding multiple stationary state solutions, including excited states. However, these methods generally cannot compute excited states of an {\em a priori} type, such as a prescribed Morse index.

This paper focuses primarily on excited states of constrained saddle-point type. A constrained saddle point can be quantified by its Morse index, the number of negative eigenvalues of the projected Hessian on the tangent space of the constraint manifold~\cite{Chang1993,chisd,liu2023constrained}. Actually, a constrained saddle point of index \(k\) possesses \(k\) linearly independent unstable directions, along which the energy decreases locally. The presence of unstable directions renders the computation of constrained saddle points intrinsically difficult, and gradient flow methods often fail to converge to them. In fact, local minimizers could be obtained from a constrained saddle point by descending along its unstable directions. Specifically, a steady state of the gradient flow method is linearly stable if and only if the projected Hessian is positive definite on the tangent space (see Section~\ref{2}). Therefore, computing excited states of constrained saddle-point type relies on specialized algorithms for saddle points with prescribed Morse index.

Among saddle-point search methods~\cite{eignvector,dimer,li2001minimax,weinan2011gentlest}, the gentlest ascent dynamics (GAD) of E and Zhou~\cite{weinan2011gentlest} couples the evolution of the position variable and the direction variable so that the energy ascends along the eigen-direction corresponding to the smallest eigenvalue of the Hessian and descends along all other directions, thereby making an index-1 saddle point a stable equilibrium of the dynamical system. Based on the same principle of GAD, some variants were proposed, e.g., the shrinking dimer dynamics~\cite{SDD}, the iterative minimization formulation~\cite{IMF2015}, and the improved high-index saddle dynamics~\cite{hisd}. However, the computation of excited states of BECs needs to account for the mass normalization constraint. Methods for computing index-1 constrained saddle points include the constrained string method~\cite{string}, the geodesic nudged elastic band method~\cite{skym}, and the constrained shrinking dimer dynamics~\cite{CSDD}. Recently, the idea of GAD has been extended to finding constrained high-index saddle points. Liu, Xie, and Yuan proposed a constrained gentlest ascent dynamics (CGAD) and applied it to compute excited states of single-component BECs with prescribed Morse indices~\cite{liu2023constrained}. Yin, Huang, and Zhang derived the constrained high-index saddle dynamics (CHiSD) to construct the solution landscape with equality constraints~\cite{chisd}, which was also used to reveal excited states of rotating BECs \cite{Revealing}. While these dynamical approaches have shown great potential in computing constrained saddle points and thus hold promise for excited states of dipolar BECs, their application to this setting remains unexplored. To the best of our knowledge, no systematic work has addressed the numerical computation of excited states in dipolar BECs. Apart from the mass normalization constraint, the essential difficulties are twofold: the intrinsic instability of excited states as constrained saddle points of the GP energy functional, and the nonlocal nature of the DDI, which not only significantly increases the computational cost but also poses challenges for the stability and robustness of numerical algorithms.

In this paper, we propose a CGAD-based spectral method for computing excited states of dipolar BECs. We first present a continuous dynamical system based on the CGAD to find dipolar excited states by stably locating constrained saddle points of the GP energy functional with any prescribed Morse index. The steady state of the presented continuous dynamical system corresponds exactly to constrained critical points of the GP energy functional together with eigenfunctions of the projected Hessian. Moreover, a steady state becomes linearly stable if and only if the wave function is an index-\(k\) constrained saddle point and the search directions are the eigenfunctions associated with \(k\) negative eigenvalues. Then, for the numerical realization, the nonlocality of the DDI complicates the numerical discretization. Specifically, in two and three dimensions, since the dipole kernel of the DDI is nonsmooth/singular at the origin in Fourier space, the standard Fourier/sine spectral discretization of the DDI terms loses spectral accuracy and suffers from the so-called ``numerical locking''~\cite{bao2016accurate,jiang2014fast}. To overcome this difficulty, following~\cite{bao2016accurate,jiang2014fast}, we adopt a coordinate transformation from Cartesian to polar/spherical coordinates in Fourier space, whose Jacobian determinant exactly cancels the nonsmoothness/singularity of the dipole kernel, rendering the integrand smooth near the origin. Together with high-order quadrature and the nonuniform fast Fourier transform (NUFFT), this enables the high-accuracy and high-efficiency evaluation of the DDI terms. We further integrate this technique with the sine-pseudospectral method in space and a semi-implicit time discretization with an orthonormalization correction, yielding efficient and accurate numerical algorithms for computing excited states of dipolar BECs. Based on the proposed numerical method, we conduct one-, two-, and three-dimensional numerical experiments to investigate the effects of the contact interaction, the strength and sign of the DDI, and the polarization direction on the excited states. Notably, in certain regimes, excited states exist over a broader parameter range than the ground state.

The remainder of this paper is organized as follows. Section~\ref{2} introduces the GP model for dipolar BECs and their constrained variational structure. Section~\ref{3} presents the continuous dynamical formulation and its properties. Section~\ref{sec:scheme} describes a semi-implicit sine-pseudospectral scheme and the numerical evaluation of the nonlocal interactions. Numerical results are reported in Section~\ref{results}, and conclusions are drawn in Section~\ref{conclusion}.

\section{The excited states of dipolar BECs}\label{2}
\subsection{GP energy functional for dipolar BECs}
Within the framework of mean-field theory, the GP energy functional for a single-component dipolar BEC in 3D space, after proper nondimensionalization, takes the following form \cite{bao2013mathematical,bao2010efficient,bao2016accurate,trapped}:
	\begin{equation}
		E_{3D}(\phi) =
        \int_{\mathbb{R}^3} \left[ \frac{1}{2}|\nabla \phi|^2 + V_3(\mathbf{x})|\phi|^2 + \frac{\beta_3}{2}|\phi|^4\right] d\mathbf{x}+ \frac{\lambda_3}{2}\int_{\mathbb{R}^3}\int_{\mathbb{R}^3}U_{\mathrm{dip}}^{(3)}(\mathbf{x}-\mathbf{x}') |\phi(\mathbf{x})|^2|\phi(\mathbf{x}')|^2d\mathbf{x}'d\mathbf{x},
		\label{eq:energy_3D}
	\end{equation}
where the dimensionless parameters $\beta_3$ and $\lambda_3$ denote the strengths of the contact and dipolar interactions in 3D, respectively, and the real-valued function $V_3(\mathbf{x})$ represents the external trapping potential. The 3D dipole kernel $U_{\mathrm{dip}}^{{(3)}}$ is defined as 
	\begin{equation}
		U_{\mathrm{dip}}^{{(3)}}(\mathbf{x}) 
        =\frac{3}{4\pi|\mathbf{x}|^3}\left(1-\frac{3(\mathbf{x}\cdot\mathbf{n})^2}{|\mathbf{x}|^2}\right)
        = -\delta(\mathbf{x}) - 3\partial_{\mathbf{nn}}\left(\frac{1}{4\pi|\mathbf{x}|}\right),\label{ker3d}
	\end{equation}
where $\mathbf{n}=(n_1,n_2,n_3)^T$ is a given unit vector representing the dipole axis, and $\partial_{\mathbf{nn}} = \mathbf{n} \cdot \nabla (\mathbf{n} \cdot \nabla)$.

Lower-dimensional dipolar BECs are typically realized in physical experiments by imposing strong harmonic confinement along certain directions. The 3D model can then be effectively reduced to 2D or 1D effective models when the condensate is strongly trapped in one or two directions, respectively. Actually, by a standard dimension reduction, i.e., by freezing the wave function to the harmonic-oscillator ground state (or Gaussian) in those strongly trapped directions and integrating out the corresponding coordinates, the effective interaction parameters scale with a confinement length factor $\varepsilon>0$ as
	\begin{equation}
		\beta_2 = \frac{\beta_3}{\sqrt{2\pi}\varepsilon},\; 
        \lambda_2 = \frac{\lambda_3}{\sqrt{2\pi}\varepsilon}\;\; \mbox{in 2D, }\; \mbox{and}\quad 
        \beta_1 = \frac{\beta_3}{2\pi\varepsilon^2},\; 
        \lambda_1 = \frac{\lambda_3}{2\pi\varepsilon^2}\;\; \mbox{in 1D}.
		\label{eq:beta_scaling}
	\end{equation}
The corresponding effective dipole kernels $U_{\mathrm{dip}}^{(2)}$ and $U_{\mathrm{dip}}^{(1)}$ are derived by integrating the 3D dipole kernel over the corresponding strongly trapped directions, resulting in
	\begin{equation}
		U_{\mathrm{dip}}^{(d)}(\mathbf{x}) = 
		\begin{cases} 
			\displaystyle (3n_3^2-1)\delta(\mathbf{x}) - \frac{3\sqrt{2\pi}\varepsilon}{2}(\partial_{\mathbf{n}_\perp\mathbf{n}_\perp}-n_3^2\Delta)\left(\frac{1}{2\pi|\mathbf{x}|}\right), & d=2, \\[10pt]
			\displaystyle \frac{1-3n_3^2}{2}\delta(x) - \frac{3(3n_3^2-1)}{4}\partial_{xx}\left(e^{x^2/(2\varepsilon^2)}\int_{|x|}^{\infty} e^{-s^2/(2\varepsilon^2)}ds\right), & d=1,
		\end{cases}
		\label{eq:U_dip_lower_d}
	\end{equation}
where $\mathbf{x}=(x,y)$ and $\mathbf{n}_\perp=(n_1,n_2)^T$ denotes the projection of the dipole axis onto the $x$-$y$ plane, and $\Delta$ is the 2D Laplacian. For $U_{\mathrm{dip}}^{(1)}$ in \eqref{eq:U_dip_lower_d}, the original 3D system is strongly confined in $x$ and $y$ directions, and the remaining variable $z$ is relabeled as $x$. Then the $d$-dimensional ($d=1, 2,3$) GP energy functionals share a unified form: 
   \begin{equation}
		E(\phi) = \int_{\mathbb{R}^d} \left[ \frac{1}{2}|\nabla \phi|^2 + V_d(\mathbf{x})|\phi|^2 + \frac{\beta_d}{2}|\phi|^4 + \frac{\lambda_d}{2}\Phi_{{\mathrm{dip}}}(|\phi|^2)|\phi|^2 \right] d\mathbf{x}.
		\label{eq:energy}
	\end{equation}
	Hereafter, the nonlocal operator $\Phi_{\mathrm{dip}}$ that describes the long-range dipole interaction is defined by a convolution 
    \begin{equation} \label{duichen}
        \Phi_{\mathrm{dip}}(f)(\mathbf{x}) := \big(U_{\mathrm{dip}}^{(d)} * f\big)(\mathbf{x})=\int_{\mathbb{R}^d}U_{\mathrm{dip}}^{(d)}(\mathbf{x}-\mathbf{x}') f(\mathbf{x}')d\mathbf{x}'
    \end{equation}
    for a given function $f(\mathbf{x})$, 
    where \(U_{\mathrm{dip}}^{(d)}\) is defined in (\ref{ker3d}) and (\ref{eq:U_dip_lower_d}). 

\begin{remark}
    The formulation of the 2D/1D kernel $U_{{\mathrm{dip}}}^{(d)}$ in \eqref{eq:U_dip_lower_d} is slightly different from the ones in \cite{bao2012gross,bao2013mathematical}. In fact, to achieve a clean decoupling of $\beta_d$ and $\lambda_d$ ($d=1,2$) as in \eqref{eq:beta_scaling} while maintaining consistent orders of magnitude, and to facilitate the investigation of dipole-dominated regimes on ground and excited states, the local contact terms (i.e., terms involving the Dirac $\delta$-function) arising from the dimension reduction or singularity separation are absorbed into the kernel $U_{{\mathrm{dip}}}^{(d)}$ in \eqref{eq:U_dip_lower_d}. 
\end{remark}

Throughout this paper, we assume that the wave function $\phi$ is real-valued. Assume that the potential function satisfies \(V_d(\mathbf{x}) \geq 0\) for all \(\mathbf{x} \in \mathbb{R}^d\) and $\lim_{|\mathbf{x}|\to\infty}V_d(\mathbf{x})=\infty$. We adopt the standard notation for Sobolev spaces, and denote the \(L^p(\mathbb{R}^d)\) norm by \(\|f\|_{L^p}\), i.e., \(\|f\|_{L^p}^p := \int_{\mathbb{R}^d} |f(\mathbf{x})|^p \, d\mathbf{x}\) for \(p \in [1, \infty)\), with the spatial domain \(\mathbb{R}^d\) omitted when the context is clear. 
Define the function space \(X\) as
	\[
	X = \left\{ u \in H^1(\mathbb{R}^d) \,\Big|\, \|u\|_{X}^2 := \|u\|_{2}^2 + \|\nabla u\|_{2}^2 + \int_{\mathbb{R}^d} V_d(\mathbf{x})|u(\mathbf{x})|^2 \, d\mathbf{x} < \infty \right\}.
	\]
It is noted that, by the symmetry of the convolution, the identity  
    \begin{equation}\label{eq:Phi-dip-symm}
        \int_{\mathbb{R}^d}\Phi_{\mathrm{dip}}(f)(\mathbf{x})g(\mathbf{x})d\mathbf{x}=\int_{\mathbb{R}^d}\Phi_{\mathrm{dip}}(g)(\mathbf{x})f(\mathbf{x})d\mathbf{x}
    \end{equation}
    holds for all functions \( f \) and \( g \) for which the integrals on both sides make sense. Moreover, $\Phi_{\mathrm{dip}}$ is a bounded linear operator from \(W^{1,p}(\mathbb{R}^d)\) to \(L^p(\mathbb{R}^d)\), \(p\in(1,\infty)\), as stated in the following lemma. 
    \begin{lemma}\label{lem:Phi-dip-boundedness}
		For $d\in\{1,2,3\}$ and \(p\in(1,\infty)\), there exists a constant $C>0$ depending only on \(d\), \(p\), and the {system parameters} such that, for every \(f \in W^{1,p}(\mathbb{R}^d)\),
		\begin{equation}
			\big\| \Phi_{\mathrm{dip}}(f)\big\|_{p} = \big\| U_{\mathrm{dip}}^{(d)} * f\,\big\|_{p} \le C
            \begin{cases}
                \|f\|_{{p}}, & d=1 \mbox{ or } d=3,\\
                \|f\|_{{p}}+\|\nabla f\|_{{p}}, & d=2.
            \end{cases}
		\end{equation}
	\end{lemma}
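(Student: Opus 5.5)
The plan is to split each kernel $U_{\mathrm{dip}}^{(d)}$ in \eqref{ker3d} and \eqref{eq:U_dip_lower_d} into a Dirac part and a derivative part, and to treat each part separately. The Dirac part contributes a multiple of $f$, which is trivially bounded by $\|f\|_p$. The derivative part is a second-order derivative of a potential, and we will recognize it as a Fourier multiplier, or as a convolution with an integrable kernel. Boundedness on $L^p$, $1<p<\infty$, then follows either from the H\"ormander--Mikhlin multiplier theorem (Calder\'on--Zygmund theory) or from Young's inequality.

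\emph{Case $d=3$.} We write
\[
\Phi_{\mathrm{dip}}(f)=-f-3\,\partial_{\mathbf{nn}}\big((-\Delta)^{-1}f\big).
\]
The operator $\partial_{\mathbf{nn}}(-\Delta)^{-1}$ has Fourier symbol $-(\mathbf{n}\cdot\xi)^2/|\xi|^2$. This symbol is homogeneous of degree zero and smooth on the unit sphere, so it satisfies the Mikhlin condition $|\partial^\alpha m(\xi)|\le C_\alpha|\xi|^{-|\alpha|}$. Equivalently, it is a composition $R_iR_j$ of Riesz transforms. Hence it is bounded on $L^p(\mathbb{R}^3)$, and we obtain $\|\Phi_{\mathrm{dip}}(f)\|_p\le C\|f\|_p$.

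\emph{Case $d=1$.} Set $g(x)=e^{x^2/(2\varepsilon^2)}\int_{|x|}^\infty e^{-s^2/(2\varepsilon^2)}\,ds$. This function is even, positive, and decays like $\varepsilon^2/|x|$ at infinity; it is smooth away from the origin and has a corner there. We compute $g''$ in the distributional sense. The jump of $g'$ at $0$ equals $-2$, which produces a term $-2\delta$. For $x\neq 0$, $g$ solves the ODE $g'=(x/\varepsilon^2)g-\operatorname{sgn}(x)$, and differentiating this shows that the classical part of $g''$ equals $g/\varepsilon^2+x g'/\varepsilon^2$. An asymptotic expansion of the complementary error function then gives $g''=O(|x|^{-3})$ at infinity, while near the origin this part is bounded. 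Therefore
\[
g''=-2\delta+h,\qquad h\in L^1(\mathbb{R}).
\]
Young's inequality $\|h*f\|_p\le\|h\|_1\|f\|_p$ then gives the bound by $\|f\|_p$. The constant depends on $\varepsilon$ and $n_3$, which is consistent with its dependence on the system parameters. (The symbol of $h$ is in fact bounded, which serves as a cross-check.)

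\emph{Case $d=2$.} This is the main obstacle, and it explains why the gradient appears in the bound. We have $(2\pi|\mathbf{x}|)^{-1}=\mathcal{F}^{-1}[|\xi|^{-1}]$. So $(\partial_{\mathbf{n}_\perp\mathbf{n}_\perp}-n_3^2\Delta)\frac{1}{2\pi|\mathbf{x}|}$ has symbol
\[
\frac{-(\mathbf{n}_\perp\cdot\xi)^2+n_3^2|\xi|^2}{|\xi|},
\]
which is homogeneous of degree one. We factor this symbol as
\[
\left[\frac{n_3^2|\xi|^2-(\mathbf{n}_\perp\cdot\xi)^2}{|\xi|^2}\right]\cdot|\xi|.
\]
The bracket is a degree-zero Mikhlin multiplier, hence bounded on $L^p$. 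The factor $|\xi|$ is the operator $(-\Delta)^{1/2}=\sum_j R_j\partial_j$, where $R_j$ are the Riesz transforms, so $\|(-\Delta)^{1/2}f\|_p\le C\|\nabla f\|_p$. Combining these two facts with the Dirac part yields the bound $C(\|f\|_p+\|\nabla f\|_p)$.

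The delicate points are the following. In $d=2$ one must justify the symbol computation for a kernel that is not locally integrable; this is handled by treating it as a tempered distribution acting on Schwartz functions and then extending to $W^{1,p}$ by density. In $d=1$ one must verify the integrability of $h$ at infinity through the asymptotics of $\operatorname{erfc}$, making sure that the leading term of order $1/|x|$ cancels after two derivatives.
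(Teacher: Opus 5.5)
Your proof is correct. For $d=3$ and $d=2$ it uses the same mechanism as the paper. The paper simply cites Lemma~2.1 of Carles--Markowich--Sparber for $d=3$, which is a Calder\'on--Zygmund bound for a degree-zero multiplier. For $d=2$ it cites Lemma~3.2 of Bao--Ben Abdallah--Cai, $\|\partial_{\alpha\beta}\varphi\|_p\le C_p\|\partial_\beta f\|_p$ with $\varphi=(2\pi|\mathbf{x}|)^{-1}*f$, and then sums the three second derivatives term by term. Your factorization of the degree-one symbol as a degree-zero Mikhlin multiplier times $|\xi|=\sum_j(\xi_j/|\xi|)\xi_j$ is exactly the content of that cited lemma. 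You prove it in place and apply it to the whole combination at once, which also spares you from giving meaning to $\varphi$ itself when $p\ge 2$.

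The genuine difference is in $d=1$. You differentiate $g$ distributionally and control the classical part of $g''$ through the erfc (Mills-ratio) expansion. The paper instead writes
\[
U_{\mathrm{dip}}^{(1)}=(3n_3^2-1)\delta-\tfrac{3(3n_3^2-1)}{2}\int_0^\infty e^{-s}K_s\,ds,\qquad K_s(x)=\tfrac{\sqrt{2s}}{2\varepsilon}e^{-\sqrt{2s}|x|/\varepsilon},\quad \|K_s\|_{L^1}=1,
\]
so Young's inequality immediately gives the explicit constant $\frac52|3n_3^2-1|$ with no asymptotics.

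Your $h$ coincides with $2\int_0^\infty e^{-s}K_s\,ds$. Your own computation already yields $\int h=g'(0^-)-g'(0^+)=2$. The representation adds positivity of $h$, hence $\|h\|_1=2$, which gives the same constant.

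So the paper's route buys positivity and an explicit bound cheaply. Yours works directly from the kernel as written and needs only the ODE $g'=(x/\varepsilon^2)g-\operatorname{sgn}(x)$ plus a tail estimate. That tail estimate can be done without differentiating asymptotic expansions: substitute $g''=g(1/\varepsilon^2+x^2/\varepsilon^4)-|x|/\varepsilon^2$ and use the two-term expansion of $g$ with a remainder.
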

    \begin{proof}
        The proof is based on the results in \cite{bao2012gross,CMS2008Nonlinearity}, and is presented in \ref{appendix:proof-lem:Phi-dip-boundedness} for brevity.
     \end{proof}
Based on Lemma \ref{lem:Phi-dip-boundedness}, direct calculations show that the GP energy functional $E$ given in \eqref{eq:energy} is second-order continuously Fr\'{e}chet-differentiable, and in the weak sense,
\begin{align}
		E'(\phi) &= -\Delta\phi +2 V_d(\mathbf{x})\phi + 2\beta_d |\phi|^2\phi + 2\lambda_d \Phi_{\mathrm{dip}}(|\phi|^2)\phi, \label{eq:E'} \\
		E''(\phi)v &= -\Delta v + 2\Big( V_d(\mathbf{x}) + 3\beta_d|\phi|^2 + \lambda_d\Phi_{\mathrm{dip}}(|\phi|^2) \Big)v + 4\lambda_d  \Phi_{\mathrm{dip}}(\phi v)\phi,\quad \forall v\in X.
		\label{eq:E''}
\end{align}

Due to the confining potential, the macroscopic wave function decays rapidly in the far field. Hence, for numerical computation, the whole-space problem can be truncated to a sufficiently large bounded domain \(\mathcal{D}\subset\mathbb{R}^d\), and homogeneous Dirichlet boundary conditions are imposed on the wave function. We therefore take
\(X=H_0^1(\mathcal{D})\)
in the remainder of the paper.

This truncation does not replace the free-space dipolar interaction with a bounded-domain interaction. Assume that \(f\) is defined on \(\mathcal{D}\) and satisfies the homogeneous Dirichlet boundary condition. Let \(\mathcal{E}_0 f\) denote its zero extension to \(\mathbb{R}^d\). By \eqref{duichen}, we can define the long-range dipolar interaction on \(\mathcal{D}\) as
\begin{equation}   \label{zero_padd}
\Phi_{\rm dip}^{\mathcal{D}}(f):=\left.\left(U_{\rm dip}^{(d)}*\mathcal{E}_0 f\right)\right|_{\mathcal{D}}=\int_ \mathcal{D}U_{\rm dip}^{(d)}(\mathbf{x}-\mathbf{y}) f(\mathbf{y})\,d\mathbf{y},
\end{equation}
where the singular convolution is understood in the distributional sense (or equivalently, in the Fourier-multiplier sense). Therefore, the homogeneous Dirichlet condition is imposed only on the wave function; no artificial boundary condition is imposed on the dipolar potential. For notational simplicity, we write \(\Phi_{\rm dip}\) instead of \(\Phi_{\rm dip}^{\mathcal{D}}\) below. By \eqref{eq:Phi-dip-symm}, we have
\(
\int_\mathcal{D} \Phi_{\rm dip}(f) g\,d\mathbf{x}
=
\int_\mathcal{D} f \Phi_{\rm dip}(g)\,d\mathbf{x}
\).

By zero extension, the truncated energy functional remains twice continuously Fr\'echet differentiable on the corresponding admissible space. This free-space treatment of the dipolar interaction also motivates the numerical evaluation of the DDI term, see Section~\ref{section:numicel sheme} for details.

For convenience, we introduce the notations
    \begin{align}
        G(\phi)&:=\frac12E'(\phi)=-\frac{1}{2}\Delta\phi + V_d\,\phi + \beta_d |\phi|^2\phi + \lambda_d \Phi_{\mathrm{dip}}(|\phi|^2)\phi, \label{eq:Gphi-expr}\\
        L_{\phi}v&:=\frac12E''(\phi)v= -\frac12\Delta v + V_d\,v + 3\beta_d|\phi|^2v + \lambda_d\Phi_{\mathrm{dip}}(|\phi|^2)v + 2\lambda_d \Phi_{\mathrm{dip}}(\phi v)\phi, \quad \forall v\in X. \label{eq:Lphi-expr}
    \end{align}
    It is noted that $L_{\phi}$ is self-adjoint and is exactly the linearized operator of $G(\phi)$, i.e., $ L_{\phi}v=G'(\phi)v$.

	\subsection{Constrained saddle points and excited states of dipolar BECs}
	The stationary states of a dipolar BEC are characterized by the critical points of the GP energy functional $E(\phi)$ \eqref{eq:energy} subject to the normalization constraint
	\begin{equation}
		\|\phi\|^2 := \int_{\mathcal{D}} |\phi(\mathbf{x})|^2 \, \mathrm{d}\mathbf{x} = 1.
		\label{yueshu}
	\end{equation}
   This amounts to finding the solution pair $(\phi, \mu)$ to the Euler--Lagrange equation $E'(\phi) = 2\mu \phi$, i.e., 
	\begin{equation}
		\mu \phi =G(\phi)= \left[ -\frac{1}{2}\Delta + V_d + \beta_d |\phi|^2 + \lambda_d \Phi_{\mathrm{dip}}(|\phi|^2) \right] \phi,
		\label{eq:time_independent_GPE}
	\end{equation}
    under the constraint \eqref{yueshu}, where \eqref{eq:time_independent_GPE} is also called the time-independent GPE \cite{bao2013mathematical,liu2023constrained}. 
	Due to the physical normalization constraint \eqref{yueshu}, this system intrinsically forms a constrained nonlinear eigenvalue problem, where $\phi$ acts as the eigenfunction and the Lagrange multiplier $\mu$ is the corresponding eigenvalue, physically representing the chemical potential, given as
	\begin{align}
		\mu(\phi) 
        &= \langle G(\phi),\phi\rangle
        =E(\phi) + \int_{\mathcal{D}} \left[ \frac{\beta_d}{2}|\phi|^4 + \frac{\lambda_d}{2}\Phi_{\mathrm{dip}}(|\phi|^2) |\phi|^2 \right] \mathrm{d}\mathbf{x}.\label{eq:mu}
	\end{align}
	Thus, all eigenfunctions of \eqref{yueshu}-\eqref{eq:time_independent_GPE} are the constrained critical points of the GP energy functional $E(\phi)$ \eqref{eq:energy} on the unit spherical manifold
    \begin{align}
        \mathcal{S}=\{\phi\in X:\|\phi\|^{2}=1\}.
    \end{align}
	The ground state and the excited states inherently both satisfy the Euler--Lagrange equation \eqref{eq:time_independent_GPE}. Specifically, the ground state is the constrained minimizer of $E(\phi)$ on $\mathcal{S}$. In contrast, constrained saddle points possess higher energy than the ground state and are thus identified as excited states. 
    
    For $\phi\in\mathcal{S}$, denote the tangent space of $\mathcal{S}$ at $\phi$ as
    $$T_{\phi}\mathcal{S}=\{v\in X:\langle\phi,v\rangle=0\},$$
	where \(\langle \cdot, \cdot \rangle\) is standard \(L^2(\mathcal{D})\) inner product notation, i.e., \(\langle f, g \rangle = \int_{\mathcal{D}} f(\mathbf{x}) g(\mathbf{x}) \,\mathrm{d}\mathbf{x}\). Let $P_{\phi}$ be the $L^2$-orthogonal projection operator mapping from $X$ onto the tangent space $T_{\phi}\mathcal{S}$, expressed as $P_{\phi}v=v-\langle \phi, v \rangle\phi$, $\forall v\in X$. The $L^2$ projected gradient of the GP energy functional, defined with a convenient factor of $1/2$, is then given by
    \begin{align}\label{eq:Fphi-def}
        F(\phi):=\frac12 P_\phi E'(\phi)
        = G(\phi)-\mu(\phi)\phi
        = \left[-\frac{1}{2}\Delta + V_d + \beta_d |\phi|^2 + \lambda_d \Phi_{\mathrm{dip}}(|\phi|^2)-\mu(\phi)\right]\phi.
    \end{align}
    Further, we denote by $H(\phi):=\frac12E''(\phi)-\mu(\phi)I=L_{\phi}-\mu(\phi)I$ the effective Hessian operator with respect to the normalization constraint \eqref{yueshu}, and then define the projected Hessian operator as $\hat{H}(\phi)=P_{\phi}H(\phi)P_{\phi}:\, T_{\phi}\mathcal{S}\to T_{\phi}\mathcal{S}$. More explicitly, for all $v\in T_{\phi}\mathcal{S}$, 
    \begin{align}\label{eq:ProjHessian}
        \hat{H}(\phi)v
        &=P_{\phi}H(\phi)P_{\phi}v=L_{\phi}v-\mu(\phi)v-\left\langle L_{\phi}v,\phi\right\rangle \phi, \nonumber \\
        &=-\frac12\Delta v + V_d\,v + 3\beta_d|\phi|^2v + \lambda_d\Phi_{\mathrm{dip}}(|\phi|^2)v + 2\lambda_d \Phi_{\mathrm{dip}}(\phi v)\phi-\mu(\phi)v \nonumber\\
        &\quad\; -\phi\int_{\mathcal{D}} \bigg(
	\frac{1}{2}\nabla\phi \cdot \nabla v
	+ \big[V_d 
	+ 3\beta_d |\phi|^2 
	+ 3\lambda_d\Phi_{\mathrm{dip}}(|\phi|^2)\big]\phi v
	\bigg)\,\mathrm{d}\mathbf{x}.
    \end{align} 
    The local stability/instability of a constrained critical point can be depicted by examining the spectrum of $\hat{H}(\phi)$ at that point, which introduces the following concept of Morse index \cite{Chang1993,liu2023constrained}.
    
    \begin{definition}
        Let $\phi^*\in\mathcal{S}$ be a constrained critical point of the GP energy functional. Assume that the linear self-adjoint operator $\hat{H}(\phi^*)$ admits a spectral decomposition on the tangent space of the form  $T_{\phi^*}\mathcal{S} = T^{-} \oplus T^{0} \oplus T^{+}$, $\dim(T^{0}) < \infty$, with $T^{-}$, $T^{0}$, and $T^{+}$ denoting the maximal negative-definite, null, and maximal positive-definite subspaces of $\hat{H}(\phi^*)$, respectively. The Morse index of $\phi^*$ is defined as
    \[ \operatorname{Index}(\phi^*) = \dim(T^{-}). \] 
    The subspace $T^{-}$ is called the unstable tangent subspace of $\phi^*$, and every eigenfunction $v\in T^{-}$ is called an unstable tangent direction at $\phi^*$.
    \end{definition}
    
    When \(\operatorname{Index}(\phi^*) = k\) \((k = 1, 2, \dots)\), \(\phi^*\) is a constrained saddle point, admitting \(k\) linearly independent tangent directions along which the energy decreases locally. Thus, all constrained saddle points have higher energy than the ground state and must be excited states. An index-\(k\) constrained saddle point is specifically referred to as an {\em index-\(k\) excited state}. 

\begin{remark}
To provide an intuitive understanding of the Morse index, it is instructive to consider the linear non-interacting limit ($\beta_d = 0, \lambda_d = 0$). In this regime, the Euler--Lagrange equation reduces to a standard linear Schr\"odinger eigenvalue problem:
\begin{equation}
	-\frac{1}{2}\Delta \phi+ V_d\,\phi = \mu \phi,\quad
    \|\phi\|^2=1. \label{eq:linear}
\end{equation}
As stated in Section~5.1 in \cite{liu2023constrained}, under certain assumptions on $V_d$, for an eigenstate $\phi_k$ with eigenvalue $\mu_k$ (ordered as $\mu_0 < \mu_1 \le \mu_2\le \dots$), the maximal negative-definite subspace $T^{-}$ of the projected Hessian $\hat{H}(\phi_k)$ is spanned by the $k$ lower-energy eigenfunctions $\{\phi_0, \phi_1, \dots, \phi_{k-1}\}$.
Perturbations along these directions reduce the energy, identifying them as the unstable tangent directions, and $\operatorname{Index}(\phi_k) = k$. Thus the index-$k$ excited state corresponds to the $k$-th eigenstate of the linear problem, and the Morse index determines the energy hierarchy.
\end{remark}

We emphasize that, due to the local instability of index-$k$ excited states ($k\geq1$) on the energy landscape, the normalized gradient flow (NGF), also known as the imaginary-time evolution method,
\begin{align}\label{eq:cngf}
    \partial_t\phi =-F(\phi)
    =\left[\frac{1}{2}\Delta - V_d - \beta_d |\phi|^2 - \lambda_d \Phi_{\mathrm{dip}}(|\phi|^2)+\mu(\phi)\right]\phi\quad \mbox{with}\;\; \phi(\cdot,0)=\phi^0\in \mathcal{S},
\end{align}
cannot capture them stably, as stated in the following theorem.

\begin{theorem}\label{thm:cngf-stability}
   Let $\phi^*\in\mathcal{S}$ be a steady state of the normalized gradient flow \eqref{eq:cngf}. Then $\phi^*$ is a constrained critical point of the GP energy functional, and $\phi^*$ is linearly stable if and only if $\hat{H}(\phi^*)$ is positive-definite on the tangent space $T_{\phi^*}\mathcal{S}$.
\end{theorem}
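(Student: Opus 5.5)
The plan has two parts: identify the steady states, then linearize the flow at such a state and read stability off the spectrum of the projected Hessian.

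\emph{Steady states.} If $\phi^*\in\mathcal{S}$ satisfies $F(\phi^*)=0$, then $G(\phi^*)=\mu(\phi^*)\phi^*$. This is exactly the Euler--Lagrange equation \eqref{eq:time_independent_GPE} with $\mu=\mu(\phi^*)$, together with the constraint \eqref{yueshu}. Equivalently $P_{\phi^*}E'(\phi^*)=0$, so $\phi^*$ is a constrained critical point. Before linearizing, I will also note that the flow preserves the constraint:
\[
\frac{d}{dt}\|\phi\|^2=-2\langle F(\phi),\phi\rangle=-2\big(\langle G(\phi),\phi\rangle-\mu(\phi)\|\phi\|^2\big).
\]
Using $\mu(\phi)=\langle G(\phi),\phi\rangle$, this vanishes on $\mathcal{S}$; more precisely, $\|\phi\|^2=1$ is an invariant set. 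Linear stability is therefore understood for perturbations $\phi=\phi^*+\epsilon v$ that stay on $\mathcal{S}$ to first order, i.e.\ $v\in T_{\phi^*}\mathcal{S}$.

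\emph{Linearization.} I compute the Fr\'echet derivative of $F(\phi)=G(\phi)-\mu(\phi)\phi$ at $\phi^*$ in direction $v$:
\[
F'(\phi^*)v=L_{\phi^*}v-\mu(\phi^*)v-\big(\mu'(\phi^*)v\big)\phi^*,
\]
where $\mu'(\phi^*)v=\langle L_{\phi^*}v,\phi^*\rangle+\langle G(\phi^*),v\rangle$. The self-adjointness of $L_{\phi^*}$ and the symmetry \eqref{eq:Phi-dip-symm}, which underlies it, are used here. At the critical point $G(\phi^*)=\mu(\phi^*)\phi^*$, so for $v\in T_{\phi^*}\mathcal{S}$ we get $\langle G(\phi^*),v\rangle=\mu(\phi^*)\langle\phi^*,v\rangle=0$. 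Hence
\[
F'(\phi^*)v=L_{\phi^*}v-\mu(\phi^*)v-\langle L_{\phi^*}v,\phi^*\rangle\phi^*=\hat H(\phi^*)v
\]
by \eqref{eq:ProjHessian}. The linearized flow $\partial_t v=-\hat H(\phi^*)v$ maps $T_{\phi^*}\mathcal{S}$ into itself, consistent with the invariance of $\mathcal{S}$.

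\emph{Spectral conclusion.} Since $\hat H(\phi^*)$ is self-adjoint on $T_{\phi^*}\mathcal{S}$, I expand $v$ in the eigenfunctions given by the assumed spectral decomposition $T^-\oplus T^0\oplus T^+$. Each eigencomponent evolves as $e^{-\lambda t}$, where $\lambda$ is the corresponding eigenvalue. Linear stability means every perturbation decays, and this holds iff all eigenvalues are positive, i.e.\ $\hat H(\phi^*)$ is positive definite on the tangent space. Conversely, an eigenvalue $\lambda<0$ gives exponential growth, and $\lambda=0$ gives a nondecaying mode, so either one destroys stability.

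\emph{Main obstacle.} The delicate point is the linearization step. One must handle the nonlocal DDI terms: Lemma~\ref{lem:Phi-dip-boundedness} justifies differentiability, and the symmetry identity is needed to combine the terms $\Phi_{\mathrm{dip}}(\phi^*v)\phi^*$ into $\langle L_{\phi^*}v,\phi^*\rangle$. One must also check that the extra term from $\mu'$ cancels exactly on the tangent space, which is where criticality enters. Beyond that, stating "linearly stable" precisely (strict decay rather than mere boundedness) is what excludes zero modes. With these in place, the rest is routine.
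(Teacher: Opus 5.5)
Your proposal is correct and follows the same route as the paper. The steady-state condition $F(\phi^*)=0$ is the Euler--Lagrange equation, and on $T_{\phi^*}\mathcal{S}$ the Jacobian is $-F'(\phi^*)=-\hat{H}(\phi^*)$, with $\mu'(\phi^*)v=\langle L_{\phi^*}v,\phi^*\rangle$ because $\langle G(\phi^*),v\rangle=0$; stability is then read off the spectrum of $\hat{H}(\phi^*)$. Your added remarks on the invariance of $\mathcal{S}$ (which justifies restricting to tangent perturbations) and on excluding zero modes only make explicit what the paper's short proof leaves implicit.
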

\begin{proof}
    The steady state condition $\partial_t\phi|_{\phi=\phi^*}=0$ gives precisely the Euler--Lagrange equation \eqref{eq:time_independent_GPE} at $\phi^*$, which confirms immediately that $\phi^*$ is a constrained critical point. For the linear stability, we consider the Jacobian operator (i.e., linearized operator) of the right-hand-side of \eqref{eq:cngf} at $\phi^*$. It takes the following expression for all $v\in T_{\phi^*}S$:
    \begin{align}\label{eq:dFphistar}
       J^*v &= -F'(\phi^*)v
    =\frac12\Delta v - V_d\,v - 3\beta_d|\phi^*|^2v - \lambda_d\Phi_{\mathrm{dip}}(|\phi^*|^2)v - 2\lambda_d \Phi_{\mathrm{dip}}(\phi^* v)\phi^*\!+\! \mu(\phi^*)v\!+\!\langle \mu'(\phi^*),v\rangle \phi^* \nonumber\\
    &=-L_{\phi^*}v + \mu(\phi^*)v +\langle L_{\phi^*}v,\phi^*\rangle \phi^*=-\hat{H}(\phi^*)v.
    \end{align}
    Therefore, $\phi^*$ is linearly stable if and only if all eigenvalues of Jacobian $J^*$ are negative, equivalently, $\hat{H}(\phi^*)$ is positive-definite on the tangent space $T_{\phi^*}\mathcal{S}$.
\end{proof}

    In the next section, we describe a continuous dynamical method based on the CGAD to stably compute index-$k$ excited states of dipolar BECs with arbitrarily given $k=1,2,\ldots$.

	\section{Constrained gentlest ascent dynamics for finding dipolar excited states} \label{3}

Index-\(k\) excited states (\(k\geq1\)) correspond to constrained saddle points of the GP energy functional on the spherical manifold \(\mathcal{S}\). At such saddle points, the projected Hessian operator \(\hat{H}(\phi^*)\) is no longer positive definite; instead, it possesses exactly \(k\) strictly negative eigenvalues. The eigenfunctions corresponding to these negative eigenvalues are the unstable directions. If one directly evolves the system using the standard NGF, according to Theorem~\ref{thm:cngf-stability}, the gradient flow cannot converge stably to a constrained saddle point. Therefore, in order to capture such constrained saddle points in a stable way, the dynamical mechanism of the standard NGF must be fundamentally modified. A natural strategy is to selectively stabilize \(k\) unstable directions.

A closely related realization of this idea is the GAD proposed by E and Zhou \cite{weinan2011gentlest} for locating index-1 unconstrained saddle points. The central idea of GAD is to formulate a coupled dynamical system in which the position variable ascends along the steepest ascent direction within a designated unstable mode and descends along the steepest descent directions in the orthogonal complement of that mode. Meanwhile, the unstable direction itself evolves via a gradient flow that minimizes the Rayleigh quotient, so as to approximate the eigenfunction associated with the smallest (negative) eigenvalue of the Hessian matrix. For dipolar BECs, however, the evolution dynamics must satisfy the normalization constraint (\ref{yueshu}). Following \cite{liu2023constrained}, we extend the idea of GAD to the constrained and high-index setting for dipolar excited states within the GP model.

\subsection{Dynamics for index-1 excited states}
We start with the case of index-1 excited states. Due to the constraints, the constructed dynamics must lie in the tangent space of the constraint sphere. Inspired by NGF and GAD, we first project the negative energy gradient onto the tangent space $T_\phi\mathcal{S}$, and then decompose the projected gradient in the tangent space into a component parallel to the unstable direction $v$ and a component perpendicular to $v$ (see Fig.~\ref{fig:CGAD_schematic}). Subsequently, we reverse the sign of the parallel component while keeping the perpendicular component unchanged, so that the energy ascends along the unstable direction and descends in all other directions, yielding the dynamics of $\phi$ as
\begin{equation}\label{eq:CGAD1-phi-usingF}
    \partial_t\phi= -F(\phi) + 2\langle F(\phi),v\rangle v,
\end{equation}
where $F(\phi)=P_\phi G(\phi) = G(\phi)-\mu(\phi)\phi$ and $v$ serves as an auxiliary variable that dynamically tracks the unstable tangent direction at $\phi$. Since $\phi$ approximates an index-1 excited state, its unstable direction $v$ is taken as the eigenfunction corresponding to the unique negative eigenvalue of the projected Hessian $\hat{H}(\phi)$. According to the Rayleigh--Ritz variational principle, the dynamics that evolves $v$ can be derived from the following minimization problem
\begin{equation}\label{eq:RRmin-idx1}
    \min_{v} \big\langle \hat{H}(\phi)v, v \big\rangle  \quad \text{subject to} \quad  v \in T_{\phi} \mathcal{S}, \quad \|v\|^2=1.
\end{equation}
A standard gradient flow for solving the minimization problem \eqref{eq:RRmin-idx1} gives the following dynamical evolutionary equation 
\begin{equation}\label{eq:CGAD1-v-usingHhat}
    \partial_t{v} 
    = -\hat{H}(\phi)v  + \tilde{\sigma} \phi+ \tilde{\kappa} v,
\end{equation}
with Lagrange multipliers \(\tilde{\sigma}\) and \(\tilde{\kappa}\) to enforce the constraints \(\langle \phi, v \rangle=0\) and \(\|v\|^2=1\), respectively. Taking the time derivative of these constraints gives  
 $\langle \partial_t{\phi}, v\rangle + \langle \partial_t{v},\phi \rangle = 0$ and
 $\langle \partial_t{v}, v \rangle = 0$.
Combining \eqref{eq:CGAD1-phi-usingF} and \eqref{eq:CGAD1-v-usingHhat}, we have
\begin{equation}\label{eqs_for_multiplier}
\begin{pmatrix}
\|\phi\|^2 & \langle\phi,v\rangle\\
\langle\phi,v\rangle & \|v\|^2
\end{pmatrix}
\begin{pmatrix}
\tilde{\sigma}\\
\tilde{\kappa}
\end{pmatrix}
=
\begin{pmatrix}
\langle\hat{H}(\phi)v,\phi\rangle
+(1-2\|v\|^2)\langle F(\phi),v\rangle\\
\langle\hat{H}(\phi)v,v\rangle
\end{pmatrix}.
\end{equation}
Assuming the constraints  hold and substituting $\|\phi\|^2=1$, $\langle \phi, v \rangle=0$ and $\|v\|^2=1$ in \eqref{eqs_for_multiplier}, the expressions of Lagrange multipliers are obtained as
\(
\tilde\sigma
=-\langle F(\phi),v\rangle
\) and
\(
\tilde{\kappa }=\langle \hat{H}\left( \phi  \right)v,v \rangle 
\).

\begin{figure}[!t]
		\centering
		\begin{minipage}{1\textwidth}
			\centering
			\includegraphics[width=0.8\linewidth]{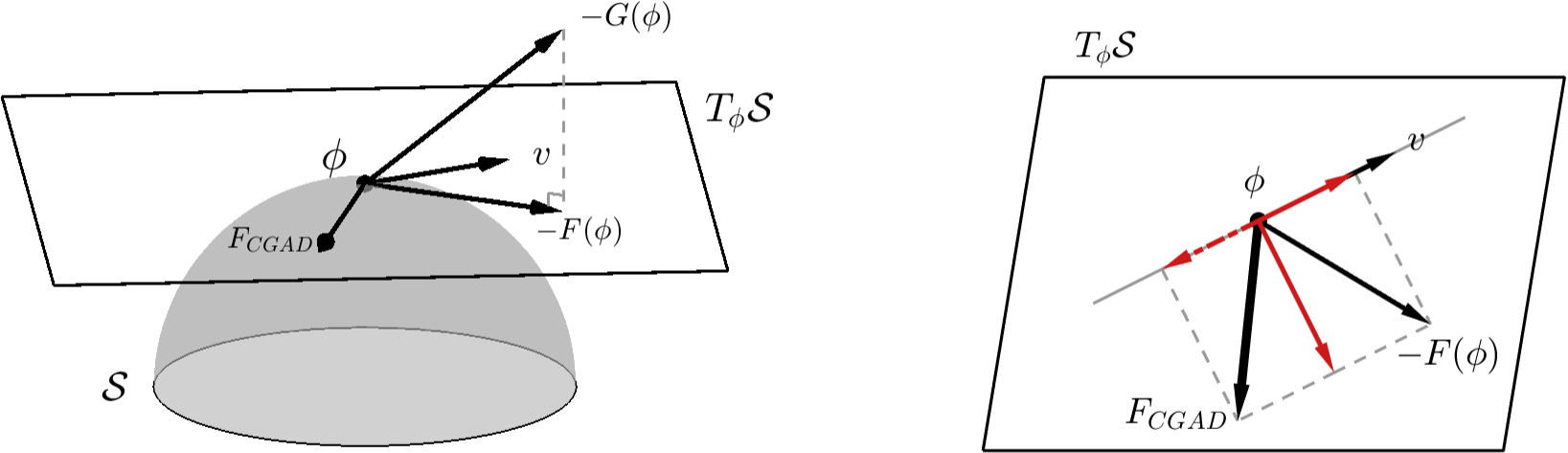}
		\end{minipage}
		\caption{Illustration of the index-1 CGAD. Left: Projection of the negative energy gradient $-G(\phi)$ onto the tangent space $T_{\phi}\mathcal{S}$ to obtain the constrained negative gradient $-F(\phi)$. Right: Decomposition of $-F(\phi)$ into parallel and perpendicular components relative to the unstable direction $v$ to form the driving force $F_{\mathrm{CGAD}}=-F(\phi) + 2\langle F(\phi),v\rangle v$.}
		\label{fig:CGAD_schematic}
	\end{figure}

\subsection{Dynamics for general index-$k$ excited states}
For general index-$k$ excited states ($k\geq 1$), we introduce $k$ orthonormal tangent directions $\{v_i\}_{i=1}^k \subset T_\phi\mathcal{S}$, which approximate eigenfunctions of the projected Hessian operator corresponding to $k$ negative eigenvalues. These directions together with $\phi$ must satisfy the following orthonormality constraints
\begin{equation}
    \|\phi\|^2 = 1, \quad \langle \phi, v_i \rangle = 0, \quad 
    \langle v_i, v_j \rangle = \delta_{ij}, \quad 1 \le j \le i \le k.
    \label{eq:constraints}
\end{equation}
Extending \eqref{eq:CGAD1-phi-usingF} directly to the case of $k$ unstable directions $\{v_i\}_{i=1}^k\subset T_\phi\mathcal{S}$ yields the dynamics
\begin{equation}\label{eq:CGADk-phi-usingG}
    \partial_t \phi = -G(\phi) +\mu(\phi)\phi+ 2\sum_{j=1}^{k} \langle F(\phi), v_j \rangle v_j.
\end{equation}
To drive the dynamics of $v_i$, we consider the ansatz that $v_1, v_2, \dots, v_{i-1}\in T_{\phi}\mathcal{S}$ are exactly the eigenfunctions of $\hat{H}(\phi)$ corresponding to the first $i-1$ smallest eigenvalues. Then the eigenfunction corresponding to the $i$-th smallest eigenvalue can be characterized by the Rayleigh--Ritz minimization problem 
\begin{equation}\label{eq:RRmin-idxk}
    \min_{v_i} \big\langle \hat{H}(\phi)v_i, v_i \big\rangle \quad \text{subject to} \quad  v_i \in T_{\phi} \mathcal{S}, \quad \langle v_i, v_j \rangle = \delta_{ij} \; (j = 1, \dots, i).
\end{equation}
Similar to \eqref{eq:CGAD1-v-usingHhat}--\eqref{eqs_for_multiplier}, the gradient flow for solving \eqref{eq:RRmin-idxk} yields the dynamics of \(v_i\) as follows
\begin{align}\label{eq:CGADk-vi-usingLphi}
    \partial_t v_i
    = -\hat{H}(\phi)v_i
    + \tilde{\sigma}_i\phi
    + \sum_{j=1}^{i}\tilde{\kappa}_{ij}v_j,
    \quad i=1,2,\ldots,k,
\end{align}
where $\tilde{\sigma_i}$ and $\tilde{\kappa}_{ij}$ are Lagrange multipliers to enforce the constraints \(\langle \phi, v_i \rangle=0\) and \(\langle v_i, v_j \rangle = \delta_{ij} \; (j = 1, \dots, i)\), respectively. Taking the time derivative of these constraints gives
\(
\langle \partial_t\phi, v_i\rangle+\langle \partial_t v_i,\phi\rangle=0\) and \(
\langle \partial_t v_i, v_j\rangle+\langle \partial_t v_j, v_i\rangle=0.
\)
Following the derivation of \eqref{eqs_for_multiplier} and considering the constraints in \eqref{eq:RRmin-idxk}, we obtain
\begin{align}
    \tilde{\sigma}_i
    &= -\langle F(\phi),v_i\rangle,
    \label{eq:sigma_tilde}\quad i=1,2,\ldots,k,\\
    \tilde{\kappa}_{ij}
    &= (2-\delta_{ij})\langle\hat{H}(\phi)v_i,v_j\rangle ,
    \quad 1\le j\le i\le k.
    \label{eq:kappa_tilde}
\end{align}
Combining \eqref{eq:CGADk-phi-usingG} and \eqref{eq:CGADk-vi-usingLphi} with the Lagrange multipliers \eqref{eq:sigma_tilde} and \eqref{eq:kappa_tilde} yields the CGAD  for capturing index-\(k\) excited states.

Substituting the Lagrange multipliers \eqref{eq:sigma_tilde} and \eqref{eq:kappa_tilde} into \eqref{eq:CGADk-vi-usingLphi}, we collect the coefficients of $\phi$ and $v_j$, respectively, and obtain the following explicit form of a CGAD system
    \begin{equation}
		\left\{\begin{aligned}
			\partial_t\phi
			&=
			\frac{1}{2}\Delta \phi
			- V_d\phi
			- \beta_d |\phi|^2\phi
			- \lambda_d \Phi_{\mathrm{dip}}(|\phi|^2)\phi
			+ \mu(\phi)\phi + 2\displaystyle\sum_{j=1}^k \xi_j v_j ,
			\\
			\partial_t v_i
			&=
			\frac{1}{2}\Delta v_i
			- V_d v_i
			- 3\beta_d |\phi|^2 v_i
			- \lambda_d \Phi_{\mathrm{dip}}(|\phi|^2)v_i
			- 2\lambda_d  \Phi_{\mathrm{dip}}(\phi v_i) \phi
			+ \sigma_i \phi+ \displaystyle\sum_{j=1}^i \kappa_{ij} v_j, \\
            &\qquad (i = 1,2, \ldots, k)
		\end{aligned} \right.
		\label{eq:cgad_system}
	\end{equation}
where
\begin{align}
    \xi_j&=\langle F(\phi),v_j\rangle =\int_{\mathcal{D}}\! \bigg(
	\frac{1}{2}\nabla\phi \cdot \nabla v_j
	+ \big[V_d 
	+ \beta_d |\phi|^2 
	+ \lambda_d\Phi_{\mathrm{dip}}(|\phi|^2)-\mu(\phi)\big]\phi v_j
	\!\bigg)\mathrm{d}\mathbf{x}, \label{eq:xi} \\
    \sigma_i &=\tilde\sigma_i+\langle L_{\phi}\phi,  v_i \rangle
    = \langle L_{\phi}\phi-F(\phi),  v_i \rangle
	= \int_{\mathcal{D}}\! \bigg(
			2\beta_d|\phi|^2 \phi v_i
			+ 2\lambda_d \Phi_{\mathrm{dip}}(|\phi|^2)\phi v_i + \mu(\phi)\phi v_i\!\bigg) \mathrm{d}\mathbf{x}, \label{eq:lagrange-sigmai} \\
            \kappa_{ij} 
            &=\tilde{\kappa}_{ij}+\delta_{ij}\mu(\phi)=(2-\delta_{ij})\big\langle (L_{\phi}-\mu(\phi))v_i, v_j \big\rangle+\delta_{ij}\mu(\phi) \nonumber \\
            &= (2-\delta_{ij}) \!\int_{\mathcal{D}}\! \bigg(\!
			\frac{1}{2}\nabla v_i \!\cdot\! \nabla v_j
			\!+\! \left[V_d
			\!+\! 3\beta_d |\phi|^2 
			\!+\! \lambda_d \Phi_{\mathrm{dip}}(|\phi|^2)
            \!-\! \mu(\phi)\right] v_i v_j
			\!+\! 2\lambda_d \Phi_{\mathrm{dip}}(\phi v_i)\phi v_j \!\bigg)\mathrm{d}\mathbf{x}
            + \delta_{ij}\mu(\phi),  
            \label{eq:lagrange-kappaij} 
	\end{align}
		The initial data  \((\phi(\cdot,0),v_1(\cdot,0),\cdots,v_k(\cdot,0))=(\phi^0,v_1^{0},\cdots,v_k^{0})\) of \eqref{eq:cgad_system} is assumed to satisfy the constraints \eqref{eq:constraints}. By direct calculations, one can verify that the system \eqref{eq:cgad_system}-\eqref{eq:lagrange-kappaij} automatically preserves the constraints \eqref{eq:constraints}.

Following the same idea as in the linear stability analysis of GAD \cite{weinan2011gentlest} and its constrained and high-index variants, such as CGAD \cite{liu2023constrained} and CHiSD \cite{chisd}, we can conclude that, under certain conditions, a linearly stable steady state of the dynamical system \eqref{eq:cgad_system} gives exactly a nondegenerate index-$k$ constrained saddle point of the GP energy functional and $k$ eigenfunctions of the projected Hessian associated with negative eigenvalues, as stated in the following theorem. For the proof, we refer to Theorem~4.2 in \cite{liu2023constrained} or Theorem~1 in \cite{chisd}; it is omitted here for brevity.

\begin{theorem}
    Let \((\phi^*, v_1^*, \dots, v_k^*)\in X^{k+1}\) satisfy the constraints \eqref{eq:constraints}. Then \((\phi^*, v_1^*, \dots, v_k^*)\) is a steady state of the dynamical system \eqref{eq:cgad_system}, equivalently \eqref{eq:CGADk-phi-usingG} and \eqref{eq:CGADk-vi-usingLphi}, if and only if \(\phi^*\) is a constrained critical point of \(E\) on \(\mathcal{S}\) and \(v_i^*,\) \(i = 1, 2, \dots, k,\) are eigenfunctions of \(\hat{H}(\phi^*)\). Moreover,  \((\phi^*, v_1^*, \dots, v_k^*)\) is a linearly stable steady state if and only if $\phi^*$ is a nondegenerate index-$k$ constrained saddle point of $E$ on \(\mathcal{S}\) and the eigenvalues $\{\omega_j^*\}$ of \(\hat{H}(\phi^*)\) satisfy \(\omega_1^* < \omega_2^* < \cdots < \omega_k^* < 0 < \omega_{k+1}^* \le \omega_{k+2}^* \le \cdots\) with $v_i^*$ the eigenfunction corresponding to $\omega_i^*$, $i=1,2,\ldots,k$. 
\end{theorem}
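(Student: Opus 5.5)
The proof splits into two parts: a characterization of the steady states, and a linear stability analysis around them.

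\emph{Steady states.} I would work with the form \eqref{eq:CGADk-phi-usingG}, \eqref{eq:CGADk-vi-usingLphi}. Since $v_j\in T_\phi\mathcal{S}$ are orthonormal, set $\partial_t\phi=0$ and take the inner product with $v_i$. This gives $-\langle F,v_i\rangle+2\langle F,v_i\rangle=0$, so $\xi_i=0$ for all $i$. Substituting back yields $F(\phi^*)=0$, i.e., $\phi^*$ is a constrained critical point. With $F(\phi^*)=0$ we have $\tilde\sigma_i=0$. The equation $\partial_t v_i=0$ then reads $\hat H v_i=\sum_{j\le i}\tilde\kappa_{ij}v_j$, and I would argue by induction on $i$. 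For $i=1$ this gives $\hat Hv_1=\tilde\kappa_{11}v_1$. Assume $v_1,\dots,v_{i-1}$ are eigenfunctions. Self-adjointness then gives $\langle \hat Hv_i,v_j\rangle=\omega_j\langle v_i,v_j\rangle=0$ for $j<i$, so $\tilde\kappa_{ij}=0$ for $j<i$ and $v_i$ is an eigenfunction. The converse direction is a direct substitution.

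\emph{Linearization.} At the steady state I would perturb $\phi=\phi^*+\epsilon u$ and $v_i=v_i^*+\epsilon w_i$, with $(u,w_i)$ in the tangent space of the constraint set \eqref{eq:constraints}. The $\phi$-equation linearizes to
\[
\partial_t u=-\Big(I-2\sum_j v_j^*\langle v_j^*,\cdot\rangle\Big)\hat H(\phi^*)u .
\]
The terms from varying $v_j$ drop out because they are multiplied by $\langle F(\phi^*),v_j\rangle=0$. Because the $v_j^*$ are eigenfunctions, this operator is diagonal in the eigenbasis of $\hat H(\phi^*)$. Its eigenvalues are $\omega_j^*$ for $j\le k$ and $-\omega_j^*$ for $j>k$. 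The $u$-block is therefore stable iff $\omega_1^*,\dots,\omega_k^*<0<\omega_{k+1}^*$. This requires the spanned eigenvalues to be exactly the negative ones: $\phi^*$ has index exactly $k$ and is nondegenerate.

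\emph{Coupled $(u,w)$ system.} The linearized $w_i$-equation contains a coupling term linear in $u$, coming from $\hat H'(\phi^*)u\,v_i^*$ and the multipliers. It is essential that $u$ does not depend on $w$. So the full Jacobian is block lower-triangular, with the $u$-block above and a $w$-block given by the linearized eigenvector flow at fixed $\phi^*$. Its spectrum is the union of the two blocks' spectra.

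\emph{The $w$-block.} For this block, I would expand $w_i$ in the eigenbasis $\{\phi^*\}\cup\{v_m^*\}$. The constraints fix the $\phi^*$ component through $u$, and they force $\langle w_i,v_j^*\rangle+\langle w_j,v_i^*\rangle=0$. Using \eqref{eq:kappa_tilde}, the component of $w_i$ along $v_m^*$ with $m>k$ decays at rate $\omega_m^*-\omega_i^*$. The component along $v_j^*$ with $j<i$ decays at rate $\omega_i^*-\omega_j^*$, via the factor $2$ in $\tilde\kappa_{ij}$. This is the triangular bookkeeping of Theorem~4.2 in \cite{liu2023constrained}. Stability thus requires the strict ordering $\omega_1^*<\dots<\omega_k^*$, with $v_i^*$ attached to $\omega_i^*$, and $\omega_i^*<\omega_m^*$ for $m>k$. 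The latter is automatic once $\omega_k^*<0<\omega_{k+1}^*$.

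\emph{Main obstacle.} I expect the hardest step to be this $w$-block computation, in particular checking that the asymmetric multipliers $(2-\delta_{ij})$ yield exactly the rates $\omega_i^*-\omega_j^*$ with no zero modes. There is also a functional-analytic point: $\hat H(\phi^*)$ must have the spectral decomposition assumed in the Definition, with $-\tfrac12\Delta+V_d$ dominating the DDI terms by Lemma~\ref{lem:Phi-dip-boundedness}. This is what justifies the eigenbasis expansion. Following the paper, I would take the latter as given and treat linear stability as negativity of the spectrum.
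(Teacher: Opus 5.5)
Your proposal is correct and follows the standard argument that the paper omits and defers to (Theorem~4.2 of \cite{liu2023constrained}, Theorem~1 of \cite{chisd}). You characterize steady states via $\xi_i=0\Rightarrow F(\phi^*)=0$ and $\tilde\kappa_{ij}=0$ for $j<i$, and you analyze stability through the block lower-triangular Jacobian: the $u$-block has eigenvalues $\omega_j^*$ ($j\le k$) and $-\omega_m^*$ (otherwise), and the $w$-block, after using $\langle w_i,v_i^*\rangle=0$ and $\langle w_i,v_j^*\rangle=-\langle w_j,v_i^*\rangle$, has eigenvalues $\omega_j^*-\omega_i^*$ ($j<i$) and $\omega_i^*-\omega_m^*$ ($m>k$), which are exactly the decay rates you state. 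One small precision: in the linearized $\phi$-equation the term $2\langle F(\phi^*),w_j\rangle v_j^*$ vanishes because $F(\phi^*)=0$ itself, not merely because $\langle F(\phi^*),v_j^*\rangle=0$.
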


\section{Numerical scheme}\label{sec:scheme}
In this section, we present the numerical scheme for the continuous dynamical system \eqref{eq:cgad_system}. Taking into account the nonlocal nature of the dipolar interaction and the complex spatial structure of higher-order excited states, we adopt different computational strategies for the dipolar interaction term in different dimensions in order to balance computational efficiency and the accuracy of spatial discretization. For the other terms, we use the sine-pseudospectral method for spatial discretization and employ a stable semi-implicit scheme for time integration. For simplicity, the following description is presented for 2D cases ($d = 2$), and the cases of other dimensions can be straightforwardly generalized.

\subsection{Semi-implicit sine-pseudospectral discretization}\label{time discretization}
The initial data $(\phi^0, v_1^0, \ldots, v_k^0)$ is chosen such that the constraints \eqref{eq:constraints} are satisfied. A suitable numerical discretization of the dynamical system \eqref{eq:cgad_system} that preserves the orthonormality constraints \eqref{eq:constraints} can produce an iterative algorithm for computing index-$k$ excited states. Here we adopt a simple and easy-to-implement predictor-corrector semi-discretization in time: first, a semi-implicit Euler scheme that treats only the Laplacian term implicitly is used to obtain an intermediate state; then, the Gram--Schmidt orthonormalization is applied in the corrector step to yield the next iterate. 
    
Let \(t_n = n\tau\) with \(n\geq0\), where \(\tau>0\) is the chosen time step size. Denote by \((\phi^n, v_1^n, \ldots, v_k^n)\) the numerical solution of the dynamical system \eqref{eq:cgad_system} at \(t=t_n\), and let GSON denote the standard Gram--Schmidt orthonormalization procedure. The predictor-corrector semi-implicit scheme for discretizing \eqref{eq:cgad_system} reads as
\begin{equation}
    \left\{\begin{aligned}
        &\frac{\tilde{\phi}^{n+1} - \phi^n}{\tau}
        =
        \frac{1}{2}\Delta\tilde{\phi}^{n+1}
        - V_d \phi^n
        - \beta_d|\phi^n|^2\phi^n
        - \lambda_d \Phi_{\mathrm{dip}}(|\phi^n|^2)\phi^n
        + \mu^n\phi^n
        + 2\sum_{j=1}^k \xi_j^n v_j^n,
        \\
        &\frac{\tilde{v}_i^{n+1} - v_i^n}{\tau}
        =
        \frac{1}{2}\Delta\tilde{v}_i^{n+1}
        - V_d v_i^n
        - 3\beta_d|\phi^n|^2v_i^n
        - \lambda_d \Phi_{\mathrm{dip}}(|\phi^n|^2)v_i^n
        - 2\lambda_d \Phi_{\mathrm{dip}}(\phi^n v_i^n)\phi^n
        \!+\! \sigma_i^n\phi^n
        \!+\! \sum_{j=1}^i \kappa_{ij}^n v_j^n, 
        \\
        &\qquad\qquad (i=1,2,\ldots,k) \\
        & \big[\phi^{n+1}, v_1^{n+1}, \dots, v_k^{n+1}\big]
        =
        \mathrm{GSON}\left(\big[\tilde{\phi}^{n+1}, \tilde{v}_1^{n+1}, \dots, \tilde{v}_k^{n+1}\big]\right),
    \end{aligned}\right. 
    \label{eq:cgadk-SI}
\end{equation}
Here, \(\mu^n=\mu(\phi^n)\) is defined by \eqref{eq:mu}, while \(\xi_i^n=\langle G(\phi^n),v_i^n\rangle\), \(\sigma_i^n=\langle L_{\phi^n}\phi^n-G(\phi^n),v_i^n\rangle\), and \(\kappa_{ij}^n=(2-\delta_{ij})\langle L_{\phi^n}v_i^n,v_j^n\rangle\) are the values of the quantities defined in \eqref{eq:xi}, \eqref{eq:lagrange-sigmai}, and \eqref{eq:lagrange-kappaij}, respectively, evaluated at the iterates \((\phi^n, v_1^n, \ldots, v_k^n)\) after the GSON correction. It is worth noting that these values are equivalent to their original forms, since GSON prevents the accumulation of constraint violations. We remark that, in ill-conditioned cases, GSON in \eqref{eq:cgadk-SI} can also be replaced by more numerically robust orthonormalization procedures.

In our numerical computation, the iterative scheme for computing index-$k$ excited states of a dipolar BEC is stopped when the following stopping criteria are satisfied:
\begin{equation}
    \left\| F_\phi^n \right\|_\infty < \varepsilon_{tol},\qquad
    \left\| F_{v_i}^n \right\|_\infty < \varepsilon_{tol},\quad i = 1, 2, \dots, k,
    \label{eq:stopping_criteria}
\end{equation}
where \(F_\phi^n:=F(\phi^n)\) is the residual of the Euler--Lagrange equation 
\eqref{eq:time_independent_GPE} at \(\phi^n\), 
 \(F_{v_i}^n:= \hat{H}(\phi^n)v_i^n
-\bigl\langle \hat{H}(\phi^n)v_i^n, v_i^n \bigr\rangle v_i^n
=L_{\phi^n}v_i^n
-(\sigma_i^n+\xi_i^n)\phi^n
-\kappa_{ii}^n v_i^n\) is the residual of the 
eigenvalue problem for the projected Hessian operator 
\(\hat{H}(\phi^n)\) \eqref{eq:ProjHessian} evaluated at \(v_i^n\), and $\varepsilon_{tol} > 0$ is a given tolerance.

For spatial discretization, we consider a computational domain $\mathcal{D}=[-L,L]^2$, equipped with homogeneous Dirichlet boundary conditions for \(\phi\) and \(v_i\), \(i=1,\ldots,k\). Denote $x_l=-L+lh$ and $y_m=-L+mh$, $l,m=0,1,\ldots,N$, where $h=2L/N$ with $N$ an even positive integer. Let \(\phi^n_{lm}\) be the numerical approximation for the value of a function \(\phi^n\) at \((x_l, y_m)\), and similarly, let \((v_i^n)_{lm}\) be the numerical approximation for the value of the function \(v_i^n\) at \((x_l, y_m)\). The discrete sine transform (DST) approximations of \(\phi^n\) and \(v_i^n\) are given as

\[
(\widehat{\phi^n})_{pq} = \frac{4}{N^2} \sum_{l=1}^{N-1} \sum_{m=1}^{N-1} \phi^n_{lm} \sin\left(\frac{l p \pi}{N}\right) \sin\left(\frac{m q \pi}{N}\right),
\;\;\;
(\widehat{v_i^n})_{pq} = \frac{4}{N^2} \sum_{l=1}^{N-1} \sum_{m=1}^{N-1} (v_i^n)_{lm} \sin\left(\frac{l p \pi}{N}\right) \sin\left(\frac{m q \pi}{N}\right).
\]
The DST approximation to the operator \(\Delta\) is given as
\[
(\Delta_h \phi^n)_{lm} = -\sum_{p=1}^{N-1} \sum_{q=1}^{N-1} \left( (\varrho_p^x)^2 + (\varrho_q^y)^2 \right) (\widehat{\phi^n})_{pq} \sin\left(\frac{l p \pi}{N}\right) \sin\left(\frac{m q \pi}{N}\right),
\]
\[
(\Delta_h v_i^n)_{lm} = -\sum_{p=1}^{N-1} \sum_{q=1}^{N-1} \left( (\varrho_p^x)^2 + (\varrho_q^y)^2 \right) (\widehat{v_i^n})_{pq} \sin\left(\frac{l p \pi}{N}\right) \sin\left(\frac{m q \pi}{N}\right),
\]
with \(\varrho_p^x = \frac{p\pi}{2L}, \; \varrho_q^y = \frac{q\pi}{2L}\), and \((\widehat{\phi^n})_{pq}\) being the discrete sine coefficients of the mesh function \(\phi^n_{lm}\).

Since the computation of the dipolar term requires special attention at each time step and directly using the DST leads to low accuracy, we will present a numerical scheme in the next section that exploits the specific structure of the dipolar term. For the purpose of presenting the fully discrete formulation here, we assume that the dipolar term is approximated at the grid point \((x_l, y_m)\) by \((\Phi_{\mathrm{dip}}(|\phi^n|^2))_{lm}\).

Therefore, at $\phi^n$, the discretized $L^2$-norm is given as $\|\phi^n\|_h^2
= h^2 \sum_{l=1}^{N-1} \sum_{m=1}^{N-1}
|\phi^n_{lm}|^2$, and the discretized energy and chemical potential can be computed as

\begin{align*}
E_h(\phi^n)
&= h^2 \sum_{l=1}^{N-1} \sum_{m=1}^{N-1}
\left(
-\frac{1}{2}\,\phi^n_{lm} (\Delta_h \phi^n)_{lm}
+ V_d(x_l, y_m)\,|\phi^n_{lm}|^2
+ \frac{\beta_d}{2} |\phi^n_{lm}|^4
+ \frac{\lambda_d}{2} (\Phi_{\mathrm{dip}}(|\phi^n|^2))_{lm}|\phi^n_{lm}|^2
\right), \\
\mu^n &= E_h(\phi^n)
    + \frac{h^2}{2} \sum_{l=1}^{N-1} \sum_{m=1}^{N-1}
    \left(
    \beta_d |\phi^n_{lm}|^4
    + \lambda_d (\Phi_{\mathrm{dip}} (|\phi^n|^2))_{lm}|\phi^n_{lm}|^2
    \right). 
\end{align*}
The discretization of scalars $\xi_i^n$, $\kappa_{ij}^n$ and $\sigma_i^n$ follows the same procedure, and is omitted here for brevity.

For the implementation of the semi-implicit spatial discretizations presented above, the main computational cost lies in solving the linear elliptic equations with constant coefficients. These equations can be rewritten in a compact form as:
\begin{equation}
    \left( 1 - \frac{\tau}{2}\Delta_h \right) (\tilde{\phi}^{n+1})_{lm} = (\mathcal{R}_{\phi}^n)_{lm}, \quad \quad
    \left( 1 - \frac{\tau}{2}\Delta_h \right) (\tilde{v}_i^{n+1})_{lm} = (\mathcal{R}_{v_i}^n)_{lm},
\end{equation}
where the right-hand side arrays are given by
\begin{align}
    (\mathcal{R}_{\phi}^n)_{lm} 
    &= \phi^n_{lm} + \tau \bigg[ - V_d(x_l, y_m)\phi^n_{lm} 
    - \beta_d |\phi^n_{lm}|^2 \phi^n_{lm} 
    - \lambda_d (\Phi_{\mathrm{dip}}(|\phi^n|^2))_{lm} \phi^n_{lm} 
    + \mu^n \phi^n_{lm} 
    + 2\sum_{j=1}^k \xi_j^n (v_j^n)_{lm} \bigg], 
    \label{H_phi_def} \\[4pt]
    (\mathcal{R}_{v_i}^n)_{lm} 
    &= (v_i^n)_{lm} + \tau \bigg[ - V_d(x_l, y_m)(v_i^n)_{lm} 
    - 3\beta_d |\phi^n_{lm}|^2 (v_i^n)_{lm} 
    - \lambda_d (\Phi_{\mathrm{dip}}(|\phi^n|^2))_{lm} (v_i^n)_{lm} \nonumber \\
    &\qquad\qquad\qquad\qquad 
    - 2\lambda_d \phi^n_{lm} (\Phi_{\mathrm{dip}}(\phi^n v_i^n))_{lm} 
    + \sigma_i^n \phi^n_{lm} 
    + \sum_{j=1}^i \kappa_{ij}^n (v_j^n)_{lm} \bigg]. 
    \label{H_vi_def}
\end{align}

Taking the DST on both sides of the elliptic equations, the system is decoupled in the spectral space. Consequently, the intermediate states $\tilde{\phi}^{n+1}$ and $\tilde{v}_i^{n+1}$ can be solved explicitly and efficiently by taking the inverse DST
\begin{equation}
    \left\{\begin{aligned}
        (\tilde{\phi}^{n+1})_{lm} = \sum_{p=1}^{N-1} \sum_{q=1}^{N-1} \frac{(\widehat{\mathcal{R}_{\phi}^n})_{pq}}{1 + \frac{\tau}{2} \left( (\varrho_p^x)^2 + (\varrho_q^y)^2 \right)} \sin\left(\frac{l p \pi}{N}\right) \sin\left(\frac{m q \pi}{N}\right), \\
        (\tilde{v}_i^{n+1})_{lm} = \sum_{p=1}^{N-1} \sum_{q=1}^{N-1} \frac{(\widehat{\mathcal{R}_{v_i}^n})_{pq}}{1 + \frac{\tau}{2} \left( (\varrho_p^x)^2 + (\varrho_q^y)^2 \right)} \sin\left(\frac{l p \pi}{N}\right) \sin\left(\frac{m q \pi}{N}\right),
    \end{aligned}\right.\label{eq:semi-implicit}
\end{equation}
where $(\widehat{\mathcal{R}_{\phi}^n})_{pq}$ and $(\widehat{\mathcal{R}_{v_i}^n})_{pq}$ are the DST coefficients of the mesh functions $(\mathcal{R}_{\phi}^n)_{lm}$ and $(\mathcal{R}_{v_i}^n)_{lm}$.

\subsection{Dipole-dipole interactions based on the NUFFT} \label{section:numicel sheme}
At each time step, the primary computational challenge in solving the aforementioned semi-implicit scheme lies in the efficient and highly accurate evaluation of the dipole convolution potentials \(\Phi_{\mathrm{dip}}(|\phi^n|^2)\) and \(\Phi_{\mathrm{dip}}(\phi^n v_i^n)\), \(i=1,\ldots,k\). 
We consider \(f^n\) defined on the computational domain \(\mathcal{D}\) satisfying the homogeneous Dirichlet boundary conditions, and present the numerical scheme for computing \(\Phi_{\mathrm{dip}}(f^n)\). 

The dipole potential \(\Phi_{\mathrm{dip}}(f^n)\) can be expressed in Fourier space as
\begin{equation}\label{dipolar_potential in Fourier space}
\Phi_{\mathrm{dip}}(f^n)(\mathbf x)
=
\frac{1}{(2\pi)^d}
\int_{\mathbb R^d}
e^{i\mathbf k\cdot\mathbf x}
\widehat{U_{\mathrm{dip}}^{(d)}}(\mathbf k)
\widehat{\mathcal E_0f^n}(\mathbf k)
\,\mathrm d\mathbf k,
\end{equation}
where \(\widehat{\mathcal E_0f^n}(\mathbf k)
=\int_{\mathbb R^d}
(\mathcal E_0f^n)(\mathbf x)
e^{-i\mathbf k\cdot\mathbf x}\,\mathrm d\mathbf x=
\int_{\mathcal D}
f^n(\mathbf x)
e^{-i\mathbf k\cdot\mathbf x}\,\mathrm d\mathbf x\)
and the Fourier transform of the dipole kernel defined in 
\eqref{ker3d} and \eqref{eq:U_dip_lower_d} reads as
\begin{align}\label{eq:U_dip_Fourier}
\widehat{U_{\mathrm{dip}}^{(d)}}(\mathbf k)=
\left\{\begin{aligned}
& \frac{1-3n_3^2}{2}
+\frac{3(3n_3^2-1)}{4}
\varepsilon^2k^2e^{\varepsilon^2k^2/2}E_1(\varepsilon^2k^2/2), & d=1, \\
&(3n_3^2-1)
+\frac{3\sqrt{2\pi}\varepsilon}{2}
\frac{(n_1k_x+n_2k_y)^2-n_3^2|\mathbf k|^2}{|\mathbf k|},&d=2, \\
&-1+3\frac{(\mathbf n\cdot\mathbf k)^2}{|\mathbf k|^2},&d=3,
\end{aligned}\right.
\end{align}
where \(E_1(s)=\int_s^\infty \frac{e^{-t}}{t}\,\mathrm dt,  s>0\). 
Clearly, \(\widehat{U_{\mathrm{dip}}^{(1)}}(k)\)  is smooth, and a standard FFT method works well for \eqref{dipolar_potential in Fourier space}. In contrast, $\widehat{U_{\mathrm{dip}}^{(2)}}(\mathbf k)$ is continuous but non-differentiable at \(\mathbf k=\mathbf 0\), whereas $\widehat{U_{\mathrm{dip}}^{(3)}}(\mathbf k)$ has a singularity at \(\mathbf k=\mathbf 0\). This nonsmoothness/singularity at the origin leads to the ``numerical locking'' phenomenon when the standard FFT is applied on a uniform Cartesian grid, thereby limiting the attainable accuracy on a fixed computational domain~\cite{bao2016accurate,jiang2014fast}.

To overcome the ``numerical locking'', we adopt a polar coordinate transformation in 2D and a spherical coordinate transformation in 3D. The nonsmoothness/singularity of $\widehat{U_{\mathrm{dip}}^{(d)}}(\mathbf k)$ at the origin is canceled out by the Jacobian determinant introduced by the coordinate transformations, and hence the resulting integrands are smooth. In 2D, the radial integral is evaluated by high-order Gauss--Legendre quadrature, while the angular integral is discretized by the trapezoidal rule. In 3D, Gauss--Legendre quadrature is applied to both the radial and polar-angle integrals, whereas the azimuthal-angle integral is discretized by the trapezoidal rule. Since the quadrature points generated in the Fourier space are nonequispaced, the resulting discrete summation is evaluated by the NUFFT algorithm rather than the standard FFT. This allows the high-order quadrature to be combined with fast Fourier evaluation, leading to an efficient computation of the dipolar interaction.

To evaluate the nonlocal term in \eqref{dipolar_potential in Fourier space}, we first zero-pad the source array \(f^n\) according to \eqref{zero_padd} and extend the physical grid to the enlarged computational domain \(\widetilde{\mathcal D}=[-3L,3L]^d\). However, applying the NUFFT directly to the entire transformed region of \(\widetilde{\mathcal D}\) in polar or spherical coordinates is more expensive than the standard FFT. Therefore, to reduce the computational cost, we introduce a dimension-dependent radial cutoff function \(W_d(|\mathbf k|)\) and split the truncated Fourier integral into
\[
\begin{aligned}
\Phi_{\mathrm{dip}}(f^n)(\mathbf x)
&\approx
\frac{1}{(2\pi)^d}\!\!\int_{\mathbb R^d}\!
e^{i\mathbf k\cdot\mathbf x}
\bigl(1\!-\!W_d(|\mathbf k|)\bigr)
\widehat{U_{\mathrm{dip}}^{(d)}}(\mathbf k)
\widehat{f^n}(\mathbf k)
\,\mathrm d\mathbf k +
\frac{1}{(2\pi)^d}\!\!\int_{B_P(0)}\!
e^{i\mathbf k\cdot\mathbf x}
W_d(|\mathbf k|)
\widehat{U_{\mathrm{dip}}^{(d)}}(\mathbf k)
\widehat{f^n}(\mathbf k)
\,\mathrm d\mathbf k\\
&:=\Phi_{I_1}(f^n)+\Phi_{I_2}(f^n),
\end{aligned}
\]
where \(B_P(0)=\{\mathbf k\in\mathbb R^d:|\mathbf k|\le P\}\) with \(P>0\) is chosen sufficiently large so that the Fourier truncation error is negligible.
In our implementation, the dimension-dependent radial cutoff functions are chosen as
\[
W_2(r)=
\begin{cases}
1, & 0\le r\le R_0,\\[4pt]
\dfrac12\operatorname{erfc}\!\left(
12\,\dfrac{r-(R_0+R_1)/2}{R_1-R_0}
\right), & R_0<r<R_1,\\[8pt]
0, & r\ge R_1,
\end{cases}
\qquad
W_3(r)=e^{-r^2/a},
\]
where \(\operatorname{erfc}(x)=\frac{2}{\sqrt\pi}\int_x^\infty e^{-t^2}\,\mathrm dt\) and \(a>0\) is selected according to the Fourier mesh size. For \(d=2\), \(R_0\) determines the neighborhood in which the low-frequency contribution is fully retained, while \(R_1\) specifies the outer boundary of the transition region. For \(d=3\), the Gaussian cutoff \(W_3\) localizes the NUFFT contribution near the origin and decays rapidly as \(r\) increases.
With these choices, \(\bigl(1-W_d(|\mathbf k|)\bigr)\widehat{U_{\mathrm{dip}}^{(d)}}(\mathbf k)\) is regularized near the origin and $\Phi_{I_1}(f^n)$ can be evaluated on a uniform Cartesian Fourier grid by the standard FFT. The  part $\Phi_{I_2}(f^n)$ is evaluated by the NUFFT on a polar/spherical grid.

Furthermore, taking \(d=2\) as an example, with the case \(d=3\) being analogous, we present the specific numerical scheme for computing \(\Phi_{\mathrm{dip}}(f^n)\) by computing the two parts \(\Phi_{I_1}(f^n)\) and \(\Phi_{I_2}(f^n)\) separately.

\begin{itemize}
    \item {\em Computation for \(\Phi_{I_1}(f^n)\) based on FFT.} 
Let \(f_{lm}^n=f^n(x_l,y_m)\) and embed it at the center of a triple-size array with \(3N\) grid points in each direction and the same mesh size \(h\). The enlarged computational domain is
\(\widetilde {\mathcal{D}}=[-3L,3L)^2\), with grid points
\(\widetilde x_{l'}=-3L+l'h\), \(\widetilde y_{m'}=-3L+m'h\), \(l',m'=0,\ldots,3N-1\).
The zero-padded source array \(\widetilde f_{l'm'}^n\) is defined by
\[
\widetilde f_{l'm'}^n=\begin{cases}
f_{l'-N,m'-N}^n,
&N\le l',m'\le2N-1,\\[4pt]
0,&\text{otherwise}.
\end{cases}
\]
    The Cartesian Fourier nodes are given by \(\mathbf k_{p'q'}=(k_{p'}^x,k_{q'}^y)=(\frac{p'\pi}{3L},\frac{q'\pi}{3L})\), \(p',q'=-\frac{3N}{2},\ldots,\frac{3N}{2}-1\).
The discrete Fourier transform (DFT) of the zero-padded source at these nodes is approximated by
\[
\widehat{\widetilde f^n}(\mathbf k_{p'q'})
=
\sum_{l'=0}^{3N-1}
\sum_{m'=0}^{3N-1}
\widetilde f_{l'm'}^n
e^{-i(k_{p'}^x\widetilde x_{l'}+
k_{q'}^y\widetilde y_{m'})}.
\]
Hence, the value of \(\Phi_{I_1}(f^n)\) at \((x_l,y_m)\) can be computed as
\[
\bigl(\Phi_{I_1}(f^n)\bigr)_{lm}
=\frac{1}{9N^2}
\sum_{p'=-3N/2}^{3N/2-1}\sum_{q'=-3N/2}^{3N/2-1}\left(1-W_2(|\mathbf k_{p'q'}|)\right)
\widehat{U_{\mathrm{dip}}^{(2)}}(\mathbf k_{p'q'}) \widehat{\widetilde f^n}(\mathbf k_{p'q'}) e^{i(k_{p'}^x x_l+k_{q'}^y y_m)}.
\]

\item {\em Computation for \(\Phi_{I_2}(f^n)\) based on NUFFT.}
For the  part \(\Phi_{I_2}(f^n)\), we introduce the polar coordinates \(\mathbf k=r(\cos\theta,\sin\theta)\), \(\mathrm d\mathbf k=r\,\mathrm dr\,\mathrm d\theta\). It follows that
\[
\Phi_{I_2}(f^n)(\mathbf x)
=
\frac{1}{(2\pi)^2}
\int_0^P\int_0^{2\pi}
e^{ir(x\cos\theta+y\sin\theta)}
W_2(r)\,\mathcal K(r,\theta)\,
\widehat{f^n}\!\left(r\cos\theta,r\sin\theta\right)
\,\mathrm d\theta\,\mathrm dr,
\]
where
\[
\mathcal K(r,\theta)
:=
r\,\widehat{U_{\mathrm{dip}}^{(2)}}
\!\left(r\cos\theta,r\sin\theta\right)
=
(3n_3^2-1)r
+
\frac{3\sqrt{2\pi}\varepsilon}{2}
r^2
\left[
\left(n_1\cos\theta+n_2\sin\theta\right)^2-n_3^2
\right].
\]
It can be seen that the  factor \(1/|\mathbf k|=1/r\) in \(\widehat{U_{\mathrm{dip}}^{(2)}}\) is canceled by the Jacobian determinant \(r\) arising from the polar coordinate transformation. In particular, \(\mathcal K(r,\theta)\) contains no negative power of \(r\) and is smooth at \(r=0\). Let \(\{r_a,w_a\}_{a=1}^{N_r}\) be the shifted and scaled Gauss--Legendre nodes and weights on \([0,P]\), and \(\theta_b=\frac{2\pi b}{N_\theta}\), \(b=0,\ldots,N_\theta-1\) be the equispaced angular nodes with weight \(\Delta\theta=2\pi/N_\theta\). The corresponding nonequispaced Fourier nodes are denoted by \(\mathbf q_{ab}=(q_{ab}^x,q_{ab}^y):=r_a(\cos\theta_b,\sin\theta_b)\). The  nonuniform DFT of the zero-padded source at $(q_{ab}^x,q_{ab}^y)$ is given by
\[
\widehat{\widetilde f^n}(\mathbf q_{ab})
=
\sum_{l'=0}^{3N-1}
\sum_{m'=0}^{3N-1}
\widetilde f_{l'm'}^n
e^{-i(q_{ab}^x\widetilde x_{l'}+q_{ab}^y\widetilde y_{m'})}.
\]
Therefore, at the physical grid point \((x_l,y_m)\)
\[
\bigl(\Phi_{I_2}(f^n)\bigr)_{lm}
=
\frac{h^2}{2\pi N_{\theta}}
\sum_{a=1}^{N_r}
\sum_{b=0}^{N_\theta-1}
\widehat{\widetilde f^n}(\mathbf q_{ab})
W_2(r_a)\mathcal K(r_a,\theta_b)
w_a\,
e^{i(q_{ab}^x x_l+q_{ab}^y y_m)}.
\]
The use of the NUFFT preserves the high-order accuracy of the polar 
quadrature while enabling efficient evaluation of 
\(\Phi_{I_2}^n(f)(\mathbf x)\).
\end{itemize}

In  \eqref{eq:cgadk-SI}, we take \(f^n=|\phi^n|^2\) or \(f^n=\phi^n v_i^n\) (\(i=1,\ldots,k\)), evaluate
$\bigl(\Phi_{\mathrm{dip}}(f^n)\bigr)_{lm}=\bigl(\Phi_{I_1}(f^n)\bigr)_{lm}+\bigl(\Phi_{I_2}(f^n)\bigr)_{lm}$
and substitute the results into \eqref{H_phi_def} and \eqref{H_vi_def} accordingly. In this paper, we take \(R_0=\frac{\pi}{3L}\), \(R_1=32R_0\), \(P=1.2R_1\), \(N_r=N_{\theta}=80\) for 2D case and \(a=\frac{\pi}{3L}\) for 3D case.

\section{Numerical results}   \label{results}
The numerical algorithm presented in Section~\ref{sec:scheme} is employed to compute the excited states of 1D, 2D, and 3D dipolar BECs, with spatial step size \(h=\frac{1}{16}\), time step \(\tau=0.01\), and \(\varepsilon_{\mathrm{tol}}=10^{-12}\) in \eqref{eq:stopping_criteria}. In particular, we investigate the asymptotic behaviors of the excited-state energy and chemical potential as functions of the dipolar interaction parameter $\lambda_d$. Additionally, the energies and chemical potentials of the ground state and various excited states with different Morse indices are systematically compared. In the numerical experiments, we take the following harmonic oscillator potential with parameters $\gamma_x,\gamma_y,\gamma_z>0$ \cite{bao2013mathematical}:
\begin{equation}
    V_d(\mathbf{x})=V_{\text{ho}}(\mathbf{x}) := 
    \begin{cases}
         \gamma_x^2x^2/2, & d=1, \\[4pt]
        (\gamma_x^2x^2 + \gamma_y^2y^2)/2, & d=2, \\[4pt]
        (\gamma_x^2x^2 + \gamma_y^2y^2 + \gamma_z^2z^2)/2, & d=3.
    \end{cases}\label{V}
\end{equation}

The initial data for both $\phi^0$ and $\{v_i^0\}$ are constructed from the Hermite functions associated with the isotropic non-interacting linear harmonic oscillator. Specifically, denoting $x_1:=x$, $x_2:=y$, $x_3:=z$, for $\mathbf{x}=(x_1,\ldots,x_d)^T\in\mathbb{R}^d$ ($d=1,2,3$), the eigenfunctions of the linear eigenvalue problem \eqref{eq:linear} with $V_d(\mathbf{x}) =\frac{1}{2}\sum_{\alpha=1}^dx_{\alpha}^2$ in $\mathbb{R}^d$ read as
	\begin{equation*}
		\phi_\mathbf{j}^{\text{ho}}(\mathbf{x}) = 
		\prod_{\alpha=1}^d \hat{h}_{j_\alpha}\!\left(\, x_\alpha\right), \quad
		\mathbf{j} = (j_1, \dots, j_d) \in \mathbb{N}^d,
	\end{equation*}
with $\hat{h}_{j_{\alpha}}$ denoting the normalized 1D Hermite functions: 
\[
\hat{h}_{j_{\alpha}}(\xi) = \frac{1}{\pi^{1/4}\sqrt{2^{j_{\alpha}} {j_{\alpha}}!}} e^{-\xi^2/2} h_{j_{\alpha}}(\xi),\quad 
h_{j_{\alpha}}(\xi) = (-1)^{j_{\alpha}} e^{\xi^2} \frac{d^{j_{\alpha}}}{d\xi^{j_{\alpha}}}(e^{-\xi^2}),\quad
j_{\alpha} = 0, 1, \dots.
\]
In practice, for some given $\mathbf{j} = (j_1, \dots, j_d) \in \mathbb{N}^d$, we take
\[
    \phi^0=\phi_\mathbf{j}^{\mathrm{ho}},\quad
    \{v_i^0\}_{i=1}^{k} = \left\{\phi_{\mathbf{j}'}^{\mathrm{ho}} \,:\, |\mathbf{j}'|<j,\, \mathbf{j}'\in\mathbb{N}^d\right\},
\]
where $j:=|\mathbf{j}|=\sum_{\alpha=1}^{d}j_\alpha$ and \(k\), the number of initial unstable directions, is explicitly given as
\[
k=\#\{\mathbf{j}'\in\mathbb{N}^d:|\mathbf{j}'|<j\}=
\begin{cases}
j, & d=1,\\
\frac12j(j+1), & d=2,\\
\frac16j(j+1)(j+2), & d=3.
\end{cases}
\]
This choice of initial data is expected to target an excited state with Morse index \(k\).

\subsection{Numerical results in 1D}
	\begin{example}
		\label{1d_lambda}
		We take the computational domain to be \([-12,12]\), $\beta_1=10$, $\varepsilon=\sqrt{\frac{1}{80}}$, $n_3=1$, $V_1=V_{\mathrm{ho}}$ in \eqref{V} with $\gamma_x=1$. The ground state \(\phi_0^{\lambda_1}\) is initialized by setting \(\phi^0(x)=\phi_0^{\mathrm{ho}}(x)=\hat h_0(x)\), whereas for computing the index-\(k\) excited state \(\phi_k^{\lambda_1}\), \(k=1,2,\ldots,9\), we take $\phi^0(x)=\phi_k^{\mathrm{ho}}(x)=\hat h_k(x)$ and $v_i^0(x)=\phi_{i-1}^{\mathrm{ho}}(x)=\hat h_{i-1}(x)$, $i=1,2,\ldots,k$. Table~\ref{tab:combined_results_lambda} reports the energies and chemical potentials under various dipolar interaction strengths $\lambda_1$. Fig.~\ref{phi_lambda} plots the profiles of $\phi_k^{\lambda_1}(x)$, $k = 1, 2, \ldots, 9$, for different $\lambda_1 = 100, 500, 1500$. Moreover, the asymptotic behaviors of the energies of excited states in the weakly dipolar interaction regime, $0 < \lambda_1 \ll 1$, and the strongly dipolar interaction regime, i.e., $\lambda_1 \gg 1$, are shown in Fig.~\ref{fig:linear_relation}.

From the numerical results that are partially shown in Figs.~\ref{phi_lambda} and~\ref{fig:linear_relation} and Table \ref{tab:combined_results_lambda}, we have the following numerical observations
\begin{enumerate}
	\item For each fixed state index \(k\), as \(\lambda_1\) increases, the wave functions of  excited states become more spatially extended, while the number and ordering of their nodes remain unchanged. The energy \(E(\phi_k^{\lambda_1})\) and chemical potential \(\mu(\phi_k^{\lambda_1})\) both increase monotonically with \(\lambda_1\); for each fixed \(\lambda_1\), the energy and chemical potential orderings are
\[
E(\phi_0^{\lambda_1})<E(\phi_1^{\lambda_1})<\cdots<E(\phi_9^{\lambda_1}),\qquad
\mu(\phi_0^{\lambda_1})<\mu(\phi_1^{\lambda_1})<\cdots<\mu(\phi_9^{\lambda_1}).
\]
Therefore, the dipolar interaction changes the spatial shape but does not alter the nodal structure or energy hierarchy associated with the indices of the excited-state wave functions.
	
	\item In the weakly dipolar regime \(0<\lambda_1\ll1\), the numerical results suggest a first-order linear perturbative expansion:
\[
E(\phi_k^{\lambda_1})
=
E(\phi_k^0)+D_k\lambda_1+o(\lambda_1),
\qquad k=0,1,\ldots,9,
\]
where \(D_k=\frac12\int_{\mathcal{D}}\Phi_{\mathrm{dip}}(|\phi_k^0|^2)|\phi_k^0|^2\,\mathrm d\mathbf x\).
In the strongly dipolar regime, the numerical results suggest a sublinear power-law growth of the energies, consistent with
\[
E(\phi_k^{\lambda_1})
\sim C_*\lambda_1^{2/3},
\quad \lambda_1\to+\infty, \quad k=0,1,\ldots,9,
\]
where \(C_*\approx 0.16\) numerically.
\end{enumerate}
		\begin{table}[!t]
			\centering\footnotesize
			\caption{Energy \(E\) and chemical potential \(\mu\) of the ground and excited states in Example \ref{1d_lambda} for varying \(\lambda_1\) with fixed \(\beta_1=10\).}
			\label{tab:combined_results_lambda}
			\resizebox{0.99\textwidth}{!}{
				\begin{tabular}{lcccccccccc}
					\toprule
					\(\lambda_1\) & \(E(\phi_0^{\lambda_1})\) & \(E(\phi_1^{\lambda_1})\) & \(E(\phi_2^{\lambda_1})\) & \(E(\phi_3^{\lambda_1})\) & \(E(\phi_4^{\lambda_1})\) & \(E(\phi_5^{\lambda_1})\) & \(E(\phi_6^{\lambda_1})\) & \(E(\phi_7^{\lambda_1})\) & \(E(\phi_8^{\lambda_1})\) & \(E(\phi_9^{\lambda_1})\) \\
					\midrule
					0     & 1.94713 & 2.76538 & 3.64568 & 4.55841 & 5.49091 & 6.43654 & 7.39147 & 8.35325 & 9.32029 & 10.2914 \\
					0.01  & 1.94727 & 2.76550 & 3.64579 & 4.55852 & 5.49100 & 6.43663 & 7.39155 & 8.35333 & 9.32036 & 10.2915 \\
					1     & 1.96118 & 2.77802 & 3.65715 & 4.56893 & 5.50063 & 6.44561 & 7.39998 & 8.36129 & 9.32790 & 10.2987 \\
					10    & 2.08573 & 2.89068 & 3.75969 & 4.66312 & 5.58791 & 6.52708 & 7.47650 & 8.43356 & 9.39646 & 10.3640 \\
					100   & 3.19313 & 3.92384 & 4.72326 & 5.56435 & 6.43420 & 7.32508 & 8.23194 & 9.15127 & 10.0806 & 11.0180 \\
					500   & 6.83746 & 7.45906 & 8.15365 & 8.88652 & 9.64893 & 10.4356 & 11.2428 & 12.0675 & 12.9073 & 13.7602 \\
					1000  & 10.32755 & 10.89012 & 11.54053 & 12.2254 & 12.9359 & 13.6680 & 14.4192 & 15.1873 & 15.9707 & 16.7678 \\
					2000  & 15.98028 & 16.46818 & 17.07570 & 17.7186 & 18.3834 & 19.0661 & 19.7645 & 20.4773 & 21.2034 & 21.9418 \\
					\midrule
					\({\lambda_1}\) & \(\mu(\phi_0^{\lambda_1})\) & \(\mu(\phi_1^{\lambda_1})\) & \(\mu(\phi_2^{\lambda_1})\) & \(\mu(\phi_3^{\lambda_1})\) & \(\mu(\phi_4^{\lambda_1})\) & \(\mu(\phi_5^{\lambda_1})\) & \(\mu(\phi_6^{\lambda_1})\) & \(\mu(\phi_7^{\lambda_1})\) & \(\mu(\phi_8^{\lambda_1})\) & \(\mu(\phi_9^{\lambda_1})\) \\
					\midrule
					0     & 3.10724 & 3.86320 & 4.68057 & 5.53782 & 6.42244 & 7.32672 & 8.24566 & 9.17583 & 10.1148 & 11.0610 \\
					0.01  & 3.10749 & 3.86343 & 4.68079 & 5.53801 & 6.42262 & 7.32690 & 8.24582 & 9.17599 & 10.1150 & 11.0611 \\
					1     & 3.13156 & 3.88600 & 4.70184 & 5.55769 & 6.44108 & 7.34428 & 8.26226 & 9.19159 & 10.1299 & 11.0753 \\
					10    & 3.34667 & 4.08847 & 4.89147 & 5.73539 & 6.60809 & 7.50181 & 8.41141 & 9.33331 & 10.2649 & 11.2045 \\
					100   & 5.23680 & 5.90487 & 6.63044 & 7.39638 & 8.19411 & 9.01745 & 9.86173 & 10.7233 & 11.5995 & 12.4879 \\
					500   & 11.36426 & 11.91630 & 12.54652 & 13.2063 & 13.8896 & 14.5942 & 15.3182 & 16.0601 & 16.8184 & 17.5916 \\
					1000  & 17.20048 & 17.67868 & 18.27117 & 18.8931 & 19.5333 & 20.1891 & 20.8596 & 21.5441 & 22.2420 & 22.9527 \\
					2000  & 26.63727 & 27.00587 & 27.55057 & 28.1374 & 28.7426 & 29.3600 & 29.9878 & 30.6252 & 31.2719 & 31.9275 \\
					\bottomrule
				\end{tabular}
			}
		\end{table}
		\begin{figure}[!t]
			\centering
				\includegraphics[width=0.3\textwidth]{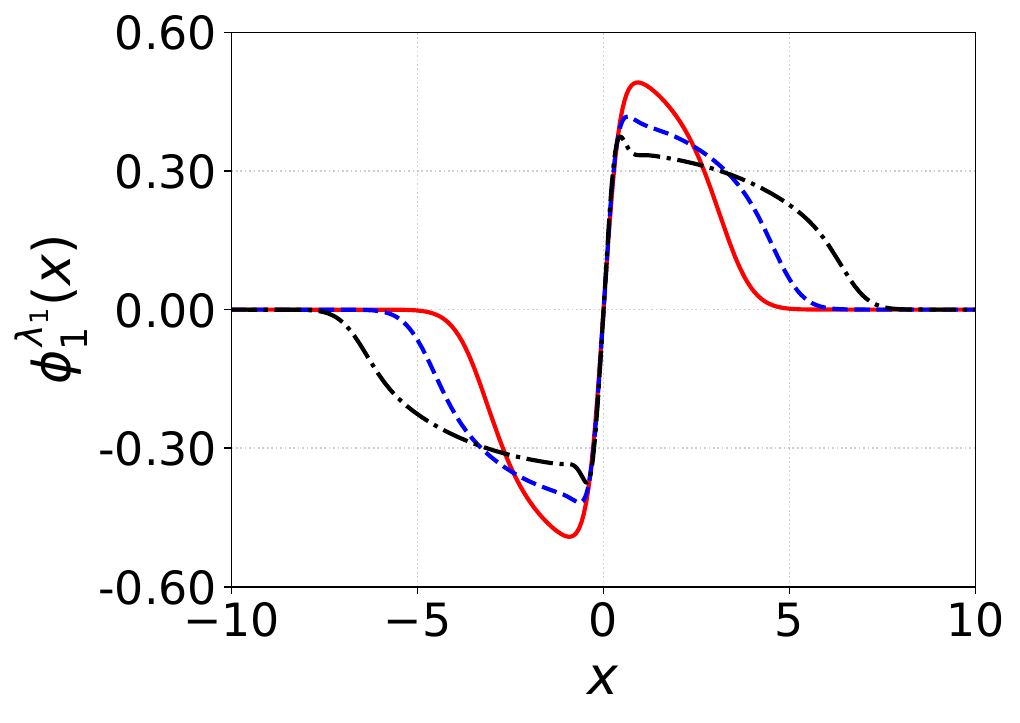}
				\includegraphics[width=0.3\textwidth]{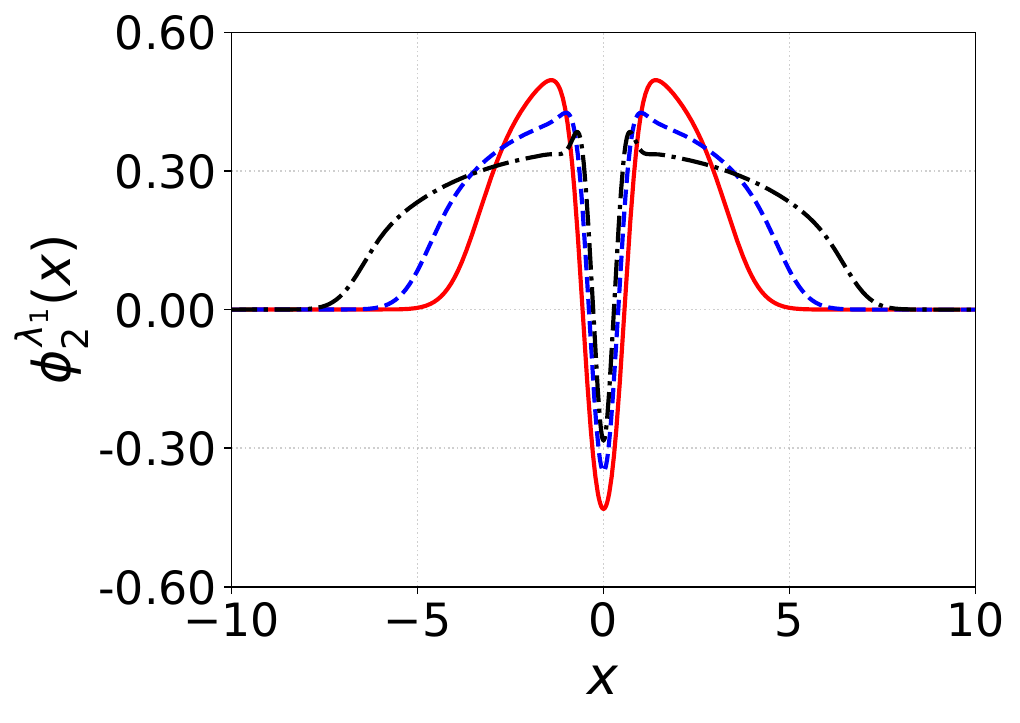}
				\includegraphics[width=0.3\textwidth]{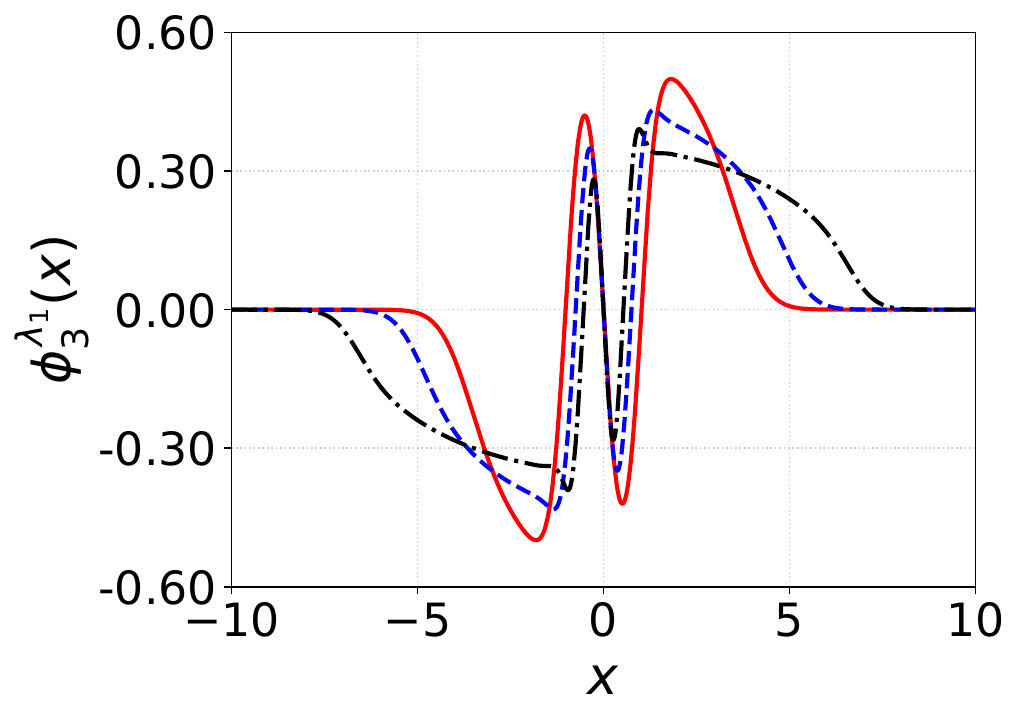} \\
				\includegraphics[width=0.3\textwidth]{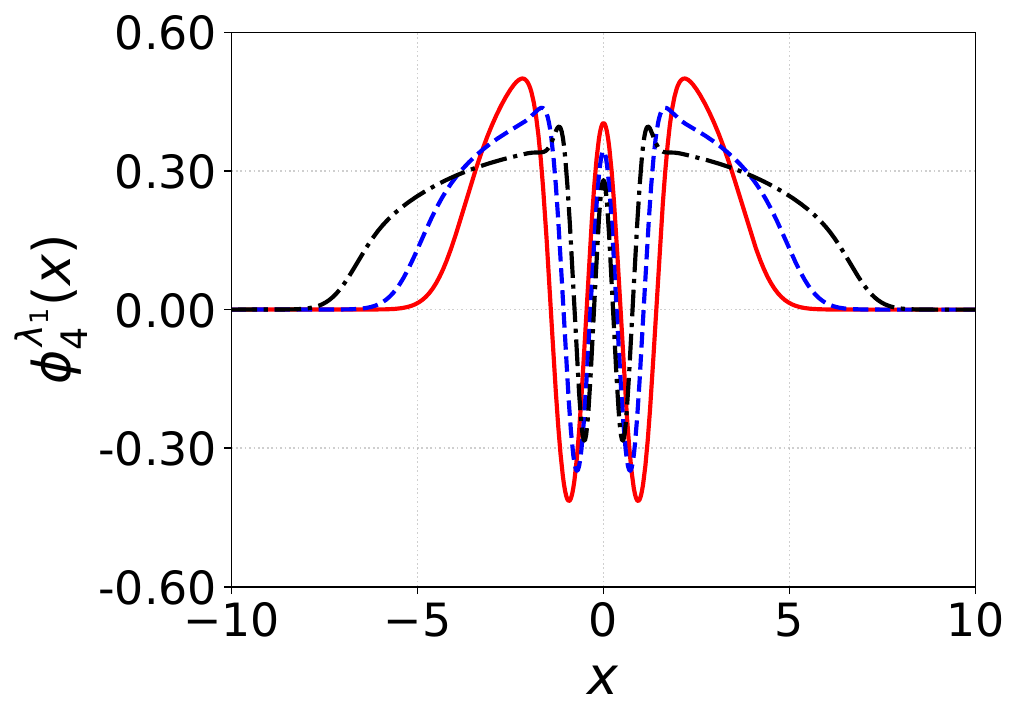}
				\includegraphics[width=0.3\textwidth]{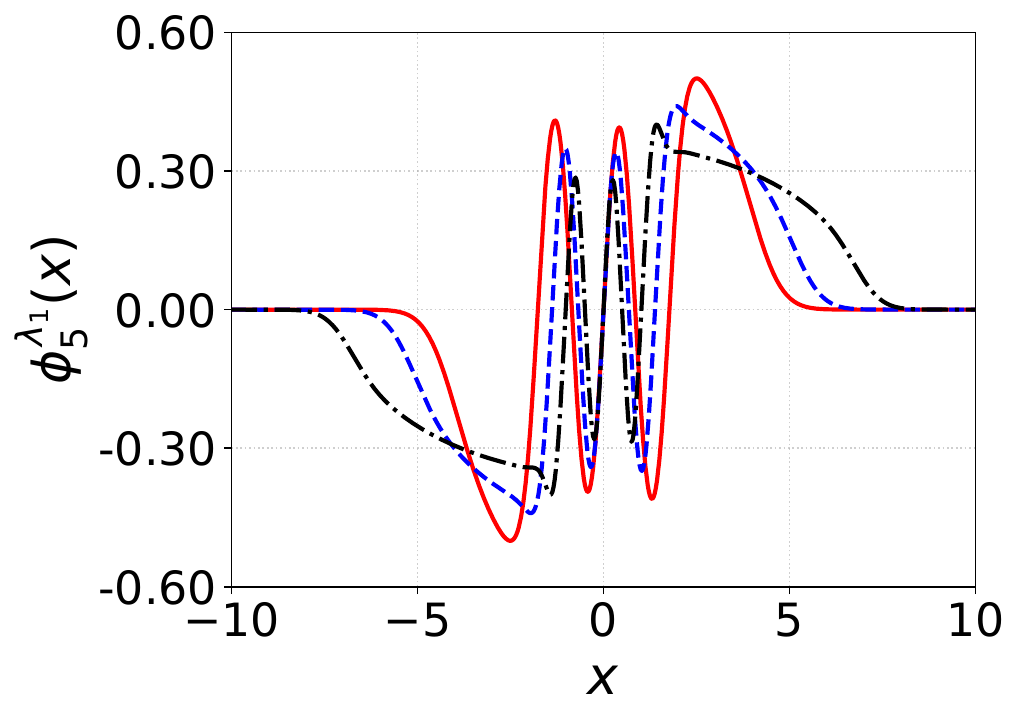}
				\includegraphics[width=0.3\textwidth]{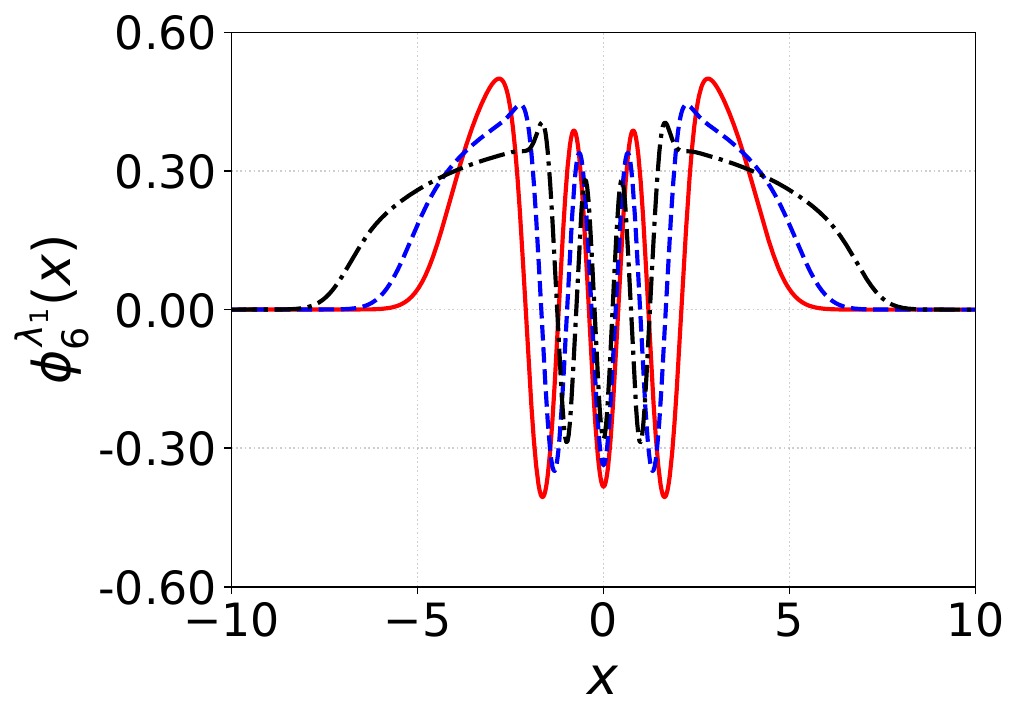} \\
				\includegraphics[width=0.3\textwidth]{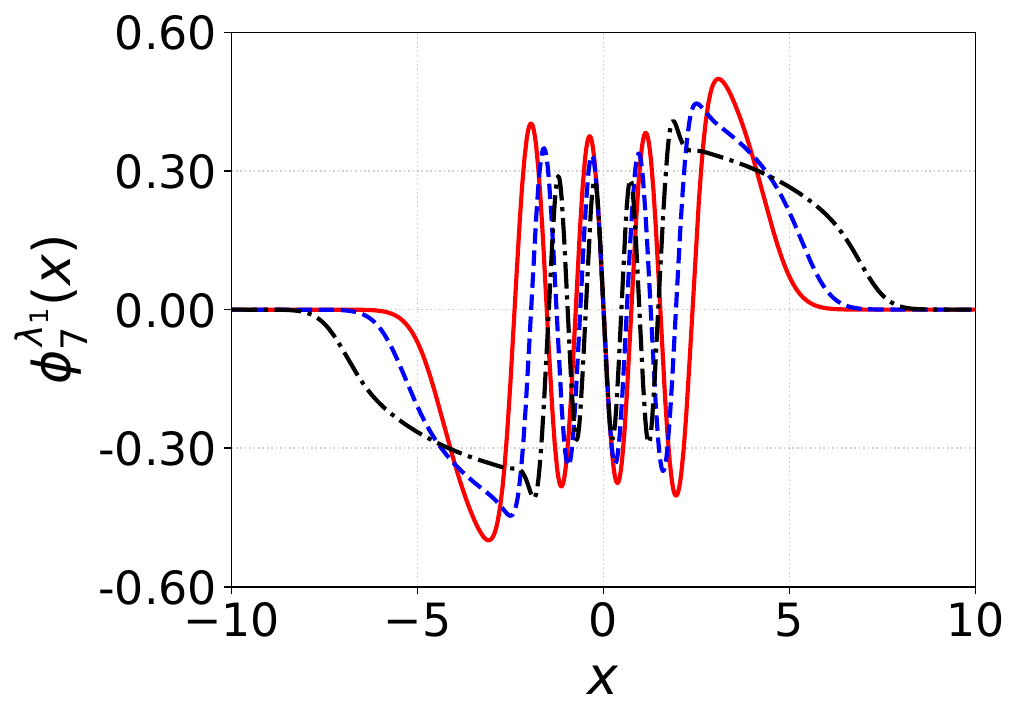}
				\includegraphics[width=0.3\textwidth]{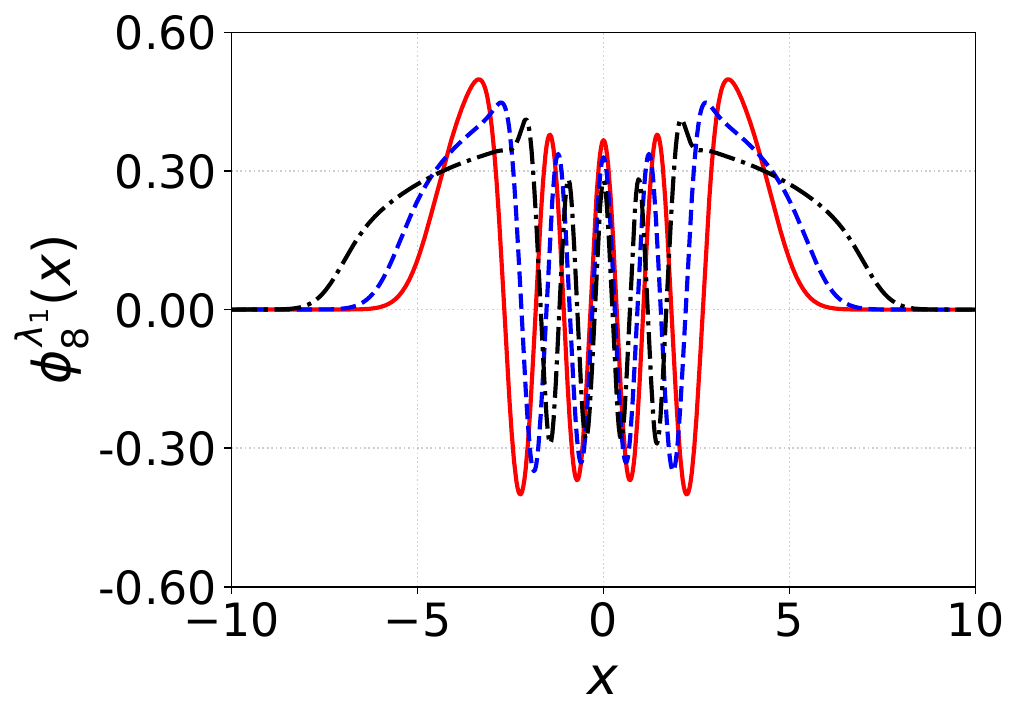}
				\includegraphics[width=0.3\textwidth]{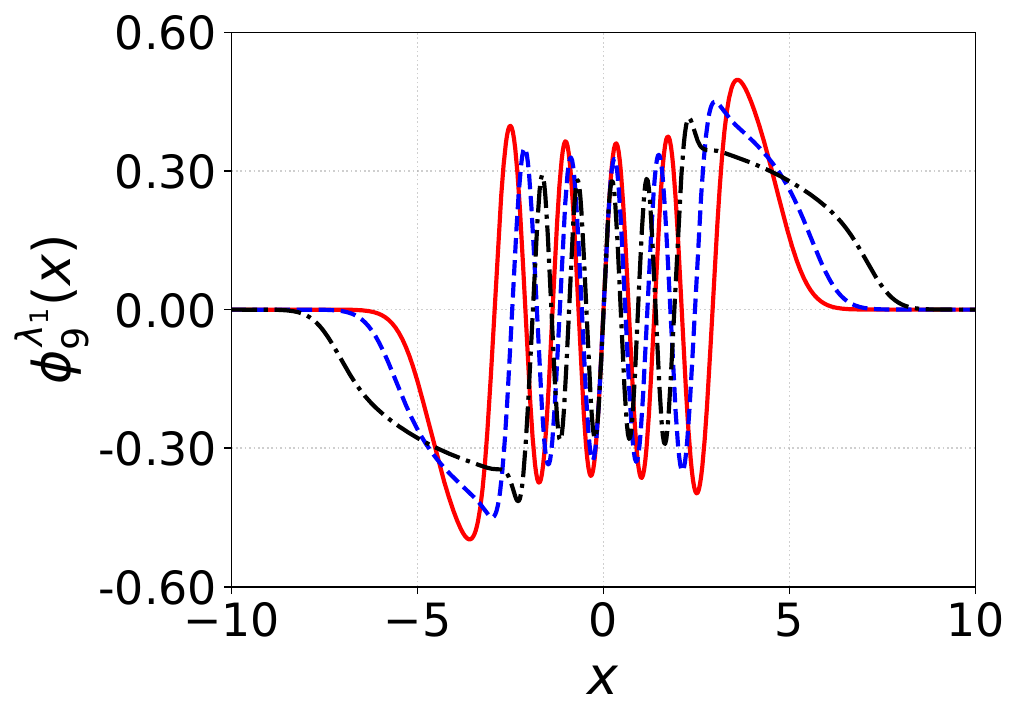} \\
			\caption{Profiles of index-\(k\) (\(k=1,2,\ldots,9\)) excited states \(\phi_k^{\lambda_1}(x)\) in Example \ref{1d_lambda}, with \(\beta_1=10\) fixed and \(\lambda_1=100\) (red solid lines), 500 (blue dashed lines), and 1500 (black dash-dot lines).}
			\label{phi_lambda}
		\end{figure}
		
		\begin{figure}[!t]
			\centering
			\includegraphics[width=0.95\textwidth]{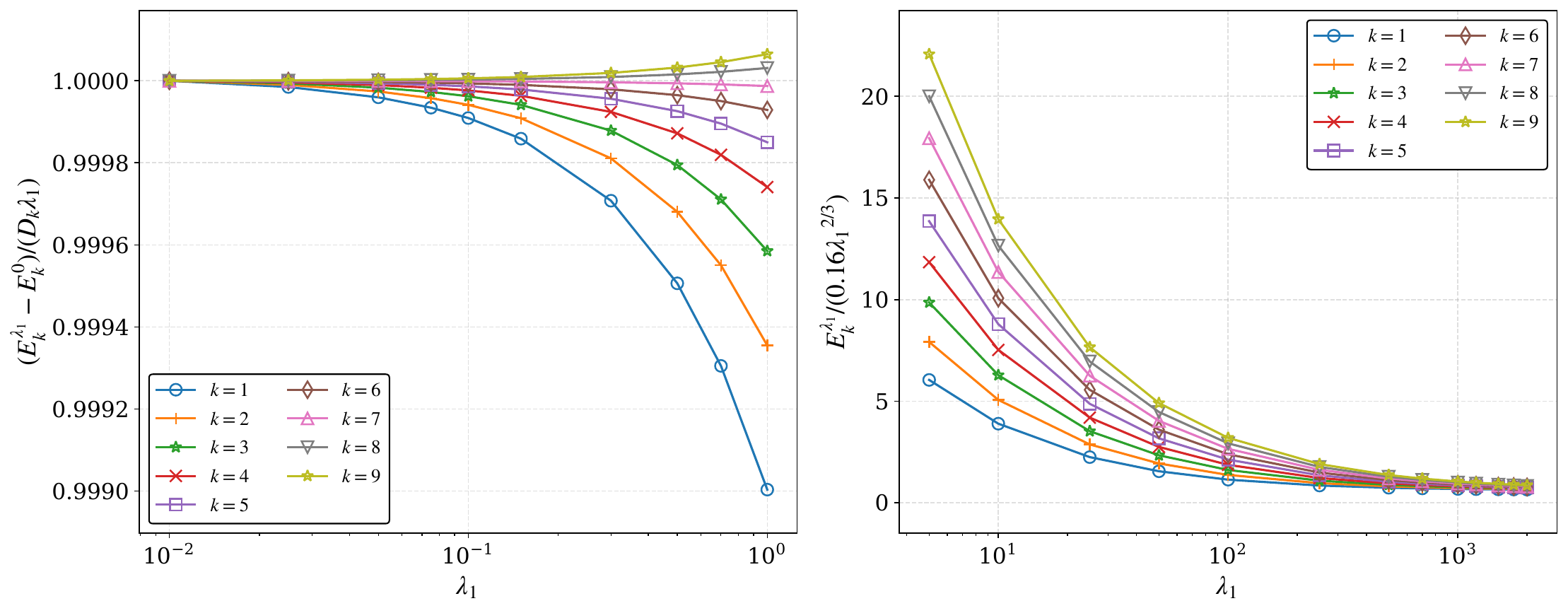}
			\caption{Asymptotic behaviors of the energies for index-$k$ ($k = 1, 2, \dots, 9$) excited states for the weakly (left) and strongly (right) dipolar interaction regime in Example \ref{1d_lambda}.}
            \label{fig:linear_relation}
		\end{figure}
	\end{example}

\subsection{Numerical results in 2D} \label{2D_example}
In this section, we consider the 2D case (\(d=2\)), where the trapping frequencies in \eqref{V} are set to $\gamma_x=\gamma_y=1$, and the computational domain is taken as \([-8,8]^2\). The initial data are given in Table~\ref{tab:initial_2d}. We systematically investigate the effects of both repulsive (\(\lambda_2>0\)) and attractive (\(\lambda_2<0\)) dipolar interactions on the ground and excited states of the BEC for various polarization directions.
    \begin{table}[htbp]
\centering
\caption{Initial data for the numerical experiments in Subsection~\ref{2D_example}.}
\label{tab:initial_2d}
\begin{tabular}{c  c l l}
\toprule
Solution & Morse index $k$  & Initial guess of $\phi^0$ & Initial guess of $\{v_i^0\}$ \\
\midrule
$\phi_{00}$ & 0  & $\phi_{00}^{\mathrm{ho}}$ & -- \\
$\phi_{10}$ & 1  & $\phi_{10}^{\mathrm{ho}}$ & $\{\phi_{00}^{\mathrm{ho}}\}$ \\
$\phi_{01}$ & 1  & $\phi_{01}^{\mathrm{ho}}$ & $\{\phi_{00}^{\mathrm{ho}}\}$ \\
$\phi_{11}$ & 3 & $\phi_{11}^{\mathrm{ho}}$ & $\{\phi_{00}^{\mathrm{ho}},\phi_{10}^{\mathrm{ho}},\phi_{01}^{\mathrm{ho}}\}$ \\
\bottomrule
\end{tabular}
\end{table}
	\begin{example}
	To investigate the anisotropic effects of the DDI, we fix the contact interaction strength at a strongly repulsive value of $\beta_2 = 100$, and set the dipole polarization vector strictly along the $x$-axis, i.e., $\mathbf{n} = (1,0,0)$, $\varepsilon=\sqrt{\frac{1}{80}}$. Table \ref{tab:2d_excited_states} tabulates the energies and chemical potentials of the ground state and the excited states for varying relative dipolar interaction strengths $\lambda_2/\beta_2 \in [0, 1]$. Fig.~\ref{fig:wavefunctions_beta_lambda} plots the pseudo-color images of excited states
		with $\lambda_2/\beta_2 = 0$, $\lambda_2/\beta_2 = 0.5$ and $\lambda_2/\beta_2 = 1$.
		\begin{table}[!t]
			\centering\footnotesize
			\caption{Energies and chemical potentials of the ground and excited states for different $\lambda_2/\beta_2$.	\label{tab:2d_excited_states}}
			\begin{tabular}{c c c c c c c c c}
				\toprule
				\multirow{2}{*}{$\lambda_2/\beta_2$} & 
				\multicolumn{2}{c}{$\phi_{00}$} & \multicolumn{2}{c}{$\phi_{10}$} & \multicolumn{2}{c}{$\phi_{01}$} & \multicolumn{2}{c}{$\phi_{11}$} \\
				\cmidrule(lr){2-3} \cmidrule(lr){4-5} \cmidrule(lr){6-7} \cmidrule(lr){8-9}
				& $E$ & $\mu$ & $E$ & $\mu$ & $E$ & $\mu$ & $E$ & $\mu$ \\
				\midrule
				0.00 & 3.9459 & 5.7598 & 4.7329 & 6.5110 & 4.7329 & 6.5110 & 5.5204 & 7.2613 \\
				0.25 & 3.5279 & 5.1042 & 4.3777 & 5.9354 & 4.3211 & 5.8575 & 5.1700 & 6.6863 \\
				0.50 & 3.0593 & 4.3620 & 3.9896 & 5.2996 & 3.8621 & 5.1186 & 4.7893 & 6.0519 \\
				0.75 & 2.5142 & 3.4837 & 3.5554 & 4.5767 & 3.3347 & 4.2490 & 4.3684 & 5.3341 \\
				1.00 & 1.8242 & 2.3244 & 3.0459 & 3.7015 & 2.6956 & 3.1423 & 3.8895 & 4.4859 \\
				\bottomrule
			\end{tabular}
		\end{table}
		
		\begin{figure}[!t]
			\centering
			\includegraphics[width=0.22\linewidth]{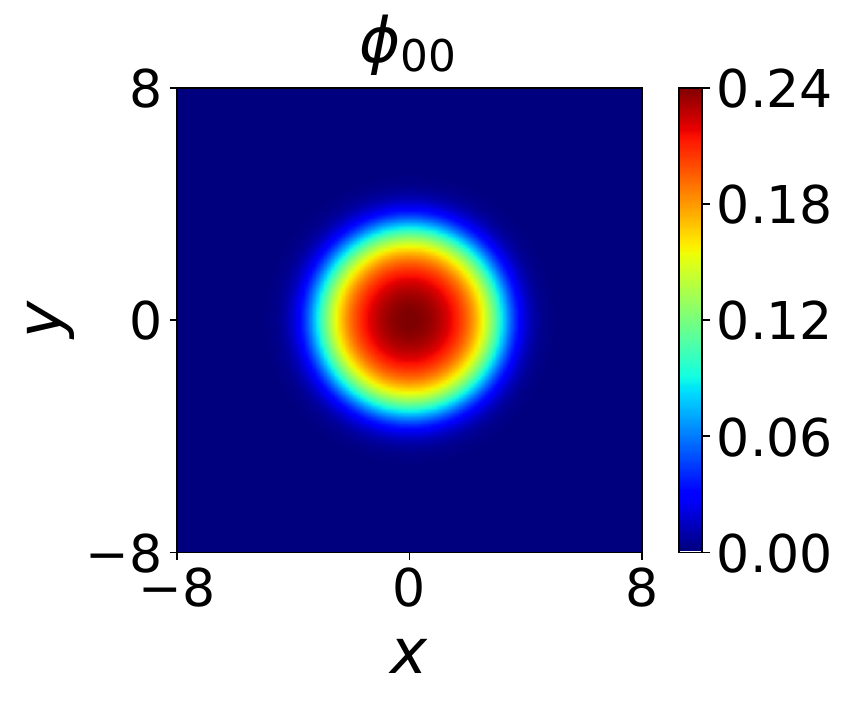}
			\includegraphics[width=0.22\linewidth]{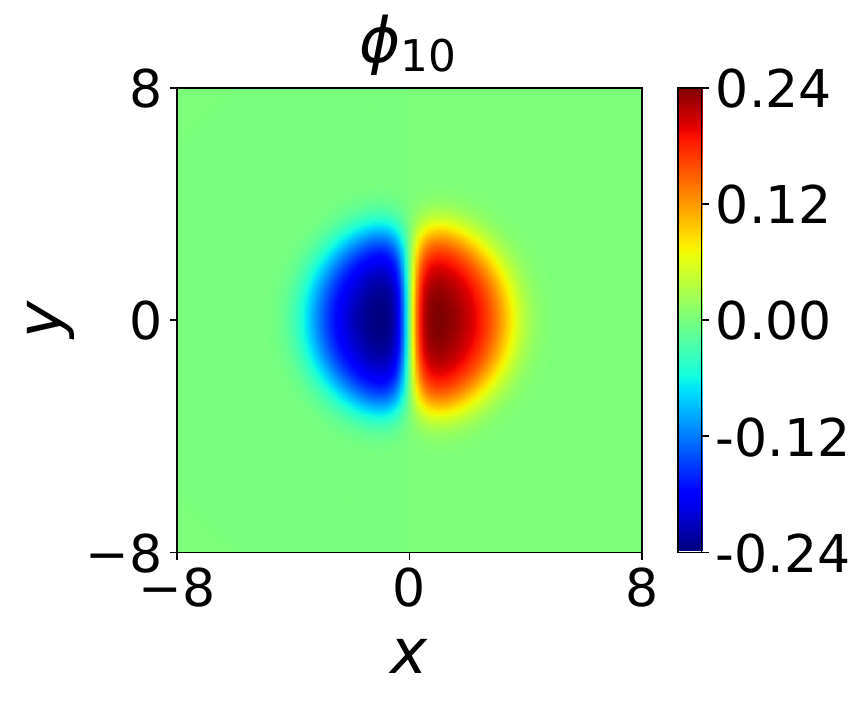}
			\includegraphics[width=0.22\linewidth]{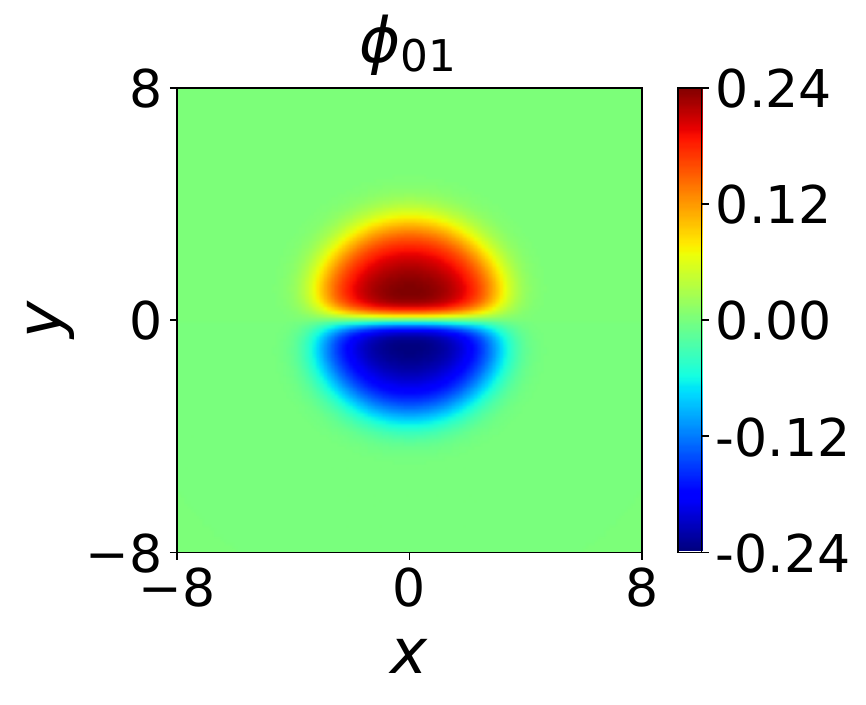}
			\includegraphics[width=0.22\linewidth]{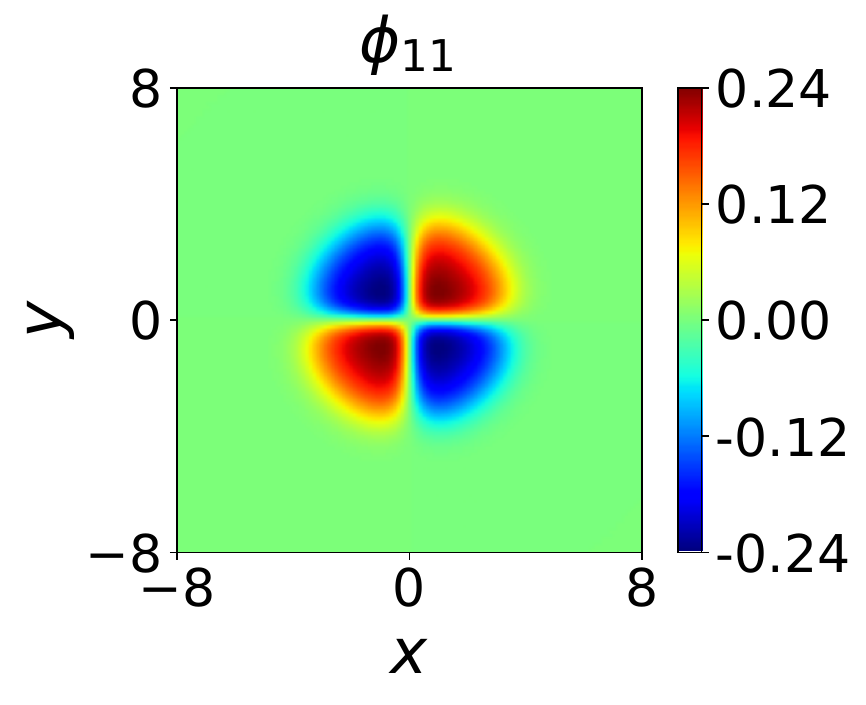}\\
			\includegraphics[width=0.22\linewidth]{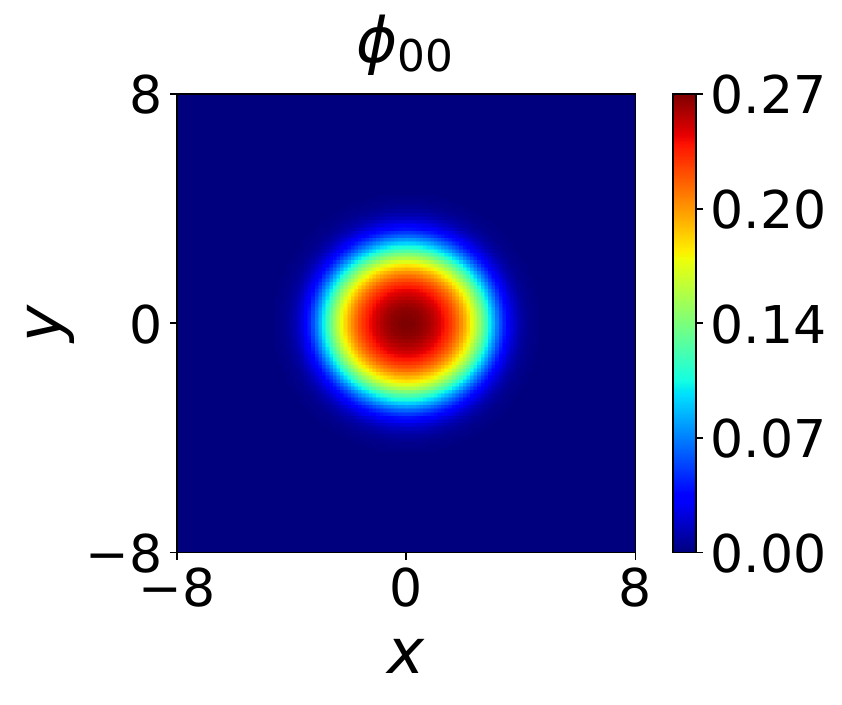}
			\includegraphics[width=0.22\linewidth]{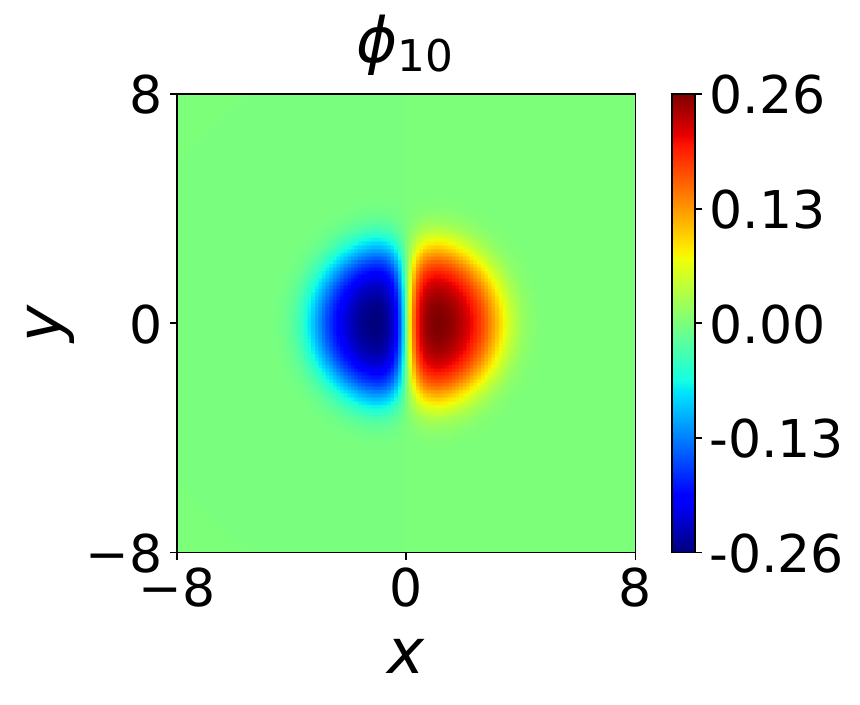}
			\includegraphics[width=0.22\linewidth]{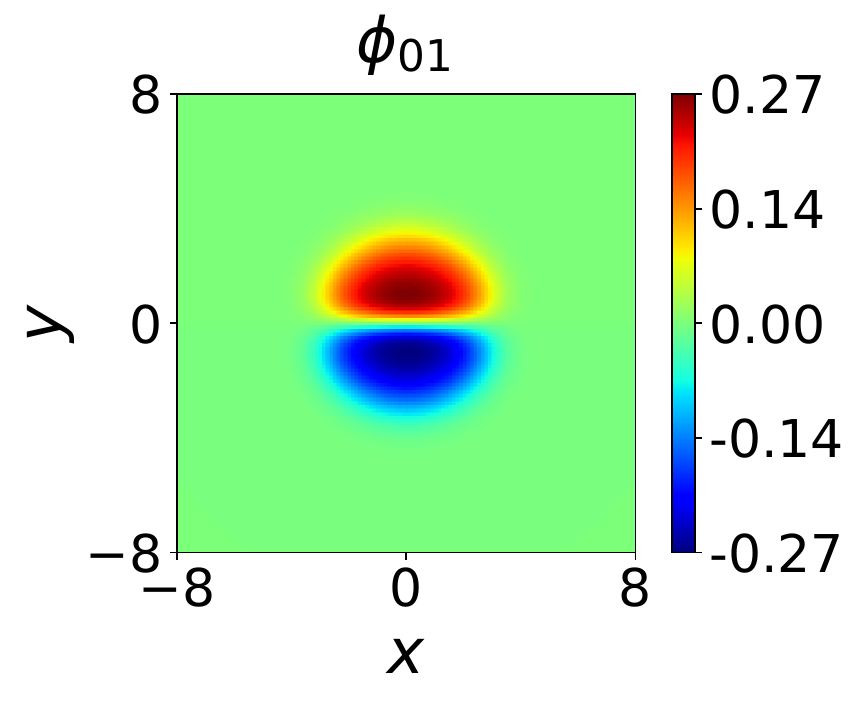}
			\includegraphics[width=0.22\linewidth]{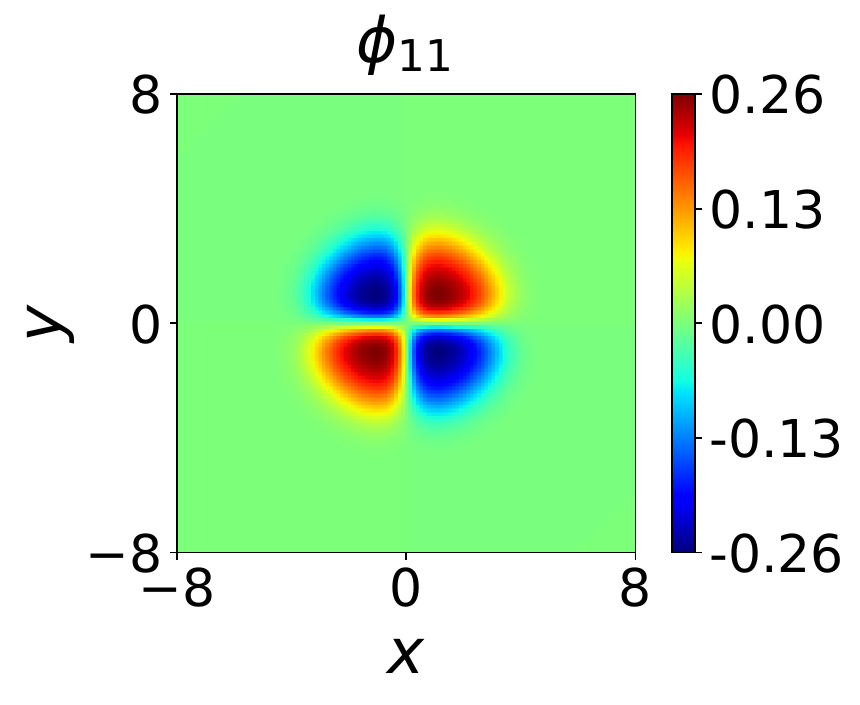}\\
			\includegraphics[width=0.22\linewidth]{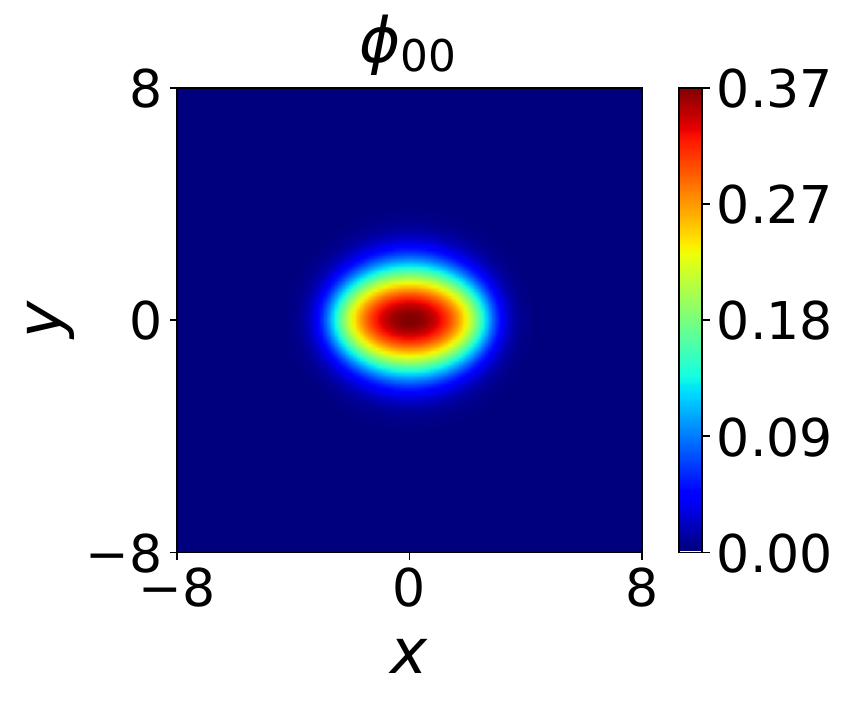}
			\includegraphics[width=0.22\linewidth]{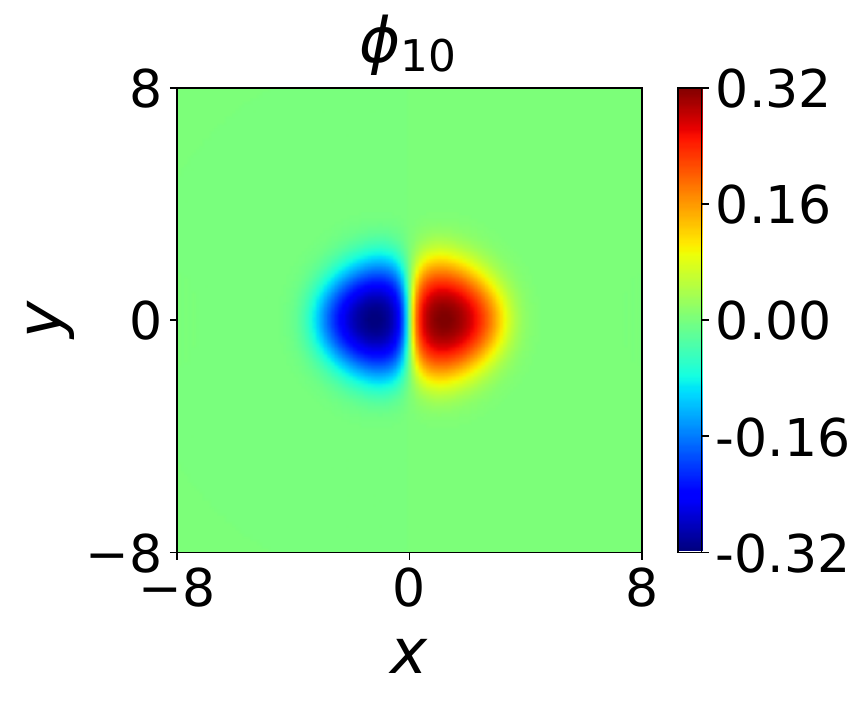}
			\includegraphics[width=0.22\linewidth]{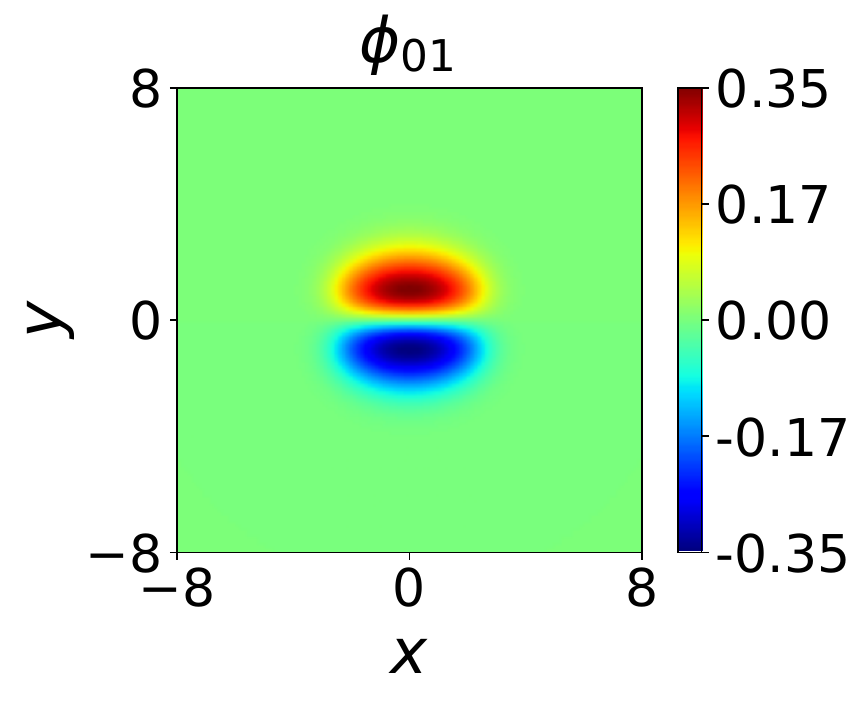}
			\includegraphics[width=0.22\linewidth]{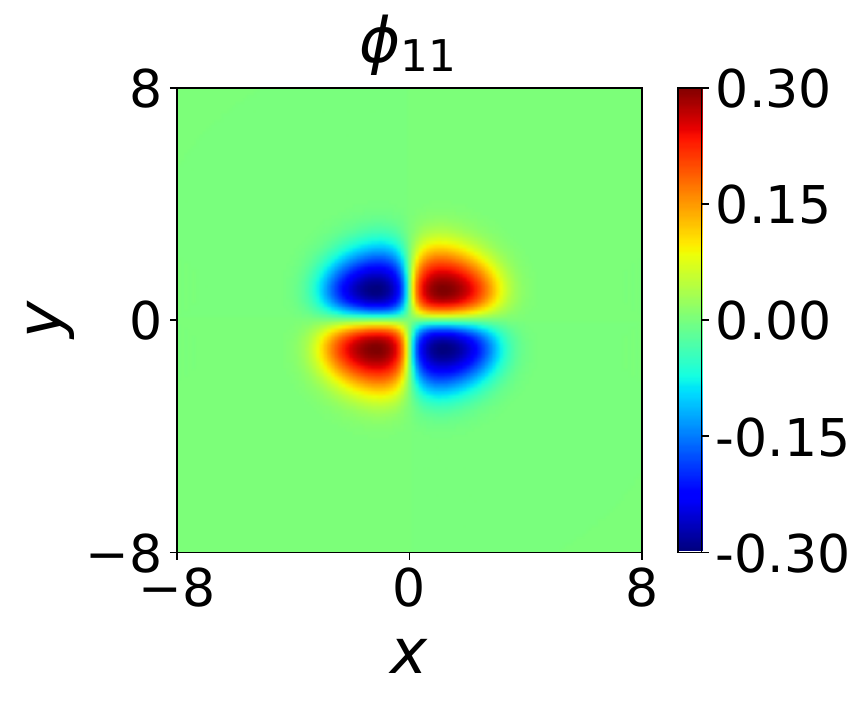}\\
			\caption{$\lambda_2 > 0$ and $\mathbf{n} = (1,0,0)$. The pseudo-color of the states $\phi_{00}$, $\phi_{10}$, $\phi_{01}$, and $\phi_{11}$ (from left to right) for different interaction parameter ratios. The first row  corresponds to the cases of $\lambda_2/\beta_2 = 0$, the second row corresponds to the cases of $\lambda_2/\beta_2 = 0.5$ and the last row  corresponds to the cases of $\lambda_2/\beta_2 = 1$.}
			\label{fig:wavefunctions_beta_lambda}
		\end{figure}
		From the numerical results that are partially shown in Fig.~\ref{fig:wavefunctions_beta_lambda} and Table \ref{tab:2d_excited_states}, we have the following numerical observations:
		\begin{enumerate}
			\item In the case of purely contact interactions ($\lambda_2=0$), the excited states $\phi_{10}$ and $\phi_{01}$ are exactly degenerate, a direct consequence of the rotational symmetry of the isotropic harmonic trapping potential. However, when $\lambda_2>0$, due to the long-range anisotropy of the DDI, the energies of the excited states $\phi_{10}$ and $\phi_{01}$ are no longer the same.
            \item Table~\ref{tab:2d_excited_states} shows the facts $E(\phi_{00}) < E(\phi_{01}) < E(\phi_{10}) < E(\phi_{11})$ and $\mu(\phi_{00}) < \mu(\phi_{01}) < \mu(\phi_{10}) < \mu(\phi_{11})$ for $\lambda_2/\beta_2 > 0$. It demonstrates that, for $\lambda_2 > 0$, the energy satisfies $E(\phi_{10}) > E(\phi_{01})$.
            \item As the relative dipole strength $\lambda_2/\beta_2$ increases from $0$ to $1$, the energy and chemical potential of the same type of excited states decrease monotonically. The profiles of wave functions are stretched along the $x$-direction and compressed along the $y$-direction.
		\end{enumerate}
	\end{example}

\begin{table}[!t]
			\centering\footnotesize
			\caption{Energy $E$ and chemical potential $\mu$ of the 2D ground and excited states for different negative ratios $\lambda_2/\beta_2$ with $\beta_2=150$ and polarization $\mathbf{n} = (\frac{1}{\sqrt{2}},0,\frac{1}{\sqrt{2}})$.}
			\label{tab:2d_negative_lambda}
			\begin{tabular}{c c c c c c c c c}
				\toprule
				\multirow{2}{*}{$\lambda_2/\beta_2$} & 
				\multicolumn{2}{c}{$\phi_{00}$} & \multicolumn{2}{c}{$\phi_{10}$} & \multicolumn{2}{c}{$\phi_{01}$} & \multicolumn{2}{c}{$\phi_{11}$} \\
				\cmidrule(lr){2-3} \cmidrule(lr){4-5} \cmidrule(lr){6-7} \cmidrule(lr){8-9}
				& $E$ & $\mu$ & $E$ & $\mu$ & $E$ & $\mu$ & $E$ & $\mu$ \\
				\midrule
				0.00 & 4.7659 & 7.0106 & 5.5394 & 7.7537 & 5.5394 & 7.7537 & 6.3131 & 8.4962 \\
				-0.50 & 4.2409 & 6.1937 & 5.0191 & 6.9376 & 5.0844 & 7.0251 & 5.8613 & 7.7670 \\
				-1.00 & 3.6489 & 5.2646 & 4.4341 & 6.0095 & 4.5838 & 6.2149 & 5.3656 & 6.9558 \\
				-1.50 & 2.9532 & 4.1563 & 3.7520 & 4.9041 & 4.0178 & 5.2848 & 4.8092 & 6.0258 \\
				-2.00 & 2.0482 & 2.6562 & 2.8970 & 3.4466 & 3.3395 & 4.1353 & 4.1587 & 4.8938 \\
				\bottomrule
			\end{tabular}
		\end{table}

	\begin{figure}[!t]
		\centering
		\includegraphics[width=0.22\linewidth]{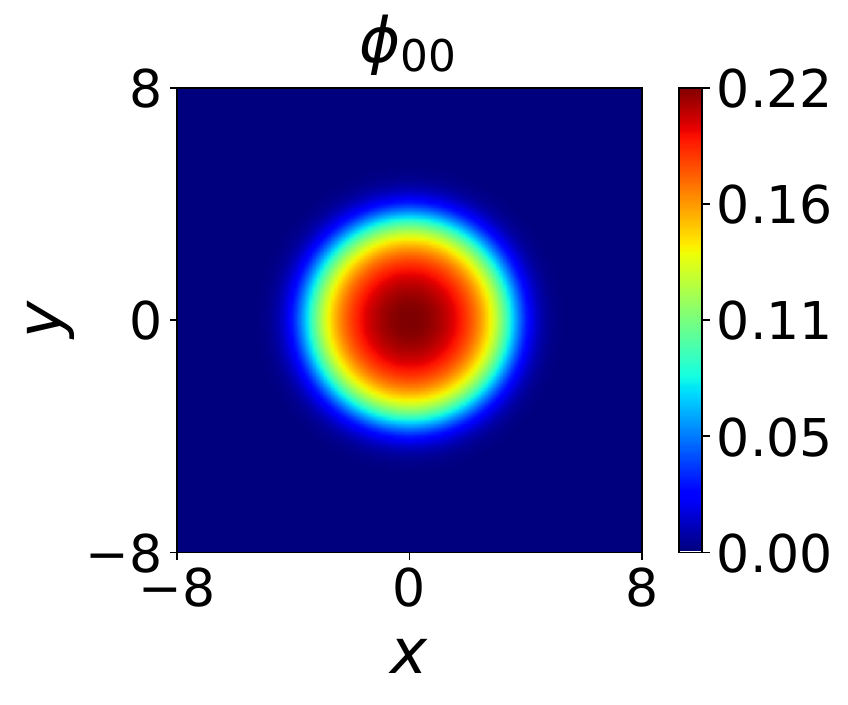}
		\includegraphics[width=0.22\linewidth]{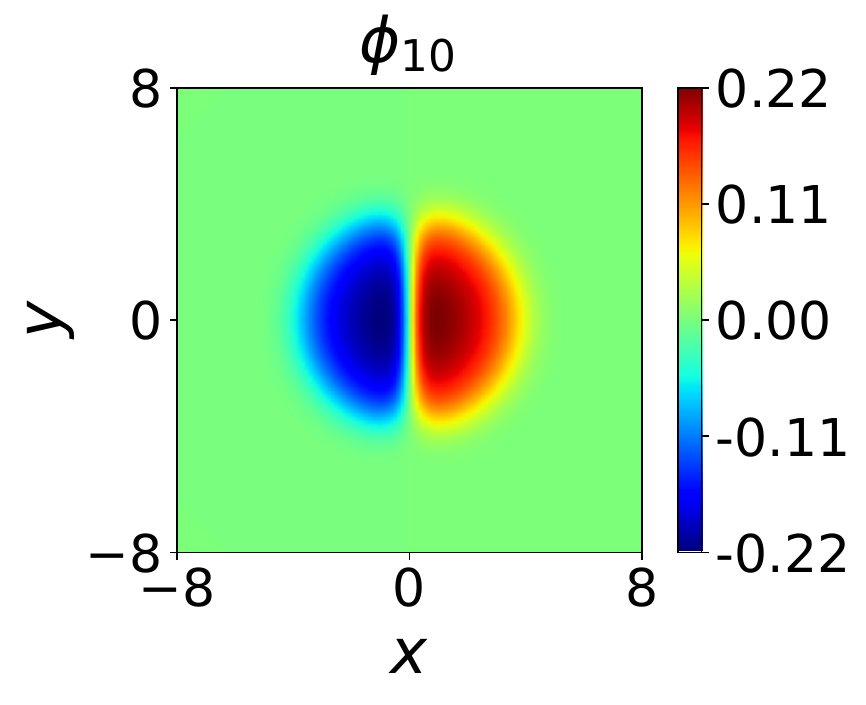}
		\includegraphics[width=0.22\linewidth]{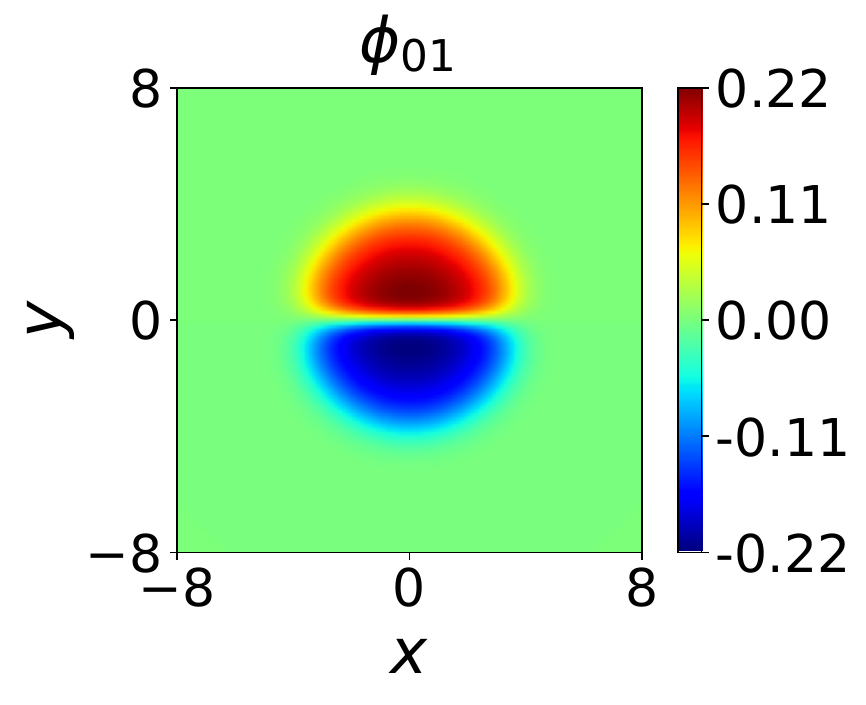}
		\includegraphics[width=0.22\linewidth]{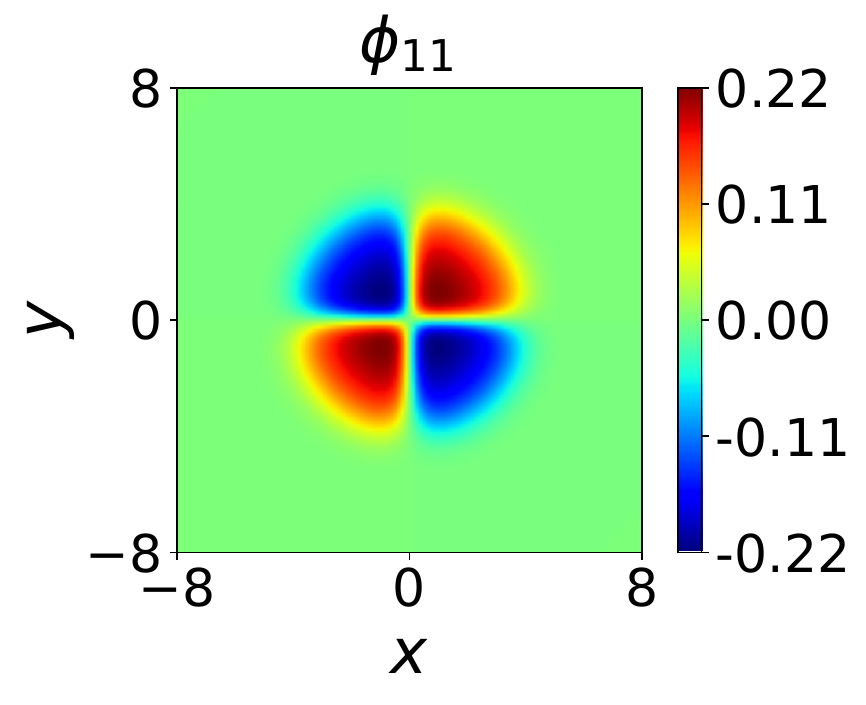}\\
		\includegraphics[width=0.22\linewidth]{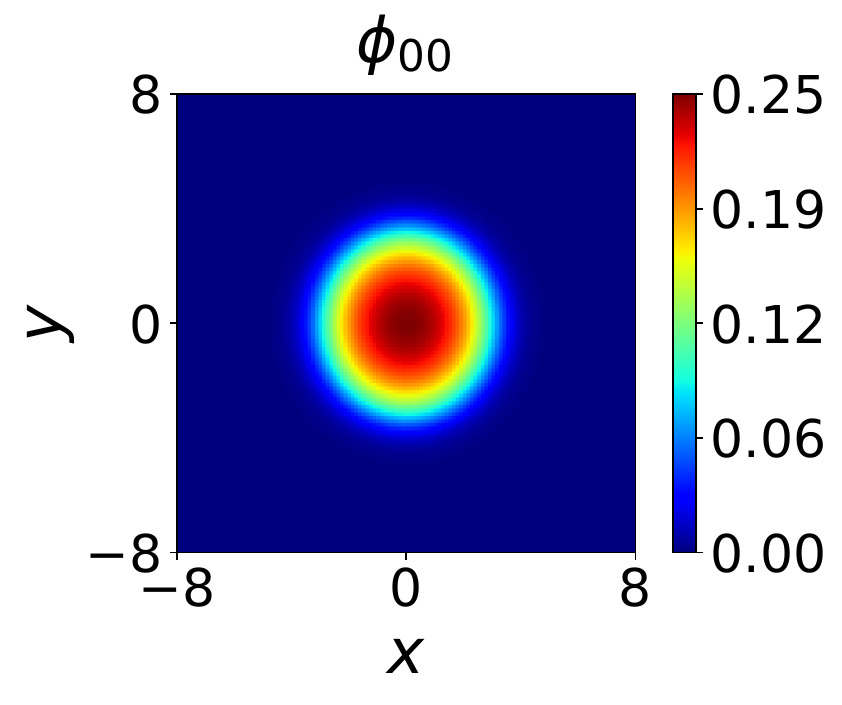}
		\includegraphics[width=0.22\linewidth]{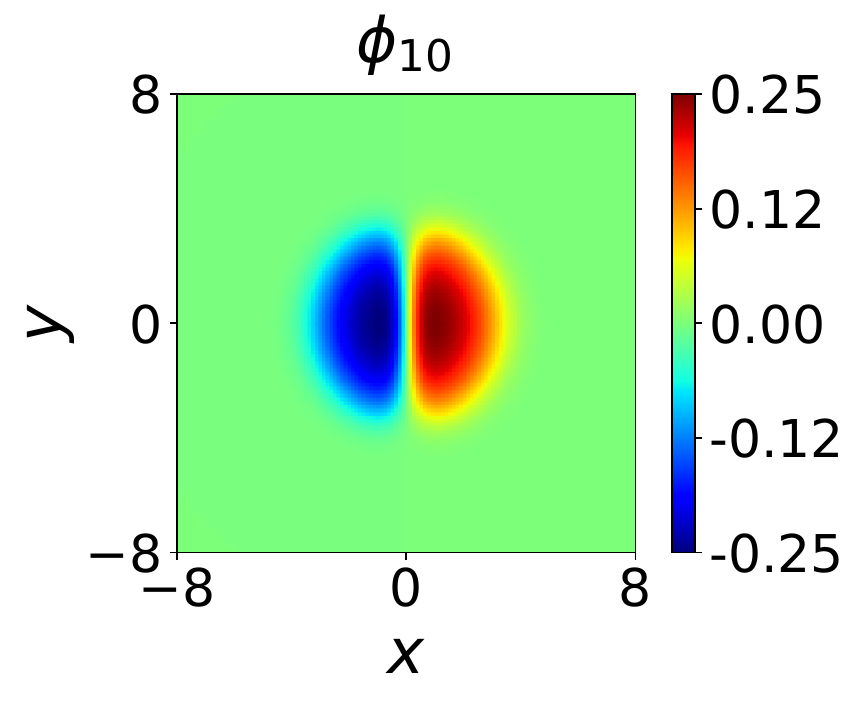}
		\includegraphics[width=0.22\linewidth]{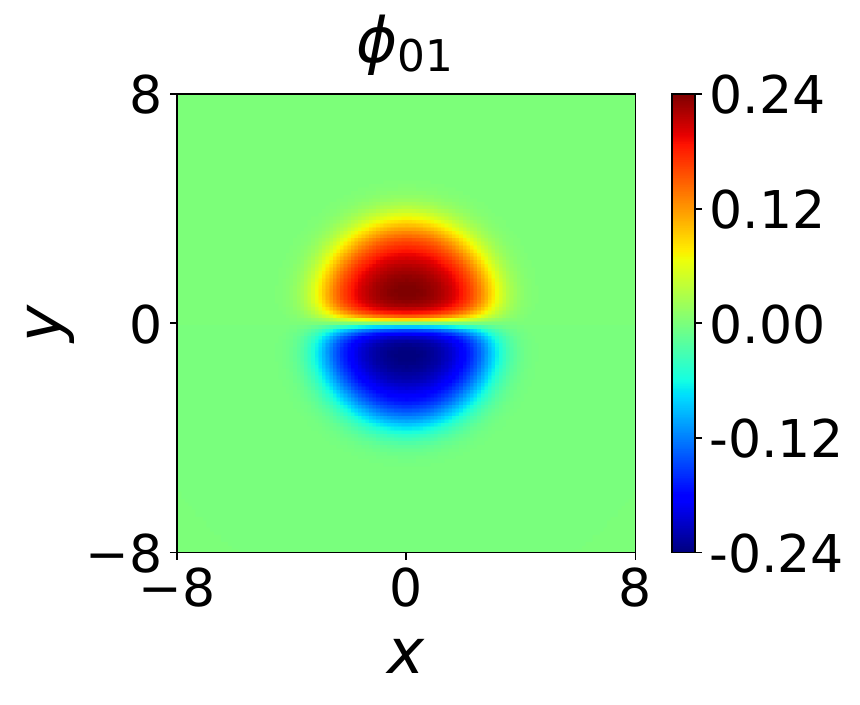}
		\includegraphics[width=0.22\linewidth]{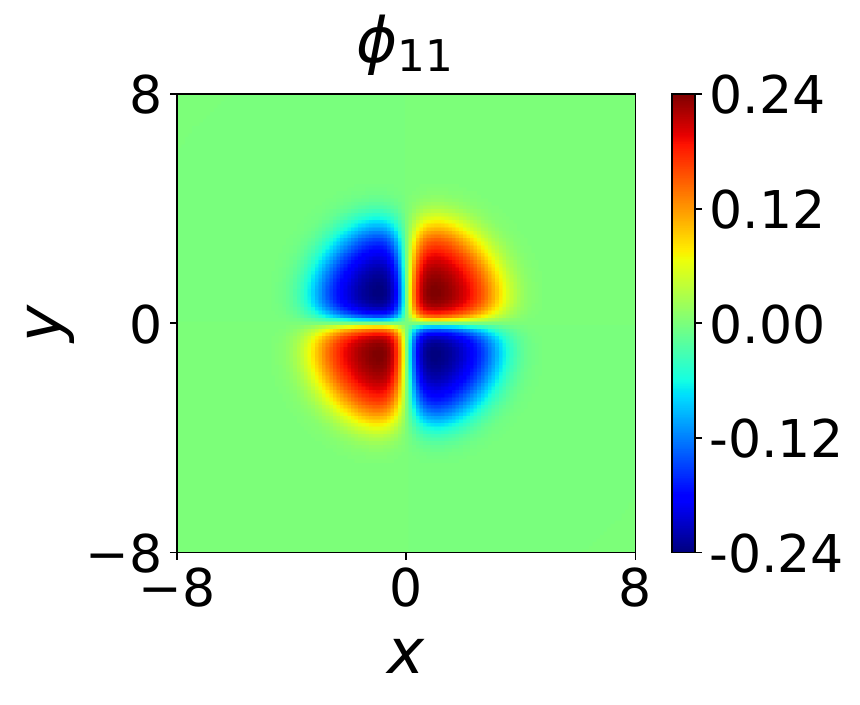}\\
		\includegraphics[width=0.22\linewidth]{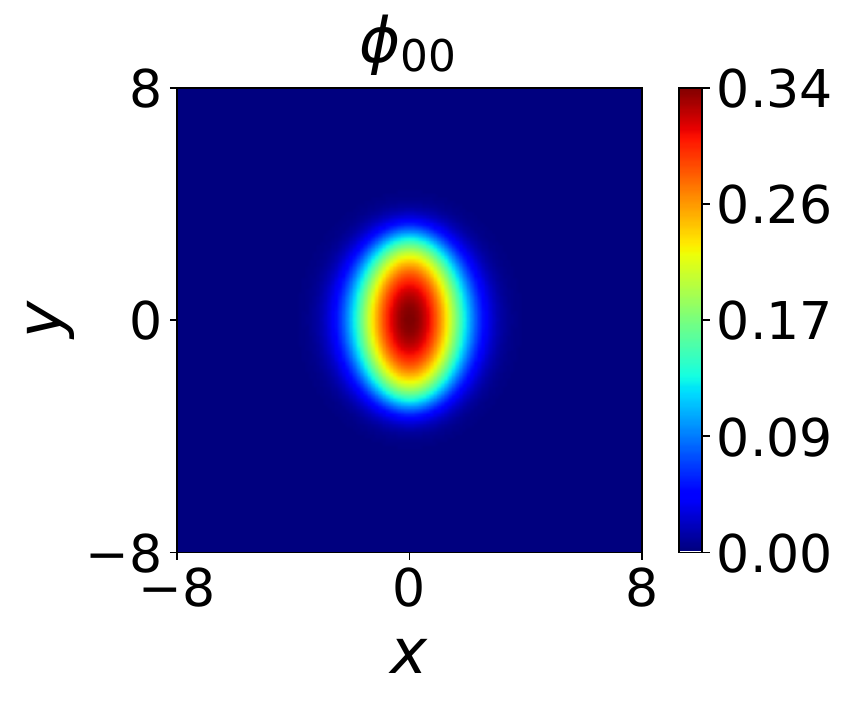}
		\includegraphics[width=0.22\linewidth]{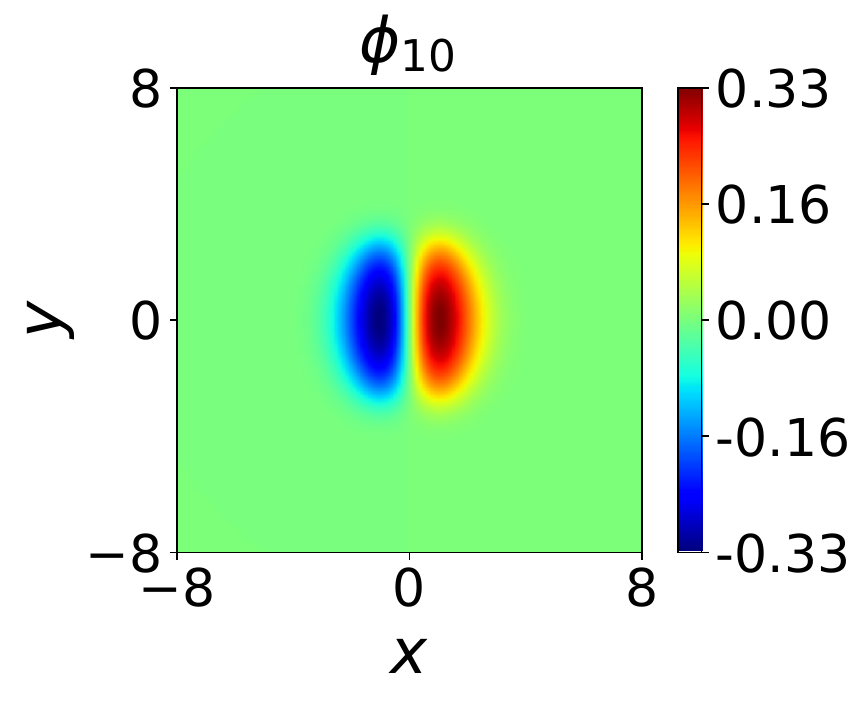}
		\includegraphics[width=0.22\linewidth]{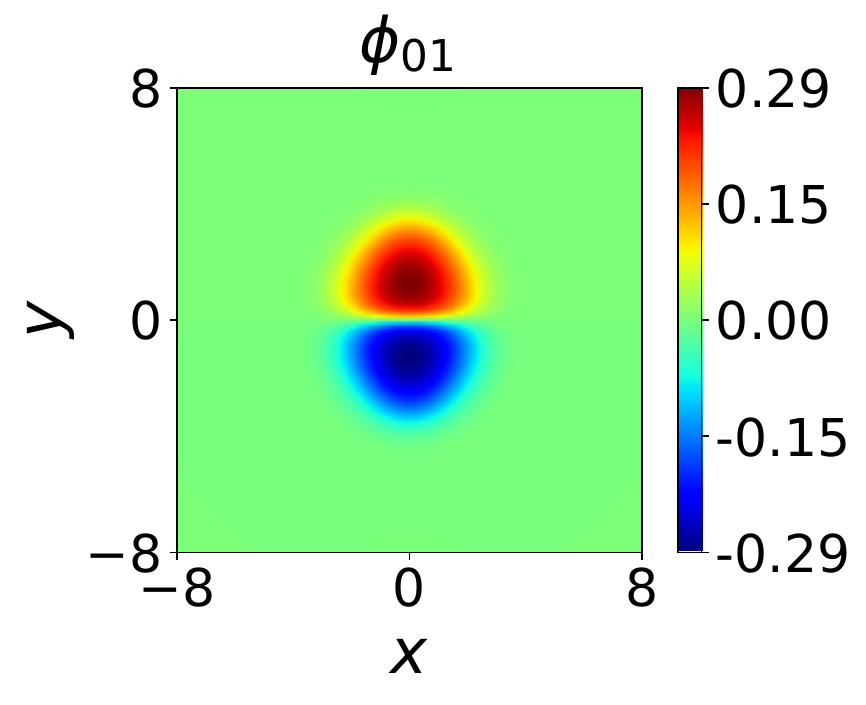}
		\includegraphics[width=0.22\linewidth]{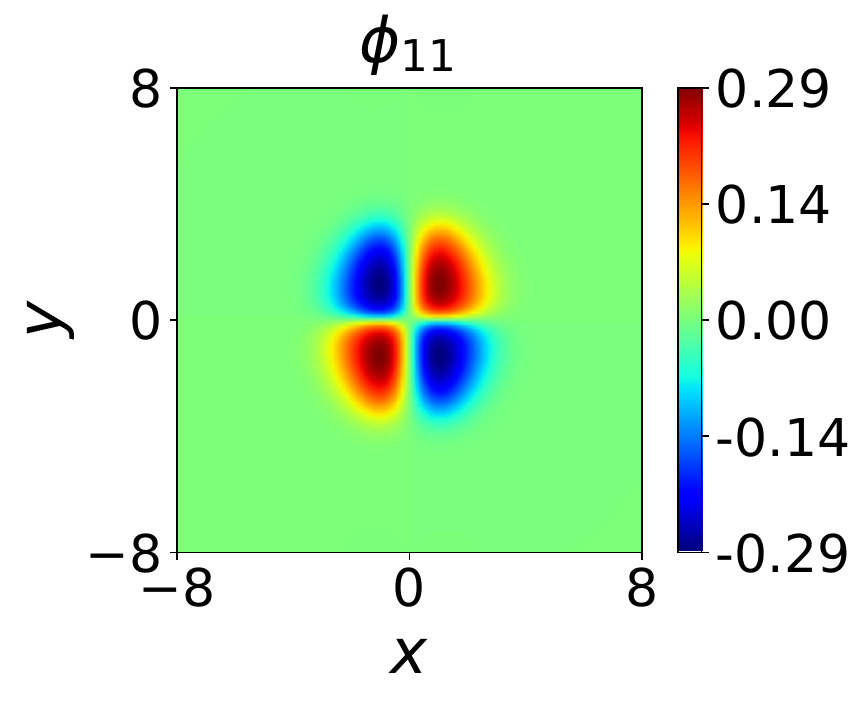}\\
		\caption{$\lambda_2 < 0$ and $\mathbf{n}  = (\frac{1}{\sqrt{2}},0,\frac{1}{\sqrt{2}})$. The pseudo-color of the states $\phi_{00}$, $\phi_{10}$, $\phi_{01}$, and $\phi_{11}$ (from left to right) for different interaction parameter ratios. The first row  corresponds to the cases of $\lambda_2/\beta_2 = 0$, the second row corresponds to the cases of $\lambda_2/\beta_2 = -1$ and the last row  corresponds to the cases of $\lambda_2/\beta_2 = -2$.}
		\label{fig:wavefunctions_comparison}
	\end{figure}

		\begin{figure}[!t]
			\centering
			\includegraphics[width=0.75\linewidth]{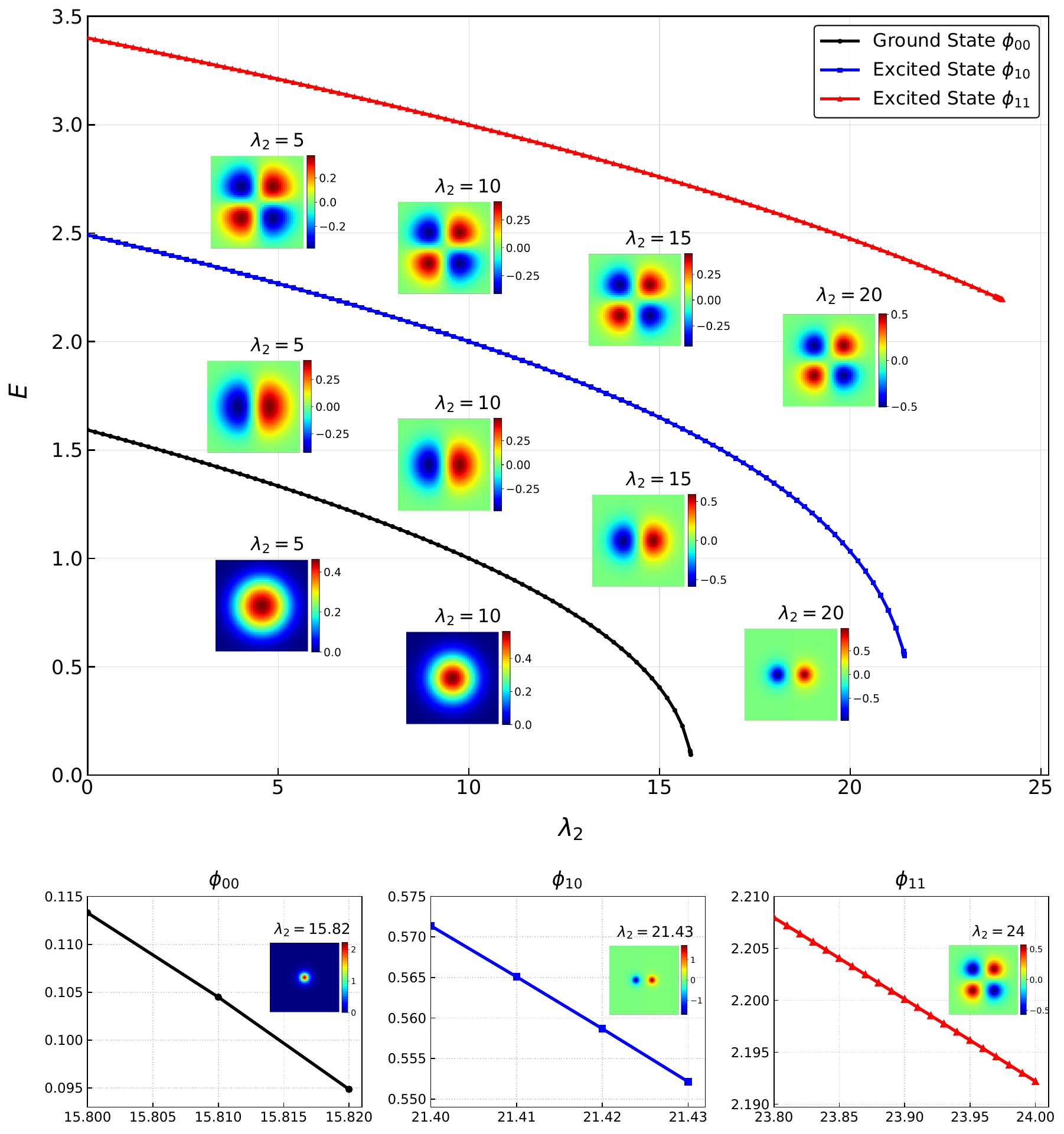}
			\caption{For case I of Example~\ref{ex:existence}, the energies of $\phi_{00}$, $\phi_{10}$, and $\phi_{11}$ as functions of increasing $\lambda_2$. 
				The lower subplots show the critical $\lambda_2$ values at which the ground and excited states are about to collapse.}\label{fig:existence-experiment-cases1}
		\end{figure}

		\begin{figure*}[!t]
			\centering
			\includegraphics[width=0.75\linewidth]{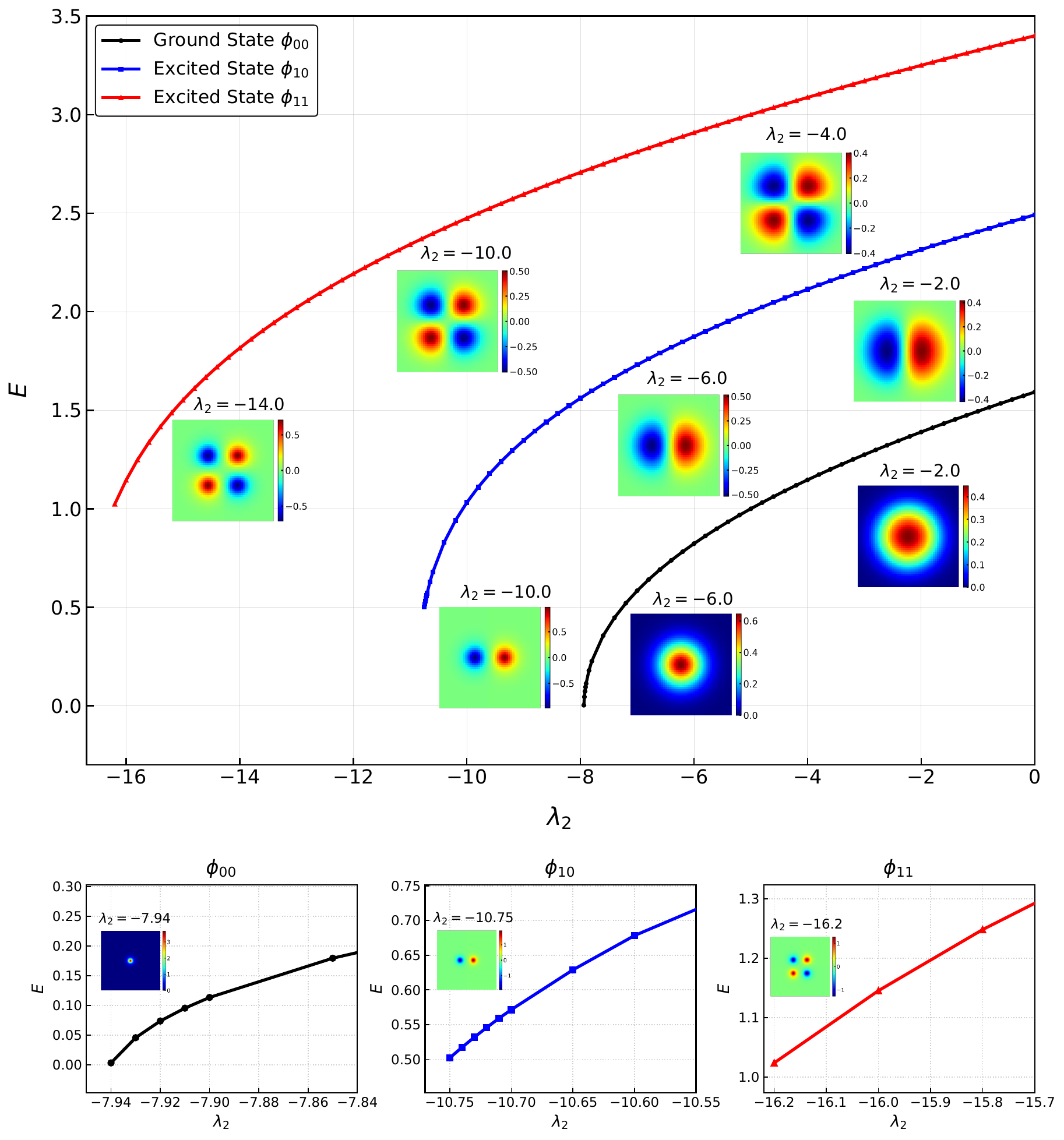}
			\caption{For case II of Example~\ref{ex:existence}, the energies of $\phi_{00}$, $\phi_{10}$, and $\phi_{11}$ as functions of decreasing $\lambda_2$. 
				The lower subplots show the critical $\lambda_2$ values at which the ground and excited states are about to collapse.}\label{fig:existence-experiment-cases2}
		\end{figure*}

\begin{example}
To investigate the impact of a negative dipolar interaction parameter $(\lambda_2 < 0)$ on the properties of both the ground and excited states, we fix the contact interaction strength at a strongly repulsive value of $\beta_2 = 150$, and set the dipole polarization vector strictly in the $x$-$z$ plane, i.e., $\mathbf{n} = \left(\frac{1}{\sqrt{2}},0,\frac{1}{\sqrt{2}}\right)$, $\varepsilon=\sqrt{\frac{1}{80}}$. Table \ref{tab:2d_negative_lambda} tabulates the energies and chemical potentials of the ground state and the excited states for varying relative dipolar interaction strengths $\lambda_2/\beta_2 \in [-2, 0]$. Fig.~\ref{fig:wavefunctions_comparison} plots the pseudo-color images of excited states with $\lambda_2/\beta_2$ from 0 to $-2$.

From the numerical results presented in Fig.~\ref{fig:wavefunctions_comparison} and Table~\ref{tab:2d_negative_lambda}, we draw the following conclusions regarding the impact of negative dipolar interactions $(\lambda_2 < 0)$:
		
	\begin{enumerate}

    \item As shown in Fig.~\ref{fig:wavefunctions_comparison}, a negative $\lambda_2$ together with the dipole orientation $\mathbf{n}$ reverses the DDI; the in-plane $x$-axis is repulsive and the $y$-axis is attractive. The profiles of wave functions are compressed along the $x$-direction and stretched along the $y$-direction.

    \item Table~\ref{tab:2d_negative_lambda} gives the orderings $E(\phi_{00}) < E(\phi_{10}) < E(\phi_{01}) < E(\phi_{11})$ and $\mu(\phi_{00}) < \mu(\phi_{10}) < \mu(\phi_{01}) < \mu(\phi_{11})$ for $\lambda_2/\beta_2 < 0$. This indicates that, for $\lambda_2 < 0$, the energy satisfies $E(\phi_{10}) < E(\phi_{01})$, which is exactly the opposite of the case when $\lambda_2 > 0$ (see Table~\ref{tab:2d_excited_states}).

    \item For the same type of excited states, both the energy and the chemical potential decrease as the ratio $\lambda_2/\beta_2$ decreases from $0$ to $-2.00$. This indicates that the negative nonlocal dipolar interaction brings an overall attractive effect, thus shifting the entire energy spectrum downward.
	\end{enumerate}
\end{example}

\begin{example}\label{ex:existence}
During our numerical simulations, we observed a remarkable phenomenon: in certain parameter regimes where theoretical analysis predicts the absence of a stable ground state, the excited states of the condensate can still persist. To systematically illustrate this, we fix $\beta_2=10$ and $\varepsilon=0.0001$ in the following two cases. Numerical results are shown in Figs.~\ref{fig:existence-experiment-cases1}-\ref{fig:existence-experiment-cases2}.
\begin{itemize}
    \item[] {\bf Case I. $\mathbf{n} = (1,0,0)$,  $\lambda_2>0$.} 
\end{itemize}

In case I, for the ground state, the theoretical existence condition is $\beta_2 - \lambda_2 > -C_b$ \cite{bao2012gross}, where \(C_b\) is the best constant in the Gagliardo--Nirenberg inequality \cite{weinstein1982nonlinear}. As shown in Fig.~\ref{fig:existence-experiment-cases1}, the numerical results suggest that the ground state exists when $\lambda_2 < 15.82=\beta_2 + 5.82\approx \beta_2+C_b$, verifying the existence theory of ground state. In fact, in our numerical experiments, this manifests as the energy converging to a negative value within very few iterations, and this value continues to decrease under mesh refinement. Surprisingly, for excited states, their existence intervals are larger than that of the ground state. Specifically, as shown in Fig.~\ref{fig:existence-experiment-cases1}, when \(\lambda_2 > 15.82\) the ground state can no longer be computed numerically, while the excited states \(\phi_{10}\) and \(\phi_{11}\) can still be obtained. When \(\lambda_2 > 21.43=\beta_2+11.43\) the algorithm for computing \(\phi_{10}\) fails, numerically reflected by the non‑convergence of the residual \(F_{\phi}^n\), but \(\phi_{11}\) can still be computed numerically, even as \(\lambda_2\) further increases to \(24\).

\begin{itemize}
\item[]  
{\bf Case II. $\mathbf{n} = (0,0,1)$, $\lambda_2 < 0$.} 
\end{itemize}

In case II, the theoretical existence condition for the ground state is \(\lambda_2 > \frac{-\beta_2 - C_b}{2}\) \cite{bao2012gross}. As shown in Fig.~\ref{fig:existence-experiment-cases2}, the numerical results suggest that the ground state exists when \(\lambda_2 > -7.94 = \frac{-\beta_2 - 5.88}{2} \approx \frac{-\beta_2 - C_b}{2}\), verifying the existence theory of ground state. The existence intervals for excited states are remarkably larger than that of the ground state. Specifically, as shown in Fig.~\ref{fig:existence-experiment-cases2}, when \(\lambda_2 <-7.94\) the ground state can no longer be computed numerically, while the excited states \(\phi_{10}\) and \(\phi_{11}\) can still be obtained. When \(\lambda_2 < -10.75=\frac{-\beta_2 - 11.5}{2}\) the algorithm for computing \(\phi_{10}\) fails, numerically reflected by the non‑convergence of the residual \(F_{\phi}^n\), but \(\phi_{11}\) can still be computed numerically, even as \(\lambda_2\) further decreases to \(-16.2\).
\end{example}

\subsection{Numerical results in 3D} \label{3D_example}
In this section, we consider the 3D examples \((d=3)\) under an anisotropic trapping in \eqref{V} with $\gamma_x=\gamma_y=1$ and $\gamma_z=0.5$ and the polarization direction \(\mathbf{n}=(0,0,1)\).  The computational domain is taken as \([-8,8]^3\) and the initial data is shown in Table~\ref{tab:initial_3d}. We investigate the  excited states for different relative dipolar interaction strengths \(\lambda_3/\beta_3\) with \(\beta_3=207.16\).

\begin{table}[!t]
\centering
\caption{Initial data for the numerical experiments in Subsection~\ref{3D_example}.}
\label{tab:initial_3d}
\begin{tabular}{c c l l}
\toprule
\!\!Solution\! & Morse index $k$ & Initial guess of $\phi^0$ & Initial guess of $\{v_i^0\}$ \\
\midrule
$\phi_{000}$ & 0 & $\phi_{000}^{\mathrm{ho}}$ & -- \\

$\phi_{100}$ & 1 & $\phi_{100}^{\mathrm{ho}}$
& $\{\phi_{000}^{\mathrm{ho}}\}$ \\

$\phi_{001}$ & 1 & $\phi_{001}^{\mathrm{ho}}$
& $\{\phi_{000}^{\mathrm{ho}}\}$ \\

$\phi_{110}$ & 4 & $\phi_{110}^{\mathrm{ho}}$
& $\{\phi_{000}^{\mathrm{ho}},\phi_{100}^{\mathrm{ho}},
\phi_{010}^{\mathrm{ho}},\phi_{001}^{\mathrm{ho}}\}$ \\

$\phi_{101}$ & 4 & $\phi_{101}^{\mathrm{ho}}$
& $\{\phi_{000}^{\mathrm{ho}},\phi_{100}^{\mathrm{ho}},
\phi_{010}^{\mathrm{ho}},\phi_{001}^{\mathrm{ho}}\}$ \\

$\phi_{111}$ & 10 & $\phi_{111}^{\mathrm{ho}}$
& $\{\phi_{000}^{\mathrm{ho}},\phi_{100}^{\mathrm{ho}},
\phi_{010}^{\mathrm{ho}},\phi_{001}^{\mathrm{ho}},
\phi_{200}^{\mathrm{ho}},\phi_{020}^{\mathrm{ho}},
\phi_{002}^{\mathrm{ho}},\phi_{110}^{\mathrm{ho}},
\phi_{101}^{\mathrm{ho}},\phi_{011}^{\mathrm{ho}}\}$\!\! \\
\bottomrule
\end{tabular}
\end{table}

	\begin{table}[!t]
		\centering\footnotesize
		\caption{Energy $E$ and chemical potential $\mu$ of the ground state and various excited states for different ratios of $\lambda_3/\beta_3$ in 3D.}
		\label{tab:combined_data}
		\resizebox{0.99\textwidth}{!}{%
			\begin{tabular}{c c c c c c c c c c c c c}
				\toprule
				\multirow{2}{*}{$\lambda_3/\beta_3$} & 
				\multicolumn{2}{c}{$\phi_{000}$} & \multicolumn{2}{c}{$\phi_{100}$} & \multicolumn{2}{c}{$\phi_{001}$} & \multicolumn{2}{c}{$\phi_{110}$} & \multicolumn{2}{c}{$\phi_{101}$} & \multicolumn{2}{c}{$\phi_{111}$} \\
				\cmidrule(lr){2-3} \cmidrule(lr){4-5} \cmidrule(lr){6-7} \cmidrule(lr){8-9} \cmidrule(lr){10-11} \cmidrule(lr){12-13}
				& $E$ & $\mu$ & $E$ & $\mu$ & $E$ & $\mu$ & $E$ & $\mu$ & $E$ & $\mu$ & $E$ & $\mu$ \\
				\midrule
				-0.50 & 2.9584 & 3.9301 & 3.8095 & 4.7481 & 3.2667 & 4.1896 & 4.6605 & 5.5635 & 4.1318 & 5.0233 & 4.9947 & 5.8513 \\
				-0.25 & 2.8841 & 3.8187 & 3.7315 & 4.6293 & 3.2386 & 4.1444 & 4.5807 & 5.4390 & 4.0921 & 4.9601 & 4.9471 & 5.7742 \\
				0.00 & 2.7943 & 3.6830 & 3.6406 & 4.4891 & 3.1886 & 4.0615 & 4.4904 & 5.2958 & 4.0361 & 4.8674 & 4.8872 & 5.6736 \\
				0.25 & 2.6875 & 3.5201 & 3.5353 & 4.3242 & 3.1201 & 3.9480 & 4.3883 & 5.1306 & 3.9651 & 4.7482 & 4.8156 & 5.5506 \\
				0.50 & 2.5593 & 3.3213 & 3.4120 & 4.1272 & 3.0323 & 3.8015 & 4.2719 & 4.9377 & 3.8782 & 4.6000 & 4.7314 & 5.4027 \\
				0.75 & 2.3998 & 3.0674 & 3.2635 & 3.8825 & 2.9206 & 3.6118 & 4.1371 & 4.7066 & 3.7719 & 4.4143 & 4.6327 & 5.2242 \\
				1.00 & 2.1838 & 2.7011 & 3.0743 & 3.5517 & 2.7729 & 3.3497 & 3.9764 & 4.4167 & 3.6386 & 4.1708 & 4.5157 & 5.0039 \\
				\bottomrule
			\end{tabular}%
		}
	\end{table}
	\begin{figure}[!t]
		\centering
		\includegraphics[width=0.6\textwidth]{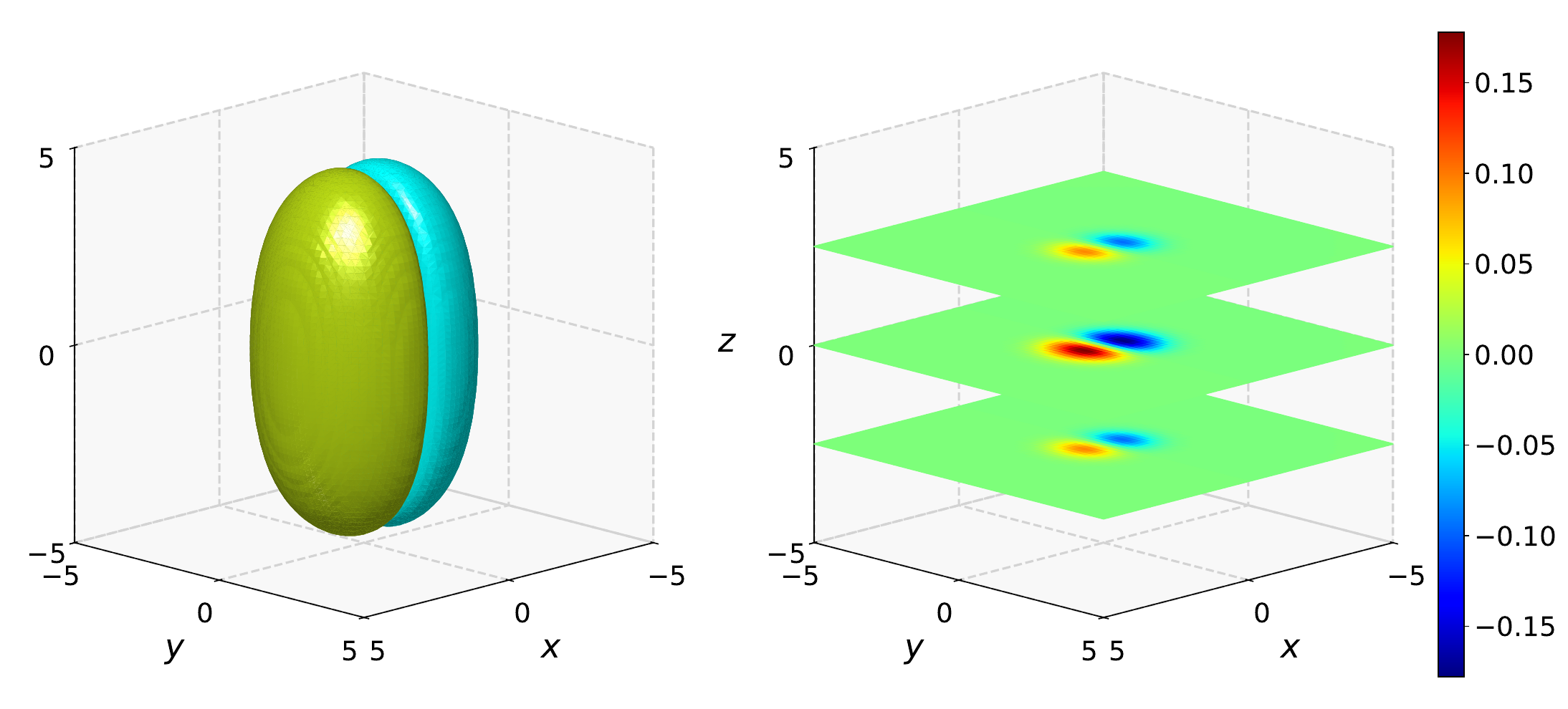}\\
		\includegraphics[width=0.6\textwidth]{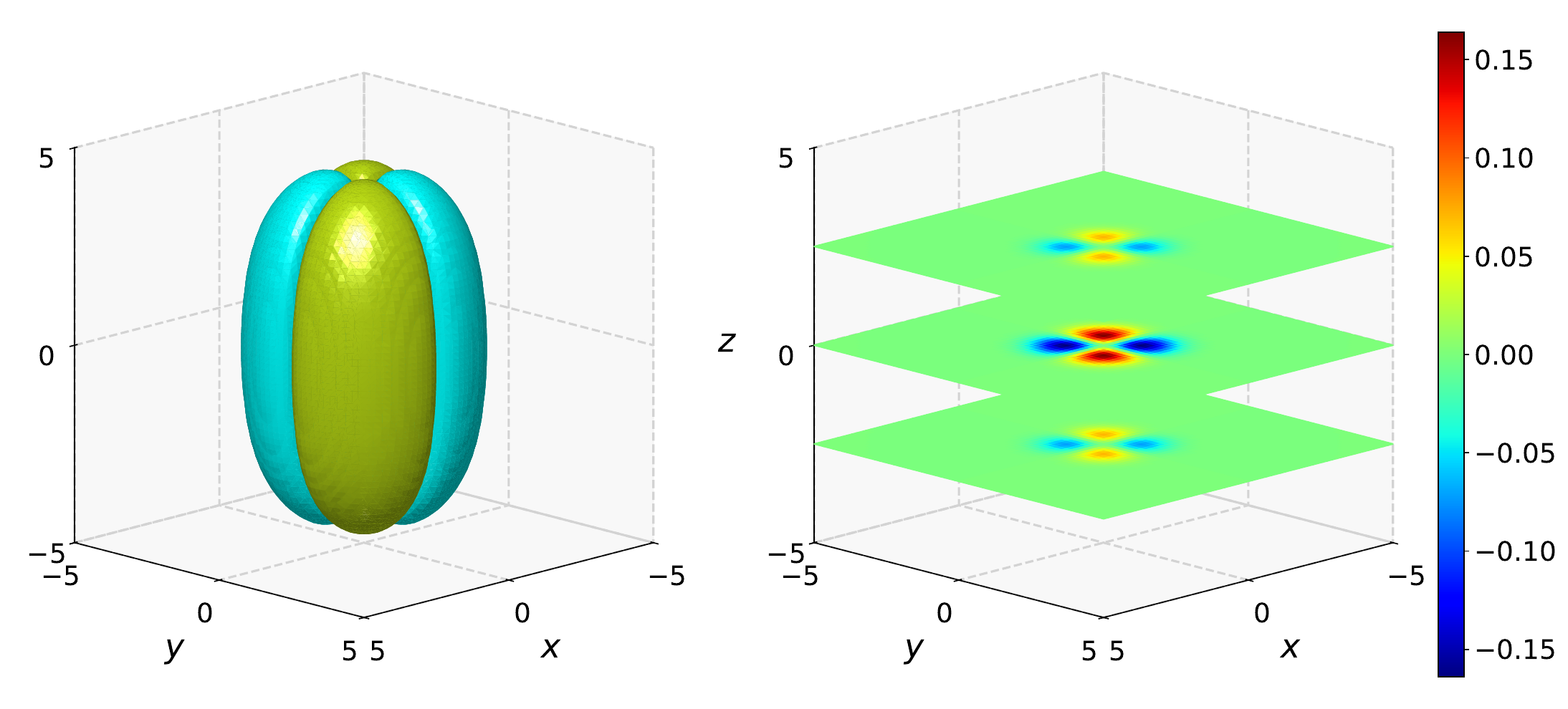}\\
		\includegraphics[width=0.6\textwidth]{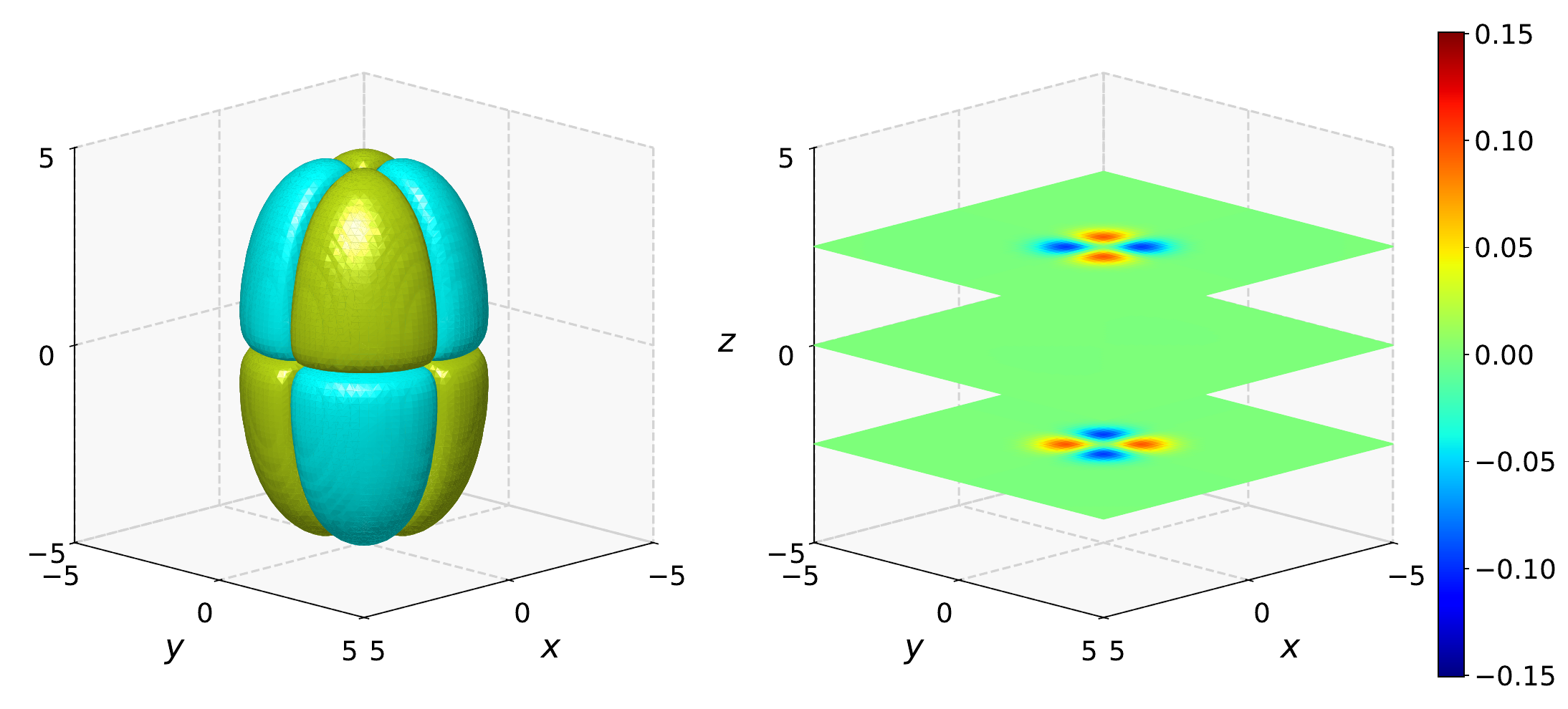}
		\caption{Isosurface plots (left column) and corresponding slice views (right column) for the wave functions of 3D excited states $\phi_{100}$ (top), $\phi_{110}$ (middle), and $\phi_{111}$ (bottom) with \(\beta_3=207.16\) and \(\lambda_3=207.16\). In the left column, the yellow-green and cyan surfaces represent the positive (\(\phi(x,y,z)=0.01\)) and negative (\(\phi(x,y,z)=-0.01\)) isosurfaces, respectively.}
		\label{fig:phi_states_3d}
	\end{figure}
Table~\ref{tab:combined_data} reveals the following phenomena. First, for any fixed ratio $\lambda_3/\beta_3$, the energy increases with the index. Due to the anisotropic trapping potential ($\gamma_z < \gamma_x=\gamma_y$), the confinement along the $z$-direction is weaker; therefore, exciting the condensate along the $z$-direction costs less energy than along the tightly confined $x$- or $y$-directions, i.e., $E(\phi_{001}) < E(\phi_{100})$. Second, as $\lambda_3/\beta_3$ increases from $-0.50$ to $1.00$, both the energy and the chemical potential of all states decrease monotonically, indicating that the enhancement of the attractive component of the DDI lowers the total energy of the excited states.

\section{Conclusion}\label{conclusion}
In this paper, an efficient numerical method is proposed for computing excited states of dipolar BECs. The method incorporates a CGAD to stably locate constrained saddle points with prescribed Morse indices and a predictor-corrector semi-implicit sine-pseudospectral discretization in which the NUFFT is used to evaluate the DDIs. Extensive numerical results reveal interesting physical phenomena. Notably, in parameter regimes where the ground state does not exist and the energy is unbounded below, certain excited states can still be computed stably by the proposed numerical method. This finding provides direct numerical evidence that may guide future theoretical analysis and experimental detection of excited states in dipolar BECs. In addition, acceleration techniques, such as preconditioning and variable-metric gradient methods, are expected to further improve the iterative efficiency of the proposed method. Related investigations are currently in progress.

\section*{Acknowledgements}
This work was supported by  the National Natural Science Foundation of China (Grant Nos. 12571448 and 12371374), the Innovation Research Foundation of NUDT (No. 202402-YJRC-XX-002), and the Science and Technology Innovation Program of Hunan Province (No. 2026RC1041).

\appendix

\section{Proof of Lemma~\ref{lem:Phi-dip-boundedness}}\label{appendix:proof-lem:Phi-dip-boundedness}
\begin{proof}
The proof for \(d=3\) is referred to Lemma~2.1 in \cite{CMS2008Nonlinearity}. It suffices to consider the cases of $d=1$ and $d=2$.

For \(d=1\), the 1D dipole kernel \eqref{eq:U_dip_lower_d} can be rewritten as
\[
U_{\mathrm{dip}}^{(1)}(x)
=
(3n_3^2-1)\delta(x)
-\frac{3(3n_3^2-1)}{2}
\int_0^\infty e^{-s}K_s(x)\,ds,
\]
where \(K_s(x)=\frac{\sqrt{2s}}{2\varepsilon}
e^{-\frac{\sqrt{2s}|x|}{\varepsilon}}\), \(s>0\). Since \(\|K_s\|_{L^1}=1\), by the  Young's convolution inequality, we obtain
\begin{align*}
\|U_{\mathrm{dip}}^{(1)}*f\|_p
\le |3n_3^2-1|\|f\|_p
+\frac{3|3n_3^2-1|}{2}
\int_0^\infty e^{-s}\|K_s\|_{L^1}\|f\|_p\,ds
=\frac52|3n_3^2-1|\|f\|_p.
\end{align*}

For \(d=2\), according to Lemma~3.2 in \cite{bao2012gross}, letting \(\varphi(\mathbf{x}) = \left(\frac{1}{2\pi|\mathbf{x}|}\right) * f, \; \mathbf{x}\in\mathbb{R}^2\), there exists a constant \(C_p > 0\) satisfying the second-order derivative estimate
		\begin{equation} \label{lemma_2D differential operator}
			\|\partial_{\alpha\beta} \varphi\|_p \le C_p \|\partial_\beta f\|_p,\quad \forall p \in (1,\infty),\; \alpha, \beta \in \{x, y\}.
		\end{equation}
		By \eqref{lemma_2D differential operator}, 
        we have
        \begin{equation*}
			\begin{aligned}
        \left\|\left( \partial_{\mathbf{n}_\perp\mathbf{n}_\perp} - n_3^2\Delta \right) \varphi \right\|_p 
                &= \left\|(n_1^2 - n_3^2)\partial_{xx}\varphi + 2n_1 n_2 \partial_{xy}\varphi + (n_2^2 - n_3^2)\partial_{yy}\varphi\right\|_p\\
				&\le |n_1^2 - n_3^2| \|\partial_{xx}\varphi\|_p + 2|n_1 n_2| \|\partial_{xy}\varphi\|_p + |n_2^2 - n_3^2| \|\partial_{yy}\varphi\|_p \\
				&\le C_p \Big( |n_1^2 - n_3^2| \|\partial_x f\|_p + 2|n_1 n_2| \|\partial_y f\|_p + |n_2^2 - n_3^2| \|\partial_y f\|_p \Big) \\
				&\le C_p \Big( |n_1^2 - n_3^2| + 2|n_1 n_2| + |n_2^2 - n_3^2| \Big) \|\nabla f\|_p
                =: \tilde{C}_p\|\nabla f\|_p,
			\end{aligned}
		\end{equation*}
		so that
		\begin{align*} 
				\| U_{\mathrm{dip}}^{(2)} * f\|_p 
                &\le |3n_3^2 - 1| \|f\|_p + \frac{3\sqrt{2\pi}\varepsilon}{2} 
                \left\|\left( \partial_{\mathbf{n}_\perp\mathbf{n}_\perp} - n_3^2\Delta \right) \varphi \right\|_p 
                 \le C_1 \|f\|_p + C_2 \|\nabla f\|_p.
		\end{align*}
		The proof is completed.
\end{proof}

\footnotesize
\bibliographystyle{abbrv}
\bibliography{ckwx1} 

@article{IMF2015,
author = {Gao, Weiguo and Leng, Jing and Zhou, Xiang},
title = {An Iterative Minimization Formulation for Saddle Point Search},
journal = {SIAM J. Numer. Anal.},
volume = {53},
number = {4},
pages = {1786-1805},
year = {2015},
doi = {10.1137/130930339}
}

@book{Chang1993,
  author   = {Chang, Kung-Ching},
  title    = {Infinite Dimensional Morse Theory and Multiple Solution Problems},
  publisher= {Birkh{\"a}user},
  address  = {Boston},
  year     = {1993}
}

@article{CMS2008Nonlinearity,
  author   = {Carles, R{\'e}mi and Markowich, Peter A. and Sparber, Christof},
  title    = {On the {Gross--Pitaevskii} equation for trapped dipolar quantum gases},
  journal  = {Nonlinearity},
  volume   = {21},
  number   = {11},
  pages    = {2569--2590},
  year     = {2008}
}

@article{bao2013mathematical,
  author   = {Bao, Weizhu and Cai, Yongyong},
  title    = {Mathematical theory and numerical methods for {Bose--Einstein} condensation},
  journal  = {Kinet. Relat. Models},
  volume   = {6},
  number   = {1},
  pages    = {1--135},
  year     = {2013}
}

@article{bradley1995evidence,
  author   = {Bradley, C. C. and Sackett, C. A. and Tollett, J. J. and Hulet, R. G.},
  title    = {Evidence of {Bose--Einstein} condensation in an atomic gas with attractive interactions},
  journal  = {Phys. Rev. Lett.},
  volume   = {75},
  number   = {9},
  pages    = {1687--1690},
  year     = {1995}
}

@article{koch2008stabilization,
  author   = {Koch, Tobias and Lahaye, Thierry and Metz, Jonas and Fr{\"o}hlich, Bernd and Griesmaier, Axel and Pfau, Tilman},
  title    = {Stabilization of a purely dipolar quantum gas against collapse},
  journal  = {Nat. Phys.},
  volume   = {4},
  number   = {3},
  pages    = {218--222},
  year     = {2008}
}

@article{bao2012gross,
  author   = {Bao, Weizhu and Ben Abdallah, Naoufel and Cai, Yongyong},
  title    = {{Gross--Pitaevskii--Poisson} equations for dipolar {Bose--Einstein} condensates with anisotropic confinement},
  journal  = {SIAM J. Math. Anal.},
  volume   = {44},
  number   = {3},
  pages    = {1713--1741},
  year     = {2012}
}

@article{bao2016accurate,
  author   = {Bao, Weizhu and Tang, Qinglin and Zhang, Yong},
  title    = {Accurate and efficient numerical methods for computing ground states and dynamics of dipolar {Bose--Einstein} condensates via the nonuniform {FFT}},
  journal  = {Commun. Comput. Phys.},
  volume   = {19},
  number   = {5},
  pages    = {1141--1166},
  year     = {2016}
}

@article{weinstein1982nonlinear,
  author   = {Weinstein, Michael I.},
  title    = {Nonlinear {Schr{\"o}dinger} equations and sharp interpolation estimates},
  journal  = {Commun. Math. Phys.},
  volume   = {87},
  number   = {4},
  pages    = {567--576},
  year     = {1983}
}

@article{jiang2014fast,
  author   = {Jiang, Shidong and Greengard, Leslie and Bao, Weizhu},
  title    = {Fast and accurate evaluation of nonlocal {Coulomb} and dipole-dipole interactions via the nonuniform {FFT}},
  journal  = {SIAM J. Sci. Comput.},
  volume   = {36},
  number   = {5},
  pages    = {B777--B794},
  year     = {2014}
}

@article{liu2023constrained,
  author   = {Liu, Wei and Xie, Ziqing and Yuan, Yongjun},
  title    = {A constrained gentlest ascent dynamics and its applications to finding excited states of {Bose--Einstein} condensates},
  journal  = {J. Comput. Phys.},
  volume   = {473},
  pages    = {111719},
  year     = {2023}
}

@article{bao2010efficient,
  author   = {Bao, Weizhu and Cai, Yongyong and Wang, Hanquan},
  title    = {Efficient numerical methods for computing ground states and dynamics of dipolar {Bose--Einstein} condensates},
  journal  = {J. Comput. Phys.},
  volume   = {229},
  number   = {20},
  pages    = {7874--7892},
  year     = {2010}
}

@article{koberle2009bifurcations,
  author   = {K{\"o}berle, Patrick and Cartarius, Holger and Fab{\v{c}}i{\v{c}}, Toma{\v{z}} and Main, J{\"o}rg and Wunner, G{\"u}nter},
  title    = {Bifurcations, order and chaos in the {Bose--Einstein} condensation of dipolar gases},
  journal  = {New J. Phys.},
  volume   = {11},
  number   = {2},
  pages    = {023017},
  year     = {2009}
}

@article{lahaye2009physics,
  author   = {Lahaye, Thierry and Menotti, Cecilia and Santos, Luis and Lewenstein, Maciej and Pfau, Tilman},
  title    = {The physics of dipolar bosonic quantum gases},
  journal  = {Rep. Prog. Phys.},
  volume   = {72},
  number   = {12},
  pages    = {126401},
  year     = {2009}
}

@article{anderson1995observation,
  author   = {Anderson, Mike H. and Ensher, Jason R. and Matthews, Michael R. and Wieman, Carl E. and Cornell, Eric A.},
  title    = {Observation of {Bose--Einstein} condensation in a dilute atomic vapor},
  journal  = {Science},
  volume   = {269},
  number   = {5221},
  pages    = {198--201},
  year     = {1995}
}

@article{davis1995bose,
  author   = {Davis, Kendall B. and Mewes, M.-O. and Andrews, Michael R. and van Druten, Nicolaas J. and Durfee, Dallin S. and Kurn, D. M. and Ketterle, Wolfgang},
  title    = {{Bose--Einstein} condensation in a gas of sodium atoms},
  journal  = {Phys. Rev. Lett.},
  volume   = {75},
  number   = {22},
  pages    = {3969--3973},
  year     = {1995}
}

@article{griesmaier2007generation,
  author   = {Griesmaier, Axel},
  title    = {Generation of a dipolar {Bose--Einstein} condensate},
  journal  = {J. Phys. B: At. Mol. Opt. Phys.},
  volume   = {40},
  number   = {14},
  pages    = {R91--R134},
  year     = {2007}
}

@article{aikawa2012bose,
  author   = {Aikawa, K. and Frisch, A. and Mark, M. and Baier, S. and Rietzler, A. and Grimm, R. and Ferlaino, F.},
  title    = {{Bose--Einstein} condensation of erbium},
  journal  = {Phys. Rev. Lett.},
  volume   = {108},
  number   = {21},
  pages    = {210401},
  year     = {2012}
}

@article{weinan2011gentlest,
  author   = {E, Weinan and Zhou, Xiang},
  title    = {The gentlest ascent dynamics},
  journal  = {Nonlinearity},
  volume   = {24},
  number   = {6},
  pages    = {1831--1842},
  year     = {2011}
}

@article{griesmaier2005bose,
  author   = {Griesmaier, Axel and Werner, J{\"o}rg and Hensler, Sven and Stuhler, J{\"u}rgen and Pfau, Tilman},
  title    = {{Bose--Einstein} condensation of chromium},
  journal  = {Phys. Rev. Lett.},
  volume   = {94},
  number   = {16},
  pages    = {160401},
  year     = {2005}
}

@article{Lu_2011,
  author   = {Lu, Mingwu and Burdick, Nathaniel Q. and Youn, Seo Ho and Lev, Benjamin L.},
  title    = {Strongly dipolar {Bose--Einstein} condensate of dysprosium},
  journal  = {Phys. Rev. Lett.},
  volume   = {107},
  number   = {19},
  pages    = {190401},
  year     = {2011}
}

@article{yi2001trapped,
  author   = {Yi, S. and You, L.},
  title    = {Trapped condensates of atoms with dipole interactions},
  journal  = {Phys. Rev. A},
  volume   = {63},
  number   = {5},
  pages    = {053607},
  year     = {2001}
}

@article{mottonen2005collapse,
  author   = {M{\"o}tt{\"o}nen, M. and Virtanen, S. M. M. and Salomaa, M. M.},
  title    = {Collapse and revival of excitations in {Bose--Einstein} condensates},
  journal  = {Phys. Rev. A},
  volume   = {71},
  number   = {2},
  pages    = {023604},
  year     = {2005}
}

@article{ferrofluid,
  author   = {Lahaye, Thierry and Koch, Tobias and Fr{\"o}hlich, Bernd and Fattori, Marco and Metz, Jonas and Griesmaier, Axel and Giovanazzi, Stefano and Pfau, Tilman},
  title    = {Strong dipolar effects in a quantum ferrofluid},
  journal  = {Nature},
  volume   = {448},
  number   = {7154},
  pages    = {672--675},
  year     = {2007}
}

@article{Anisotropicexpansion,
  author   = {Stuhler, J. and Griesmaier, A. and Koch, T. and Fattori, M. and Pfau, T. and Giovanazzi, S. and Pedri, P. and Santos, L.},
  title    = {Observation of dipole-dipole interaction in a degenerate quantum gas},
  journal  = {Phys. Rev. Lett.},
  volume   = {95},
  number   = {15},
  pages    = {150406},
  year     = {2005}
}

@article{roton,
  author   = {Petter, D. and Natale, G. and van Bijnen, R. M. W. and Patscheider, A. and Mark, M. J. and Chomaz, L. and Ferlaino, F.},
  title    = {Probing the roton excitation spectrum of a stable dipolar {Bose} gas},
  journal  = {Phys. Rev. Lett.},
  volume   = {122},
  number   = {18},
  pages    = {183401},
  year     = {2019}
}

@article{experiment,
  author   = {Chomaz, Lauriane and Ferrier-Barbut, Igor and Ferlaino, Francesca and Laburthe-Tolra, Bruno and Lev, Benjamin L. and Pfau, Tilman},
  title    = {Dipolar physics: a review of experiments with magnetic quantum gases},
  journal  = {Rep. Prog. Phys.},
  volume   = {86},
  number   = {2},
  pages    = {026401},
  year     = {2023}
}

@article{trapped,
  author   = {Dalfovo, Franco and Giorgini, Stefano and Pitaevskii, Lev P. and Stringari, Sandro},
  title    = {Theory of {Bose--Einstein} condensation in trapped gases},
  journal  = {Rev. Mod. Phys.},
  volume   = {71},
  number   = {3},
  pages    = {463--512},
  year     = {1999}
}

@article{bao2006efficient,
  author   = {Bao, Weizhu and Chern, I-Liang and Lim, Fong Yin},
  title    = {Efficient and spectrally accurate numerical methods for computing ground and first excited states in {Bose--Einstein} condensates},
  journal  = {J. Comput. Phys.},
  volume   = {219},
  number   = {2},
  pages    = {836--854},
  year     = {2006}
}

@article{CNGF,
  author   = {Bao, Weizhu and Du, Qiang},
  title    = {Computing the ground state solution of {Bose--Einstein} condensates by a normalized gradient flow},
  journal  = {SIAM J. Sci. Comput.},
  volume   = {25},
  number   = {5},
  pages    = {1674--1697},
  year     = {2004}
}

@article{3,
  author   = {Chiofalo, M. L. and Succi, S. and Tosi, M. P.},
  title    = {Ground state of trapped interacting {Bose--Einstein} condensates by an explicit imaginary-time algorithm},
  journal  = {Phys. Rev. E},
  volume   = {62},
  number   = {5},
  pages    = {7438--7444},
  year     = {2000}
}

@article{sobolev,
  author   = {Chen, Ziang and Lu, Jianfeng and Lu, Yulong and Zhang, Xiangxiong},
  title    = {On the convergence of {Sobolev} gradient flow for the {Gross--Pitaevskii} eigenvalue problem},
  journal  = {SIAM J. Numer. Anal.},
  volume   = {62},
  number   = {2},
  pages    = {667--691},
  year     = {2024}
}

@article{sobolev2,
  author   = {Danaila, Ionut and Kazemi, Parimah},
  title    = {A new {Sobolev} gradient method for direct minimization of the {Gross--Pitaevskii} energy with rotation},
  journal  = {SIAM J. Sci. Comput.},
  volume   = {32},
  number   = {5},
  pages    = {2447--2467},
  year     = {2010}
}

@article{sobolev3,
  author   = {Henning, Patrick and Peterseim, Daniel},
  title    = {{Sobolev} gradient flow for the {Gross--Pitaevskii} eigenvalue problem: global convergence and computational efficiency},
  journal  = {SIAM J. Numer. Anal.},
  volume   = {58},
  number   = {3},
  pages    = {1744--1772},
  year     = {2020}
}

@article{Riemannian,
  author   = {Henning, Patrick and Yadav, Mahima},
  title    = {Convergence of a {Riemannian} gradient method for the {Gross--Pitaevskii} energy functional in a rotating frame},
  journal  = {ESAIM Math. Model. Numer. Anal.},
  volume   = {59},
  number   = {2},
  pages    = {1145--1175},
  year     = {2025}
}

@article{GFLM,
  author   = {Liu, Wei and Cai, Yongyong},
  title    = {Normalized gradient flow with {Lagrange} multiplier for computing ground states of {Bose--Einstein} condensates},
  journal  = {SIAM J. Sci. Comput.},
  volume   = {43},
  number   = {1},
  pages    = {B219--B242},
  year     = {2021}
}

@article{2017Tang,
  author   = {Antoine, Xavier and Levitt, Antoine and Tang, Qinglin},
  title    = {Efficient spectral computation of the stationary states of rotating {Bose--Einstein} condensates by preconditioned nonlinear conjugate gradient methods},
  journal  = {J. Comput. Phys.},
  volume   = {343},
  pages    = {92--109},
  year     = {2017}
}

@article{min,
  author   = {Bao, Weizhu and Tang, Weijun},
  title    = {Ground-state solution of {Bose--Einstein} condensate by directly minimizing the energy functional},
  journal  = {J. Comput. Phys.},
  volume   = {187},
  number   = {1},
  pages    = {230--254},
  year     = {2003}
}

@article{preconditioned,
  author   = {Shu, Qingzhou and Tang, Qinglin and Zhang, Shaobo and Zhang, Yong},
  title    = {A preconditioned {Riemannian} conjugate gradient method for computing the ground states of arbitrary-angle rotating {Bose--Einstein} condensates},
  journal  = {J. Comput. Phys.},
  volume   = {512},
  pages    = {113130},
  year     = {2024}
}

@article{Regularized,
  author   = {Wu, Xinming and Wen, Zaiwen and Bao, Weizhu},
  title    = {A regularized {Newton} method for computing ground states of {Bose--Einstein} condensates},
  journal  = {J. Sci. Comput.},
  volume   = {73},
  number   = {1},
  pages    = {303--329},
  year     = {2017}
}

@article{weak,
  author   = {Yang, Lin and Li, Xiang-Gui and Yan, Wei and Zhang, Ran},
  title    = {The energy-diminishing weak {Galerkin} finite element method for the computation of ground state and excited states in {Bose--Einstein} condensates},
  journal  = {J. Comput. Phys.},
  volume   = {520},
  pages    = {113497},
  year     = {2025}
}

@article{Revealing,
  author   = {Yin, Jianyuan and Huang, Zhen and Cai, Yongyong and Du, Qiang and Zhang, Lei},
  title    = {Revealing excited states of rotational {Bose--Einstein} condensates},
  journal  = {The Innovation},
  volume   = {5},
  number   = {1},
  pages    = {100546},
  year     = {2024}
}

@article{Boullé2020Deflation,
  title   = {Deflation-based identification of nonlinear excitations of the three-dimensional {Gross-Pitaevskii} equation},
  author  = {Boullé, N. and Charalampidis, E. G. and Farrell, P. E. and Kevrekidis, P. G.},
  journal = {Phys. Rev. A},
  volume  = {101},
  number  = {5},
  pages   = {053307},
  year    = {2020},
}

@article{schneider1999numerical,
  title   = {Numerical approach to the ground and excited states of a {Bose-Einstein} condensed gas confined in a completely anisotropic trap},
  author  = {Schneider, B. I. and Feder, D. L.},
  journal = {Phys. Rev. A},
  volume  = {59},
  number  = {3},
  pages   = {2232--2242},
  year    = {1999},
}

@article{Chien2007Adaptive,
  author   = {Chang, S.-L. and Chien, C.-S.},
  title    = {Adaptive continuation algorithms for computing energy levels of rotating {Bose--Einstein} condensates},
  journal  = {Comput. Phys. Commun.},
  volume   = {177},
  number   = {9},
  pages    = {707--719},
  year     = {2007}
}

@article{WeichungWang2011Exploring,
  author   = {Chen, Jen-Hao and Chern, I-Liang and Wang, Weichung},
  title    = {Exploring ground states and excited states of spin-1 {Bose--Einstein} condensates by continuation methods},
  journal  = {J. Comput. Phys.},
  volume   = {230},
  number   = {6},
  pages    = {2222--2236},
  year     = {2011}
}

@article{eigenfunctions,
  author   = {Marojevi{\'c}, {\v{Z}}elimir and G{\"o}kl{\"u}, Ertan and L{\"a}mmerzahl, Claus},
  title    = {Energy eigenfunctions of the {1D} {Gross--Pitaevskii} equation},
  journal  = {Comput. Phys. Commun.},
  volume   = {184},
  number   = {8},
  pages    = {1920--1930},
  year     = {2013}
}

@article{eignvector,
  author   = {Cerjan, Charles J. and Miller, William H.},
  title    = {On finding transition states},
  journal  = {J. Chem. Phys.},
  volume   = {75},
  number   = {6},
  pages    = {2800--2806},
  year     = {1981}
}

@article{hisd,
  author  = {Su, Hua and Wang, Haoran and Zhang, Lei and Zhao, Jin and Zheng, Xiangcheng},
  title   = {Improved high-index saddle dynamics for finding saddle points and solution landscape},
  journal = {SIAM J. Numer. Anal.},
  volume  = {63},
  number  = {4},
  pages   = {1757--1775},
  year    = {2025}
}

@article{CSDD,
  author  = {Zhang, Jingyan and Du, Qiang},
  title   = {Constrained shrinking dimer dynamics for saddle point search with constraints},
  journal = {J. Comput. Phys.},
  volume  = {231},
  number  = {14},
  pages   = {4745--4758},
  year    = {2012}
}

@article{chisd,
  author   = {Yin, Jianyuan and Huang, Zhen and Zhang, Lei},
  title    = {Constrained high-index saddle dynamics for the solution landscape with equality constraints},
  journal  = {J. Sci. Comput.},
  volume   = {91},
  pages    = {62},
  year     = {2022}
}

@article{string,
  author   = {Du, Qiang and Zhang, Lei},
  title    = {A constrained string method and its numerical analysis},
  journal  = {Commun. Math. Sci.},
  volume   = {7},
  number   = {4},
  pages    = {1039--1051},
  year     = {2009}
}

@article{skym,
  author   = {Bessarab, Pavel F. and Uzdin, Valery M. and J{\'o}nsson, Hannes},
  title    = {Method for finding mechanism and activation energy of magnetic transitions, applied to skyrmion and antivortex annihilation},
  journal  = {Comput. Phys. Commun.},
  volume   = {196},
  pages    = {335--347},
  year     = {2015}
}

@article{SDD,
  author   = {Zhang, Jingyan and Du, Qiang},
  title    = {Shrinking dimer dynamics and its applications to saddle point search},
  journal  = {SIAM J. Numer. Anal.},
  volume   = {50},
  number   = {4},
  pages    = {1899--1921},
  year     = {2012}
}

@article{li2001minimax,
  author   = {Li, Yongxin and Zhou, Jianxin},
  title    = {A minimax method for finding multiple critical points and its applications to semilinear {PDEs}},
  journal  = {SIAM J. Sci. Comput.},
  volume   = {23},
  number   = {3},
  pages    = {840--865},
  year     = {2001}
}

@article{dimer,
  author   = {Henkelman, Graeme and J{\'o}nsson, Hannes},
  title    = {A dimer method for finding saddle points on high-dimensional potential surfaces using only first derivatives},
  journal  = {J. Chem. Phys.},
  volume   = {111},
  number   = {15},
  pages    = {7010--7022},
  year     = {1999}
}
	
\end{document}